\documentclass[10pt,notitlepage,aps,pra,superscriptaddress,floatfix,footinbib,twocolumn]{revtex4-2}
\usepackage[english]{babel}
\usepackage[T1]{fontenc}
\usepackage[utf8]{inputenc}
\usepackage{lipsum}
\usepackage{comment}
\usepackage{amsmath, amsthm, amssymb, mathtools, amsfonts}
\usepackage{physics}

\newcommand{\beq}{\begin{equation}}
\newcommand{\eeq}{\end{equation}}

\newcommand{\set}[1]{\mathcal{#1}}

\newcommand{\rr}{\mathbf{r}}
\newcommand{\Var}{\operatorname{Var}}
\newcommand{\EE}{\mathbb E}

\newtheorem{thm}{Theorem}
\newtheorem{prop}[thm]{Proposition}
\newtheorem{coro}[thm]{Corollary}

\newtheorem{lemma}[thm]{Lemma}
\theoremstyle{definition}
\newtheorem{obs}[thm]{Observation}
\newtheorem{dfn}[thm]{Definition}
\usepackage[final]{graphicx}
\usepackage{booktabs}
\usepackage[dvipsnames]{xcolor}
\usepackage{tikz}
\usetikzlibrary{arrows.meta, shapes, decorations.pathmorphing, calc}
\usepackage{pgfplots}
\usepgfplotslibrary{fillbetween}
\pgfplotsset{compat=1.15}
\definecolor{telecolor}{rgb}{0.7099,0.0471,0.2118}
\definecolor{unipicolor}{RGB}{14,75,124}
\definecolor{unipicolorlight}{RGB}{24, 126, 209}
\definecolor{greencolor}{RGB}{46,111,64}

\usepackage{bm}
\usepackage{hyperref}
 \hypersetup{
     colorlinks=true,
     linkcolor=unipicolor,
     filecolor=teal,
     citecolor=unipicolor,
     draft=false
     }
\usepackage{bookmark}
\let\inner\relax
\NewDocumentCommand\inner{mg}{%
	\ensuremath{\left\langle #1| \IfNoValueTF{#2}{#1}{#2}\right\rangle}%
}
\let\outer\relax
\NewDocumentCommand\outer{mg}{%
	\ensuremath{\ket{#1}\!\!\IfNoValueTF{#2}{\bra{#1}}{\bra{#2}}}%
}
\begin{document}
\title{Bounding single-copy information causality violations}
\author{Marco Pompili}
\affiliation{Inria, Télécom Paris - LTCI, Institut Polytechnique de Paris, 91120 Palaiseau, France}
\affiliation{Department of Physics ``E.~Fermi'', University of Pisa, Largo B.~Pontecorvo 3, 56127 Pisa, Italy}
\author{Costantino Budroni}
\affiliation{Department of Physics ``E.~Fermi'', University of Pisa, Largo B.~Pontecorvo 3, 56127 Pisa, Italy}
\date{\today}

\begin{abstract}
    Information causality is a  physical principle
    that constrains bipartite post-quantum nonlocal correlations
    from information-theoretic assumptions. A direct
    characterization of the correlations satisfying it, however, remains difficult because
    the defining entropic condition must hold for all admissible protocols
    in the corresponding task. Here, we formulate the single-copy
    information causality problem as an optimization over
    all protocol parameters and simplify it by proving a deterministic reduction.
    We then show that generic adaptive multi-copy 
    protocols can be represented as
    single-copy protocols applied to an effective behavior obtained by local
    wirings. As a consequence, for families of correlations closed under
    wirings, excluding single-copy violations 
    suffices to exclude all finite-copy adaptive-protocol violations.
    We formulate the resulting membership and set-inclusion problems as
    optimization problems and discuss possible semidefinite programming relaxations.
    As an application, 
    we study expansions of the information causality functional around zero-capacity channels,
    generalizing a method recently employed by Gachechiladze
    and Miklin to derive the Tsirelson--Landau--Masanes boundary from
    information causality. For expansions centered on channels whose output 
    distribution has full support, imposing the second-order
    constraints for all single-copy protocols characterizes exactly the set of macroscopic locality 
    in the CHSH scenario.
    Expanding instead around appropriate constant-output channels with perturbations
    that change the channel support, we show that every 
    single-copy IC violation can be witnessed already at first order with a channel of 
    arbitrarily small capacity.
\end{abstract}

\maketitle

\section{Introduction}

Bell nonlocality shows that quantum correlations cannot be reproduced by local
hidden-variable models~\cite{bell:1964}. At the same time, the
no-signaling (NS) principle alone allows correlations that are strictly
stronger than those achievable in quantum theory \cite{popescu:1994,
brunner:2014}. This observation motivated the search for operational principles
that constrain the set of nonsignaling behaviors without assuming the
Hilbert-space formalism of quantum mechanics \cite{vandam:2013, brassard:2006,
linden:2007, navascues:2010, fritz:2013, cabello:2013}.

Information causality (IC) is one of the most prominent proposals in this
program~\cite{pawlowski:2009}. In its standard formulation, Alice and Bob
perform a specific distributed computing
protocol, namely a
\textit{random access code assisted by a shared nonsignaling resource}~\cite{ambainis:1999,ambainis2002,pawlowski:2010}, and the
information potentially available to Bob about
the entries of a database held by Alice is required not to
exceed the information transmitted to him through  a
classical channel. Both classical and quantum theories satisfy this requirement.
The assumption of IC as a physical principle allows one to recover bounds for
the set of correlations that coincide with the quantum ones, such as the
Tsirelson bound~\cite{cirelson:1980} for the Clauser--Horne--Shimony--Holt Bell
inequality~\cite{clauser:1969}, the Uffink inequality \cite{uffink:2002}, and
several portions of the quantum set of correlations \cite{pawlowski:2009,
allcock:2009, gachechiladze:2022, jain:2024, gachechiladze:2026}. Despite these
successes, a direct characterization of the correlations satisfying IC has
remained elusive: the defining constraint is entropic and nonlinear, and it must
hold for every admissible encoding and decoding protocol.

Several other physical or information-theoretic principles have been proposed,
including macroscopic locality~\cite{navascues:2010}, non-trivial communication
complexity~\cite{vandam:2013, brassard:2006}, no advantage for nonlocal
computation \cite{linden:2007}, local orthogonality~\cite{fritz:2013}, and the
E-principle \cite{cabello:2013}. These principles are all satisfied by the
almost-quantum set $\set{AQ}$~\cite{navascues:2015}, which is strictly larger
than the quantum set. Information causality is the notable unresolved case:
neither a proof that all almost-quantum correlations satisfy information
causality nor a counterexample is known,
while some correlations outside the almost-quantum set satisfy IC
for all protocols~\cite{pompili:2026}.

Recent work has substantially enlarged the range of tractable IC constraints.
The noisy-channel formulation by Miklin and Paw\l{}owski \cite{miklin:2021}
showed that known  bounds,
originally derived from many \emph{concatenated} systems, can be
recovered, and in some cases improved, using a single copy of the shared
resource. This approach has since been extended to broader protocols and
scenarios~\cite{gachechiladze:2022, pollyceno:2023, jain:2024,
jain:2026extended, miklin:2026cc, pollyceno:2026}, while complementary
formulations of IC have also been investigated~\cite{yu:2026}. Recently, the
extended single-copy version of IC with correlated entries~\cite{jain:2024,
miklin:2026cc} has been shown to imply the Tsirelson--Landau--Masanes (TLM)
inequality \cite{tsirelson:1987, landau:1988, masanes:2003}, and consequently,
both the macroscopic locality \cite{navascues:2010} set,
as well as the exact correlator region for vanishing marginals,
in the simplest Bell scenario \cite{gachechiladze:2026}. 
These developments indicate that single-copy IC
constraints are considerably more powerful than previously expected, but they do
not provide a protocol-independent membership test for the standard
random-access-code formulation, nor do they settle its relation with $\set{AQ}$.

In this work, we systematize single- and multi-copy IC protocols and
    establish reductions that simplify their analysis.
We represent a general single-copy protocol, within the causal structure
considered here, by a database distribution and by stochastic maps
describing the parties' measurement choices, the noisy classical communication,
and Bob's decoding. Convexity arguments allow 
the measurement and decoding maps to be taken deterministic and lead
naturally to  single-copy IC sets
in each Bell scenario. We then show that finite-copy adaptive protocols,
in which each party chooses both the order in which it uses its copies and the
corresponding inputs as a function of its local data, can be represented as
single-copy protocols applied to an effective behavior obtained by local
wirings. Consequently, for any family of correlations closed under wirings
\cite{allcock:2009closed, lang:2014, Beigi2015}, a finite-copy violation implies a
single-copy violation by an effective behavior belonging to the same family. In
particular, if the almost-quantum family violates information causality through
any finite-copy protocol considered here, then a violation can already be found
at the single-copy level. Despite these reductions, optimizing the value of the IC
expression is still a nontrivial problem. We discuss possible methods to approximate its
values with lower and upper bounds, obtained by numerical methods or semidefinite programming
relaxations. Finally, as an application of our results, we discuss perturbative expansions
of IC around zero-capacity channels. In the case of perturbations that do not change the channel
support, we recover the recent result 
of Gachechiladze and Miklin \cite{gachechiladze:2026}, but this time as an 
optimization over all possible IC single-copy protocols, instead of choosing a particular 
one. Instead, expanding around a channel with constant output, but allowing the perturbation
to change the channel support, we show that single-copy IC violations can be 
witnessed already at first order with a channel of arbitrarily small capacity. This answers
the question, raised by Gachechiladze and Miklin, of whether 
their bound could be improved at leading order in the vanishing capacity limit
with a better choice of the IC protocol: it is possible when the perturbation changes the channel
support and impossible, in the CHSH scenario, when it does not.

\section{Preliminary notions}\label{sec:preliminary}

We consider an assisted random access coding (RAC) task, as in the original
proposal~\cite{pawlowski:2009}, with imperfect communication between the
parties~\cite{miklin:2021}. Alice receives a string of dits
$\mathbf R = (R_0, \dots, R_{n-1})$, each taking values in $[d]$, and
Bob independently receives an index $S$ taking values in $[n]$,
where, for $k \in \mathbb N$, we use the notation
$[k] \coloneqq \{0, \dots, k - 1\}$. Bob's goal is to guess the value of
$R_s$ upon receiving $S = s$. Bob's guess is denoted by
$G$ (which, in principle, could take values in any finite alphabet).
Alice sends a message $M$ through a noisy classical channel, and Bob receives $M'$.
The players may use shared correlations $\{p(a,b|x,y)\}$ (also called
\emph{behavior} or \emph{box} throughout) belonging to a Bell scenario
$\mathsf B = (m_A, m_B, d_A, d_B)$, i.e., with $x \in [m_A]$, $y \in [m_B]$,
$a \in [d_A]$ and $b \in [d_B]$. We denote the RAC parameters as $(n,d,d')$,
where $M'$ takes values in $[d']$. This scenario is depicted in
Fig.~\ref{fig:ic_scenario}.

If the entries $R_0, \dots, R_{n-1}$ are independent, information causality states that
\beq
    \label{eq:ic_statement}
    \sum_{s=0}^{n-1} I(R_s : G_s) \le I(M:M'),
\eeq
where $I(R_s : G_s) = I(R_s : G \,|\, S = s)$ is the Shannon mutual information
between the target dit and the guess when $S = s$, and $I(M:M')$ quantifies the
amount of information communicated through the channel. Notice that the usual
formulation with the channel capacity $C := \sup_{p(m)} I(M:M')$ provides in
general a looser constraint. Equation~\eqref{eq:ic_statement} thus bounds the
information available to the receiver by the amount of communication exchanged
and reduces to the no-signaling (NS) principle~\cite{popescu:1994,brunner:2014} 
in the case of zero communication. The inequality holds whenever
the underlying theory admits a generalized mutual information
obeying some natural properties, such as non-negativity, a chain rule, reduction
to the Shannon mutual information for classical variables, a data-processing
inequality and a Markov property for sequential operations; see,
e.g.,~\cite{jain:2024} for a proof. Crucially, this includes quantum theory, so
every quantum correlation satisfies the inequality.

\begin{figure}
\centering
\resizebox{\columnwidth}{!}{\begin{tikzpicture}

        \useasboundingbox (-1.2, -0.55) rectangle (6.8, 2.9);
        
        \draw[rounded corners=2pt] (0,0) rectangle (2,2);
        \node[below] at (1,-0.1) {Alice};
        
        \draw[stealth-] (1.7,2.05) -- ++(0, 0.4) node[above=5pt, anchor=base] {$\mathbf r = (r_0, \dots, r_{n-1})$};
        
        \draw[fill=black, rounded corners=1pt] (0.8,0.8) rectangle (1.2, 1.2);
        \draw[stealth-] (1.0,1.2) -- ++(0, 0.3) node[above] {$x$};
        \draw[-stealth] (1.0,0.8) -- ++(0, -0.3) node[below=9pt, anchor=base] {$a$};
    
        \node[left=0.65cm, above=-0.25cm] at (0,-0.55) {\phantom{$g \stackrel{?}{=} r_s$}};
        
        \draw[baseline, double equal sign distance, -{Implies[]}, fill=black, black] (2.05,0.14) -- ++(1.5,0) 
        node[left=1.3cm, above=0.17cm, anchor=base] {$m$}
        node[left=0.2cm, above=0.17cm, anchor=base] {$m'$};
        
        \draw[-, color=unipicolorlight, semithick] (1.25,1.0) -- ++(3.1,0);
        \fill[white] (2.8,1.0) circle[radius=3.0pt]  node[unipicolorlight] {\large $\ast$};

        \begin{scope}[xshift=3.6cm]
        \draw[rounded corners=2pt] (0,0) rectangle (2,2);
        \node[below] at (1,-0.1) {Bob};
        \draw[stealth-] (0.3,2.05) -- ++(0, 0.4) node[above] {$s$};
        \draw[-stealth] (1.7,-0.05) -- ++(0, -0.5) -- ++(0.3,0) coordinate (G);
        \node[right=0.5cm, above=-0.25cm] at (G) {$g \stackrel{?}{=} r_s$};        
        
        \draw[fill=black, rounded corners=1pt] (0.8,0.8) rectangle (1.2, 1.2);
        \draw[stealth-] (1.0,1.2) -- ++(0, 0.3) node[above=5pt, anchor=base] {$y$};
        \draw[-stealth] (1.0,0.8) -- ++(0, -0.3) node[below=10pt, anchor=base] {$b$};
        \end{scope}
\end{tikzpicture}}
\caption{Notation used for the random access coding task where
information causality is defined. Bob's task is to
produce a guess $g$ equal to $r_s$ upon receiving the input $s$,
aided by the imperfect classical communication $m \to m'$ and
(possibly multiple copies of)
the shared correlations $\{p(a,b|x,y)\}$.}
\label{fig:ic_scenario}
\end{figure}

In the Clauser-Horne-Shimony-Holt (CHSH) scenario with all variables binary, it
is possible to show that all correlations beyond Tsirelson's bound for the CHSH
correlator violate the IC inequality~\cite{pawlowski:2009, miklin:2021} by
considering $n = 2$
independent bits, a binary symmetric channel of sufficiently small
    positive capacity,
and the van Dam protocol,
\beq
    \label{eq:vandam}
    x = r_{0} \oplus r_{1}, \quad y
      = s, \quad m
      = a \oplus r_{0}, \quad g
      = b \oplus m',
\eeq
originally introduced in the study of communication
complexity~\cite{vandam:2013}. This result provides an information-theoretic
explanation for Tsirelson's bound.
While the original strong IC bounds were obtained using concatenations of
several copies of the behavior and a noiseless message
$m$~\cite{pawlowski:2009}, the same conclusions, including Tsirelson's
bound, can already be obtained from a single copy together with a noisy
classical channel~\cite{miklin:2021}.
These results motivate our systematic single-copy analysis.

It is helpful to also recall the version of IC in the
generalized form~\cite{chaves:2015, jain:2024}
where the database entries are correlated~\footnote{Miklin et
al.~\cite{miklin:2026cc} discuss even more general protocols in which the RAC is
substituted by an arbitrary distributed computing task. We do not discuss
the most general case here, focusing on the usual RAC formulation of IC.}. 
Consider the same random-access-code setting as in the standard formulation
with
$\mathbf R = (R_0,\ldots,R_{n-1}) \in[d]^n$ distributed according to an
arbitrary joint distribution $q(\rr)$, while Bob receives an independent index
$S \in[n]$. We denote by $\set{D}_{n,d}$ the set of
database distributions allowed in the generalized IC scenario. Unless
stated otherwise, we take $\set{D}_{n,d}$ to be the full probability simplex,
i.e., we impose no restriction on such probabilities.
We fix the ordering $(R_0,\ldots,R_{n-1})$ and write
    $\mathbf R_{<s} := (R_0,\ldots,R_{s-1})$, with $\mathbf R_{<0}$ empty
    by definition.
    Bob's available information consists of the received message $M'$, his
    measurement choice $Y$, and his outcome $B$, from which he produces
    a guess $G$.
    We absorb Alice's encoding into an effective classical channel from
        $(\mathbf R,X,A)$ to $M'$. This yields a constraint at least as
        stringent as that in Eq.~\eqref{eq:ic_statement}, since data processing
        for the Markov chain $(\mathbf R,X,A) \to M \to M'$ implies
        $I(\mathbf R,X,A:M') \leq I(M:M')$. The extended IC inequality then
        reads
    
    \begin{equation}
        \sum_{s=0}^{n-1} I(R_s:G_s \mid \mathbf R_{<s}) \leq I(\mathbf R,X,A:M'). \label{eq:generalized_ic_statement}
    \end{equation}
    Here, each term on the left-hand side is conditioned on $S = s$.

    For independent database entries, the chain rule implies
        $I(R_s:G_s \mid \mathbf R_{<s}) \geq I(R_s:G_s)$.
    Thus, for independent database entries,
    Eq.~\eqref{eq:generalized_ic_statement} implies
    Eq.~\eqref{eq:ic_statement}.
    We discuss a natural multi-copy analogue of
        Eq.~\eqref{eq:generalized_ic_statement} in
        Sec.~\ref{sec:generalized_many_copy_ic}.

Finally, we briefly mention the almost-quantum set $\set{AQ}$, which is an outer
approximation of the quantum set
$\set{Q}$. The central properties for us are that it is closed under
wirings~\cite{navascues:2015} and
admits an efficient semidefinite characterization, coinciding with the
$1 + \mathrm{AB}$ level of the Navascués-Pironio-Acín (NPA) hierarchy
\cite{navascues:2007, navascues:2008}.

\section{General single-copy protocols}
\label{sec:generalized_single_copy_protocols}

In order to characterize the correlations satisfying
Eq.~\eqref{eq:generalized_ic_statement}, we consider  general
single-copy protocols. With respect to the standard approach,
 the database distribution
$q(\rr)$ is no longer fixed to the uniform product
distribution. Let
$P = \{p(a,b|x,y)\}$ be a nonsignaling behavior in a Bell scenario
$\mathsf B = (m_A,m_B,d_A,d_B)$. A general protocol is specified by the
stochastic maps
\begin{equation}
    \begin{aligned}
        p_X(x|\rr), \quad p_M(m'|\rr,x,a),          \\
        p_Y(y|s,m'), \quad p_G(g|s,m',y,b),
    \end{aligned}
\end{equation}
together with a database distribution $q \in \set{D}_{n,d}$. The
corresponding joint distribution of all variables is
\begin{equation}
    \begin{split}
        p(\rr,s,x,y,a,b,m',g) = q(\rr)p(s)p_X(x|\rr)p_Y(y|s,m') \\
        \times p_M(m'|\rr,x,a)p(a,b|x,y)p_G(g|s,m',y,b),
    \end{split}
    \label{eq:generalized_single_copy_protocol}
\end{equation}
where $\mathbf R$ and $S$ are independent and $p(s) = 1/n$ (for convenience,
as the quantities are conditioned on $S=s$). As in the standard
setting, nonsignaling allows Bob to choose $y$ after receiving $m'$ (see
Lemma~\ref{lem:sequential_marginals} in
Appendix~\ref{app:generalized_effective_ns} for $c = 1$).
Before producing his guess of $R_s$, Bob's available information is $(M',Y,B)$,
    together with $S = s$.
Note that the inclusion of $Y$ is relevant when $p_Y(y|s,m')$ is stochastic.
A general single-copy protocol is illustrated in Fig.~\ref{fig:dag_ic_full}.

We denote the maps by the corresponding column-stochastic matrices
$T_X,T_M,T_Y,T_G$ and write
\begin{equation}
    \pi := (T_X,T_M,T_Y,T_G)
\end{equation}
for the operational part of the protocol. For a fixed behavior $P$,
database distribution $q$, and protocol $\pi$, we define the generalized
single-copy IC deficit by
\begin{equation}
    \begin{aligned}
    \Delta_1(P,q,\pi) &\coloneqq \sum_{s=0}^{n-1}I(R_s:G_s \mid \mathbf R_{<s})\\
&\qquad-I(\mathbf R,X,A:M').
\end{aligned} \label{eq:generalized_single_copy_gap}
\end{equation}
The largest violation compatible with the chosen Bell and RAC scenarios is
therefore
\begin{equation}
    \Delta_1^{*}(P) := \sup_{\substack{q\in\set{D}_{n,d}\\ \pi\ \mathrm{compatible}}} \Delta_1(P,q,\pi). \label{eq:generalized_single_copy_optimization}
\end{equation}
Thus, in the generalized formulation the database distribution $q$ is
itself part of the optimization problem. The maximization in
Eq.~\eqref{eq:generalized_single_copy_optimization} is over both the continuous
parameters of $q$ and the stochastic protocol maps. When $\set{D}_{n,d}$
is the full simplex, all feasible sets are compact and the objective is
continuous, so the maximum is attained. It is useful to keep the role of $q$
explicit already at this stage, as in some arguments below, we may temporarily
fix $q$ while optimizing a particular transition matrix. We make this point
precise below.

\begin{figure}[t]
    \centering
    \begin{tikzpicture}[
    every path/.append style = {arrows = -stealth},
    visible/.style = {
        semithick,
    	draw = black, 
    	shape = circle, 
    	minimum size = 0.75cm,
    	shading = radial,
    	inner color = unipicolorlight!0,
    	outer color = unipicolorlight!0,
    },
    latent/.style = {
    	draw = none,
        shape = rectangle,
        inner sep = 2pt,
    	inner color = unipicolorlight!0,
    	outer color = unipicolorlight!0,
    	semithick
    }, 
    scale=0.9
    ]
    
    \useasboundingbox (-3,-2.5) rectangle (5.5,1.7);
    
    \begin{scope}[xshift=-0.5cm]
    
	
	\draw[-stealth] plot [smooth, tension=1.2] coordinates {(-0.28,0.90)  (-0.4,-1.8) (2.58,-2.4)};
	
	\node[visible] (x) at (0,0) {$X$};
    \node[visible] (y) at (3,0) {$Y$};
    \node[visible] (a) at (0,-1.2) {$A$};
    \node[visible] (b) at (3,-1.2) {$B$};
	\node[latent] (l) at (1.5, -0.5) {\color{unipicolorlight!90!black} $\omega_{AB}$};
	
	\node[visible, minimum width=0.8cm] (x0x1) at (0,1.2) {\phantom{$S$}};
	\node at (x0x1) {$\bm R$};
	\node[visible] (s) at (3,1.2) {$S$};
	\node[visible, anchor=base] (mp) at (3, -2.4) {\phantom{$M$}};
	\node at (mp) {$M'$};
	\node[visible] (g) at (4.2, -1.2) {$G$};
	
	
    \path (x) edge (a);
    \path[unipicolorlight!90!black] (l) edge (a);
    \path[unipicolorlight!90!black] (l) edge (b);
    \path (y) edge (b);

    \path (x0x1) edge (x);
    \path[bend right=20] (a) edge (mp);
    \path[bend right=20] (x) edge (mp);	
    \path (s) edge (y);
    \path (b) edge (g);
    \path (y) edge (g);
    
    \path[bend left=45] (mp) edge (y);

    
    \path[bend right=20] (mp) edge (g);
    \path[bend left=20] (s) edge (g);
    
    \end{scope}

\end{tikzpicture}
    \caption{
        A single-copy protocol for information causality, where the node
        $\omega_{AB}$ indicates that the variables $a,b$ conditioned on $x,y$
        are correlated via a nonsignaling behavior. Any other node depends on
        its parents via the corresponding transition matrix.
    }
    \label{fig:dag_ic_full}
\end{figure}

The optimization in Eq.~\eqref{eq:generalized_single_copy_optimization} can be
simplified using convexity properties of the expression in terms of the
distributions $p_X$, $p_Y$, and $p_G$. This result is stated formally
as follows.
\begin{lemma}[Single-copy deterministic reduction]
    \label{lemma:generalized_det_strat}
    Let $P$ be a fixed nonsignaling behavior and let $\set{D}_{n,d}$ be a fixed
    set of admissible database distributions. In the optimization
    problem of Eq.~\eqref{eq:generalized_single_copy_optimization}, it suffices
    to consider
    \begin{enumerate}
        \item deterministic choices of measurement for Alice,
              $p_X(x|\rr) \in \{0,1\}$;
        \item deterministic choices of measurement for Bob,
              $p_Y(y|s,m') \in \{0,1\}$;
        \item deterministic decoding for Bob,
                  $p_G(g|s,m',y,b) \in \{0,1\}$.
    \end{enumerate}
    No restriction is imposed on the database distribution
    $q \in \set{D}_{n,d}$ or on the communication map $T_M$.
\end{lemma}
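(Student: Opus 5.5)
The plan is to represent each stochastic map as a convex combination of deterministic maps, selected by an auxiliary \emph{local} random variable, and then to show that the deficit $\Delta_1$ cannot decrease when we condition on that variable. Fix $P$, $q$ and $T_M$. Every column-stochastic matrix is a finite convex combination of deterministic ones, so we write
\begin{equation*}
p_X(x|\rr)=\sum_{\lambda}\mu_A(\lambda)\,\delta_{x,f_\lambda(\rr)},\qquad p_Y(y|s,m')=\sum_{\lambda'}\mu_B(\lambda')\,\delta_{y,h_{\lambda'}(s,m')},
\end{equation*}
and similarly $p_G(g|s,m',y,b)=\sum_{\lambda''}\mu_G(\lambda'')\,\delta_{g,k_{\lambda''}(s,m',y,b)}$. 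We introduce $\Lambda=(\Lambda_A,\Lambda_B,\Lambda_G)$, with mutually independent components distributed as $\mu_A$, $\mu_B$, $\mu_G$, and independent of $(\mathbf R,S)$ and of the box. Since each component is local randomness of the party that uses it, the extended joint distribution is well defined. Marginalizing over $\Lambda$ returns Eq.~\eqref{eq:generalized_single_copy_protocol}. Conditioning on $\Lambda=\lambda$ gives exactly the protocol distribution for the deterministic triple $(f_\lambda,h_{\lambda'},k_{\lambda''})$ with the same $q$ and $T_M$. For this step we rely on the nonsignaling reordering of Bob's choice after $M'$ (Lemma~\ref{lem:sequential_marginals}), which also holds for each deterministic component.

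The goal is $\Delta_1(P,q,\pi)\le\sum_\lambda \mu(\lambda)\,\Delta_1(P,q,\pi_\lambda)$. We bound each side of the deficit separately.

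\emph{Left-hand side.} For each $s$, we use that $\Lambda$ is independent of $\mathbf R$ given $S=s$. Hence
\begin{equation*}
I(R_s:G_s|\mathbf R_{<s})\le I(R_s:G_s\Lambda|\mathbf R_{<s})=I(R_s:\Lambda|\mathbf R_{<s})+I(R_s:G_s|\mathbf R_{<s}\Lambda)=I(R_s:G_s|\mathbf R_{<s}\Lambda).
\end{equation*}
The last term is the $\mu$-average of the deterministic-protocol terms.

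\emph{Right-hand side.} Given $(\mathbf R,X,A)$, the message $M'$ is produced by $T_M$ alone, so $\Lambda\to(\mathbf R,X,A)\to M'$ is a Markov chain. Therefore
\begin{equation*}
I(\mathbf R,X,A:M'|\Lambda)\le I(\Lambda:M')+I(\mathbf R,X,A:M'|\Lambda)=I(\mathbf R,X,A,\Lambda:M')=I(\mathbf R,X,A:M').
\end{equation*}
The left-hand quantity is the average of the deterministic-protocol right-hand sides. Note that $T_Y$ and $T_G$ do not enter the right-hand side at all, because the marginal of $A$ given $X$ does not depend on $y$.

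Combining the two bounds, the deficit of $\pi$ is at most the $\mu$-average of the deficits of deterministic protocols. Hence some deterministic $\pi_\lambda$ achieves a deficit at least as large. Applied to a maximizer of Eq.~\eqref{eq:generalized_single_copy_optimization}, which exists by compactness, this proves all three items simultaneously, with $q$ and $T_M$ untouched.

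The step I expect to need the most care is verifying that conditioning on $\Lambda$ really yields the joint distribution of a legitimate single-copy protocol of the same form. Two points must be checked: that the conditioning preserves the independence of $\mathbf R$ and $S$, and that Bob's $y$, which now depends on $\lambda'$, still sees the correct nonsignaling marginals. Both should follow directly from the independence of $\Lambda$ from everything except the map it selects, together with nonsignaling.
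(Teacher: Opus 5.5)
Your argument is correct, but it takes a different route from the paper's.

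\textbf{The paper's route.} The paper fixes all maps except one and shows that $\Delta_1$ is convex in that single map. The left-hand side is convex because the induced channels $R_s\to G_s$ are affine in the map while the input distribution $q(r_s|\mathbf r_{<s})$ is fixed, and mutual information is convex in the channel. The term $-I(\mathbf R,X,A:M')$ is convex in $T_X$ because the input distribution of $(\mathbf R,X,A)$ is affine in $T_X$ for fixed $T_M$, and mutual information is concave in the input. The paper then uses that a convex function on the polytope of stochastic matrices is maximized at a vertex, and extremizes $T_X$, $T_Y$, $T_G$ one after the other.

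\textbf{Your route.} You instead replace all three maps at once by independent local randomness $\Lambda$. You then prove directly from the chain rule the Jensen-type bound $\Delta_1(P,q,\pi)\le\sum_\lambda\mu(\lambda)\,\Delta_1(P,q,\pi_\lambda)$. For the left-hand side you use $\Lambda\perp\mathbf R$; for the right-hand side you use the Markov chain $\Lambda\to(\mathbf R,X,A)\to M'$. These two steps are the standard chain-rule proofs of the convexity and concavity facts the paper quotes.

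\textbf{What each buys.} Your argument is self-contained, treats all maps simultaneously, and avoids sequential extremization. The paper's formulation needs only that the joint distribution is affine in each map separately, which is why it is reused essentially verbatim for the adaptive decision trees in Lemma~\ref{lem:generalized_deterministic_many_copy}.

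\textbf{Two small remarks.} First, the step you flag as delicate is immediate. Eq.~\eqref{eq:generalized_single_copy_protocol} is multilinear in $(p_X,p_Y,p_G)$, so your extended distribution is just $\mu(\lambda)$ times the already normalized distribution of $\pi_\lambda$. Conditioning on $\Lambda=\lambda$ therefore returns exactly that protocol with the same $q$ and $T_M$. Second, you do not need a maximizer or compactness, and compactness could fail for an arbitrary admissible family $\mathcal D_{n,d}$. Your inequality holds for every stochastic protocol, so it already shows that the supremum over stochastic protocols equals the supremum over deterministic ones.
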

The full proof is given in Appendix~\ref{app:generalized_deterministic_proofs}.
For fixed $q,T_M,T_Y,T_G$, the IC deficit is convex in $T_X$, so its
    maximum over stochastic maps is attained at a deterministic map. The same
    argument applies to $T_Y$ and $T_G$, after which one can optimize over $q$.

Denoting the deterministic measurement functions by $f_X$ and $f_Y$, we
have $x = f_X(\mathbf r)$ and $y = f_Y(s,m')$.
Since $Y$ is determined by $(S,M')$, it can be substituted into the
    decoding function, giving a deterministic function $g = f_G(s,m',b)$.
Moreover, since $X$ is a function of $\mathbf R$,
$I(\mathbf R,X,A:M') = I(\mathbf R,A:M')$. Consequently,
Eq.~\eqref{eq:generalized_single_copy_optimization} reduces to
\begin{equation}
        \begin{aligned}
            \Delta_1^{*}(P) = \max\ & \sum_{s=0}^{n-1}I(R_s: G_s \mid \mathbf R_{<s}) - I(\mathbf R,A:M') \\
            \text{s.t.}\            & q \in \set{D}_{n,d}, \quad [T_M]_{m',(\rr,a)}\ \text{stochastic},   \\
                                    & x = f_X(\rr), \quad y = f_Y(s,m'),                                  \\
                                    & g = f_G(s,m',b).
        \end{aligned}
        \label{eq:generalized_opt_prob_simplified}
    \end{equation}
 The optimization is therefore over a finite set of deterministic functions
$f_X,f_Y,f_G$, together with the continuous variables $q$ and $T_M$.

\section{Multi-copy information causality}
\label{sec:generalized_many_copy_ic}

\subsection{Adaptive protocols as wirings}
\label{sec:generalized_adaptive_wirings}

We now consider generalized IC protocols in which Alice and Bob share
$c \in \mathbb N$ independent copies of the same nonsignaling behavior
$P = \{p(a,b|x,y)\}$ and use them
adaptively, in the sense we now explain.
(The same construction and deterministic reduction apply to independent,
    possibly different nonsignaling behaviors $P_1,\ldots,P_c$.)
Alice receives the database register $\mathbf R$ distributed according
to an arbitrary $q \in \set{D}_{n,d}$, while Bob receives the independent index
$S$. Both parties use their copies sequentially. At every step, each of them
chooses, as a function of the local input and of everything observed so far,
both which unused copy to use next and which input to feed it. In particular,
the order in which  the parties use the copies may differ.
The only constraint relating the two parties is the one imposed by the
communication  step: Alice uses all her copies before the
message is sent, and Bob uses his only after receiving it.

\begin{dfn}[Sequential adaptive $c$-copy protocol]
    \label{dfn:sequential_adaptive_protocol}
    Label the copies by $1,\ldots,c$, and write
        $\mathbf X = (X_1,\ldots,X_c)$ and $\mathbf A = (A_1,\ldots,A_c)$ for
        Alice's inputs and outputs, and similarly $\mathbf Y,\mathbf B$ for Bob.
        At their ``time step'' $t \in \{1,\dots,c\}$, Alice and Bob choose copies
        $\ell_t$ and $j_t$,
        respectively.
        A general adaptive $c$-copy protocol consists of:
        \begin{enumerate}
            \item \emph{Alice's measurements.} At step $t = 1,\ldots,c$, Alice
                  chooses an unused copy $\ell_t$ and an input
                  $x_{\ell_t} \in[m_A]$, possibly stochastically, using $\rr$
                  and her previous choices and outcomes
                  $(\ell_u,x_{\ell_u},a_{\ell_u})_{u<t}$. She then obtains
                  $a_{\ell_t}$.
            \item \emph{Communication.} After Alice has used all copies, a
                  channel $p_M(m'|\rr,\boldsymbol\ell,\mathbf x,\mathbf a)$
                  produces the message received by Bob.
            \item \emph{Bob's measurements.} After receiving $m'$, at step
                  $t = 1,\ldots,c$ Bob chooses an unused copy $j_t$ and an input
                  $y_{j_t} \in[m_B]$, possibly stochastically, using $s,m'$ and
                  his previous choices and outcomes
                  $(j_u,y_{j_u},b_{j_u})_{u<t}$. He then obtains $b_{j_t}$.
            \item \emph{Decoding.} After all measurements, Bob produces a guess
                  $g$ using the stochastic map
                  $p_G(g|s,m',\mathbf j,\mathbf y,\mathbf b)$.
        \end{enumerate}
        In every realization of the protocol, the tuples $\boldsymbol{\ell}$ and
        $\mathbf j$ are permutations of $1,\ldots,c$. The vectors
        $\mathbf x,\mathbf a,\mathbf y, \mathbf b$ are ordered by copy label.
\end{dfn}

The previous copy choices and inputs are included explicitly because, for
    stochastic strategies, they cannot in general be reconstructed from the
    outputs alone.
 The measurement stage is analogous to the notion of
wiring of~\cite{Beigi2015}, specialized to identical copies and to effective
outputs that record all local outputs.
Allowing order adaptivity on both sides generalizes the usual
    concatenation procedures~\cite{pawlowski:2009}. 
The explicit joint distribution generated by a protocol is given in
Appendix~\ref{app:generalized_many_copy_protocols}.

For a fixed database distribution $q$ and a $c$-copy protocol $\pi_c$,
    we define
    \begin{equation}
        \begin{aligned}
            \Delta_c(P,q,\pi_c) & \coloneqq \sum_{s=0}^{n-1}I(R_s:G_s \,|\,\mathbf R_{<s}) \\
                                & \qquad - I(\mathbf R,\mathbf L,\mathbf X,\mathbf A:M').
        \end{aligned}
        \label{eq:generalized_many_copy_delta}
    \end{equation}
    This is the natural multi-copy extension of
    Eq.~\eqref{eq:generalized_single_copy_gap}, to which it reduces
    when $c = 1$. For adaptive orderings,
    $\mathbf L$ records the order in which Alice used the copies.
    Again, by data processing, factoring the same channel through an
    intermediate message can only
    weaken this bound. The corresponding optimized violation is
\begin{equation}
    \Delta_c^{*}(P) := \sup_{\substack{q\in\set{D}_{n,d}\\ \pi_c\ \mathrm{compatible}}} \Delta_c(P,q,\pi_c). \label{eq:generalized_many_copy_optimization}
\end{equation}
As in the single-copy problem, $q$ is a free variable of the optimization.

Note that Alice completes all her local operations before Bob starts his, so
Alice's choices never depend on Bob's outcomes, and Bob's choices depend on
Alice's data only through $M'$. Together with the fact that each \emph{decision
tree} uses every copy at most once in any realization, this ensures that the
adaptive protocol is always acyclic; details are given in
Appendix~\ref{app:generalized_many_copy_protocols}.

The deterministic reduction also extends to this setting. 

\begin{lemma}[Multi-copy deterministic reduction]
    \label{lem:generalized_deterministic_many_copy}
    Let $P$ be fixed, let $c < \infty$, and let $\set{D}_{n,d}$ be a fixed
    family of admissible database distributions. In the optimization of
    Eq.~\eqref{eq:generalized_many_copy_optimization}, Alice's and Bob's
    complete adaptive measurement decision trees, including the order in
    which each party uses the copies,
    as well as Bob's decoding map $T_G$, may be taken deterministic. In
    general, $q$ and $T_M$ remain unrestricted stochastic variables.
\end{lemma}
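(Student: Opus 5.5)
We will mimic the argument for Lemma~\ref{lemma:generalized_det_strat}: fix all parts of the protocol except one local stochastic map, show that $\Delta_c$ is convex in that map, and replace it by an extreme point, i.e.\ a deterministic map. Convexity will come from convexity of mutual information in the channel for a fixed input distribution, combined with linearity of the remaining quantities.

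\textbf{Alice's decision tree.} First we encode Alice's whole adaptive tree as a single stochastic map $T_{\mathrm{A}}$. It sends $\rr$ to a \emph{response strategy}: an assignment that, for each history $(\ell_u,x_{\ell_u},a_{\ell_u})_{u<t}$, specifies the next pair $(\ell_t,x_{\ell_t})$. Any behavioural (stepwise) randomization is equivalent, by a Kuhn-type argument, to a mixture over such deterministic strategies with $\rr$-dependent weights. The joint distribution in Appendix~\ref{app:generalized_many_copy_protocols} is then linear in these weights. We introduce a classical label $\Lambda$ for the chosen deterministic strategy, so the distribution is $\sum_\lambda w(\lambda|\rr)\,p_\lambda(\cdots)$.

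\textbf{Convexity on Alice's side.} On the left-hand side, each term $I(R_s:G_s\mid\mathbf R_{<s})$ is, for fixed $q(\rr)$, a conditional mutual information of a channel $\rr\mapsto G$ (given $S=s$). This channel depends linearly on $w$, and $I(R_s:G_s\mid \mathbf R_{<s}=\mathbf r_{<s})$ is convex in the channel for a fixed input distribution. On the right-hand side, $-I(\mathbf R,\mathbf L,\mathbf X,\mathbf A:M')$ is the obstacle: it is concave rather than convex as a function of the channel from $\rr$ to $(\mathbf L,\mathbf X,\mathbf A)$. To handle it, we add $\Lambda$ to the conditioning variables of the channel $T_M$. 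This is allowed because $T_M$ is unrestricted and only enlarges the feasible set. More usefully, we note that
\begin{equation}
I(\mathbf R,\mathbf L,\mathbf X,\mathbf A:M') \ge I(\mathbf R,\mathbf L,\mathbf X,\mathbf A:M'\mid\Lambda)
\end{equation}
whenever $\Lambda$ is a function of $(\mathbf R,\mathbf L,\mathbf X,\mathbf A)$ together with private randomness independent of $M'$ given those variables. Applying the chain rule and $I(\Lambda:M'\mid\cdots)\ge0$ then gives
\begin{equation}
\Delta_c \le \sum_\lambda \Pr(\lambda)\,\Delta_c^{(\lambda)},
\end{equation}
where $\Delta_c^{(\lambda)}$ is evaluated with the deterministic tree $\lambda$ and a suitably modified $q_\lambda=q(\cdot|\lambda)$. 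This is where the unrestricted $q$ is used. The same mechanism was used in Lemma~\ref{lemma:generalized_det_strat}, whose appendix proof I will follow. Taking the best $\lambda$ yields a deterministic tree with at least the same deficit. Verifying that $q(\cdot\mid\lambda)$ stays admissible, and that the left-hand side terms do not decrease under this decomposition, is the step I expect to be most delicate. If $\set{D}_{n,d}$ is not closed under conditioning, I would instead fix $q$ and use the direct convexity of $\Delta_c$ in $T_{\mathrm A}$. That route works once the right-hand side is shown to be linear or convex along the relevant segment, which holds because $(\mathbf L,\mathbf X)$ is observed, so mixtures over $\lambda$ are partially recorded.

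\textbf{Bob's decision tree and decoding.} Bob's side is easier, since $M'$ and the right-hand side do not depend on Bob's tree or on $T_G$. We again represent his adaptive tree as a mixture over deterministic trees indexed by $\mu$, conditional on $(s,m')$. The only affected term is the left-hand side, and each summand is convex in the channel $\rr\mapsto G$ for fixed input. That channel is affine in the mixing weights, because nonsignaling lets Bob's sequential marginals be computed copy by copy (Lemma~\ref{lem:sequential_marginals}). So the maximum over the polytope of mixtures is attained at a vertex, i.e.\ a deterministic tree. The same argument applied to $T_G$, after fixing everything else, yields deterministic decoding. Finally, $q$ and $T_M$ are left free, which gives the statement.
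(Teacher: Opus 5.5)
Your treatment of Bob's tree and of $T_G$ is fine and matches the paper. The argument for Alice's tree, however, has a genuine gap, and it starts from a misdiagnosis.

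The term $-I(\mathbf R,\mathbf L,\mathbf X,\mathbf A:M')$ is not an obstacle. For fixed $T_M$, the channel from $V:=(\mathbf R,\mathbf L,\mathbf X,\mathbf A)$ to $M'$ is fixed. By Lemma~\ref{lem:sequential_marginals}, the distribution of $V$ is independent of Bob and affine in Alice's maps: in each step map $T_A^{(t)}$ as in the paper, and equally in your weights $w(\lambda|\rr)$. Mutual information is concave in the input distribution for a fixed channel, so $-I(V:M')$ is convex. Together with the convexity of the left-hand side, $\Delta_c$ is therefore convex in Alice's strategy. This is exactly the mechanism of the single-copy appendix proof, which involves no conditioning on a strategy label.

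The concavity you assert would hold for something like $-I(\mathbf R:M')$, where Alice's maps enter the channel into $M'$. Here they enter only through the input distribution, precisely because $(\mathbf L,\mathbf X,\mathbf A)$ are part of the input of $T_M$ and of the mutual information. This single fact is what your fallback route needs and does not supply; ``mixtures are partially recorded'' is not an argument.

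The conditioning route itself fails. The bound $I(V:M')\ge I(V:M'\mid\Lambda)$ is correct. But your strategy label $\Lambda$ is correlated with $\mathbf R$, so conditioning on it can destroy left-hand-side information, and $\Delta_c\le\sum_\lambda\Pr(\lambda)\Delta_c^{(\lambda)}$ is false in general.

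Here is a counterexample. Take $c=1$, $n=1$, a PR box and uniform $R_0$. Let $X=R_0\oplus N$ with independent $N\sim\mathrm{Bern}(\delta)$, so that $\Lambda=X$. Use a noiseless channel $M'=A$, set $Y=1$, and let $G=M'\oplus B=X$. Then, in bits, $\Delta_1=(1-h(\delta))-1=-h(\delta)$. For each $\lambda$, however, the guess is constant, so $\Delta_1^{(\lambda)}=0-1=-1$.

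This route would also require $q(\cdot|\lambda)\in\set D_{n,d}$, which the lemma does not assume (think of product distributions). The paper's argument, by contrast, works for every fixed $q$.

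You can repair the proof by replacing this part with the concavity argument above. It can be applied step by step as in the paper, or in one shot on your polytope of Kuhn mixtures, whose vertices are deterministic trees.
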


For fixed $q$, the IC deficit $\Delta_c$ is convex in each measurement or
    decoding map when all other maps are held fixed. As in the single-copy case,
    these maps can therefore be taken deterministic; see
    Appendix~\ref{app:generalized_deterministic_many_copy} for details and a
    proof.

After the reduction of
    Lemma~\ref{lem:generalized_deterministic_many_copy}, Bob's first copy and
    input are determined by $(S,M')$. Its output is an entry of $\mathbf B$,
    which determines his next choice, and so on. Hence $\mathbf J,\mathbf Y$
    are functions of $(S,M',\mathbf B)$ and
    can be substituted into the decoding map, which can then be written as
    $g = f_G(s,m',\mathbf b)$.
    Similarly, $\mathbf L,\mathbf X$ are functions of
    $(\mathbf R,\mathbf A)$, so
    \begin{equation}
        I(\mathbf R,\mathbf L,\mathbf X,\mathbf A:M') = I(\mathbf R,\mathbf A:M').
    \end{equation}

Thus, without loss of generality, we can restrict ourselves to such
deterministic protocols and consider simply the expression
\begin{equation}
        \begin{split}
            \Delta_c(P,q,\pi_c) & = \sum_{s=0}^{n-1}I(R_s:G_s\,|\,\mathbf R_{<s}) \\
                                & \qquad - I(\mathbf R,\mathbf A:M').
        \end{split}
        \label{eq:generalized_many_copy_delta_reduced}
    \end{equation}

We now connect this protocol to generalized wirings.
The observation we need is that a deterministic adaptive use of the $c$
copies can itself be regarded as a local wiring producing one effective
behavior. Define
\begin{equation}
    x' := \rr,\quad y'
       := (s,m'),\quad a'
       := \mathbf a,\quad b'
       := \mathbf b, \label{eq:generalized_effective_wiring_variables}
\end{equation}
where the outputs are again grouped according to the copy labels rather
than the step at which a copy is used. For fixed external inputs $(x',y')$ and
 outputs $(a',b')$, the deterministic decision trees recursively
determine all internal inputs. To see this, notice that on Alice's side,
$x' = \rr$ fixes the first action $(\ell_1,x_{\ell_1})$, the output
$a_{\ell_1}$ read from $a'$ then fixes the second action, and so on, so
that every (copy-labelled) input is a function
\begin{equation}
    x_i = x_i(\rr,\mathbf a),\qquad i=1,\ldots,c, \label{eq:alice_internal_inputs}
\end{equation}
which depends only on $\rr$ and the outputs observed by Alice before she
measures copy $i$.
Similarly, Bob's input $y_j = y_j(s,m',\mathbf b)$ depends only on $(s,m')$ and
    the outputs observed before he measures copy
    $j$. The effective behavior is then
\begin{equation}
    p'(a',b'|x',y') = \prod_{i=1}^c p(a_i,b_i|x_i,y_i). \label{eq:generalized_adaptive_effective_behavior}
\end{equation}
 This is a normalized nonsignaling behavior obtained from
$P^{\otimes c}$ by a local deterministic wiring, in which the order of use of
the copies may be different for the two parties and may depend, on each side, on
the local input and on the previous local outputs. Normalization and
nonsignaling follow from the results of Ref.~\cite{Beigi2015}; see
Lemma~\ref{lem:sequential_marginals} in
Appendix~\ref{app:generalized_effective_ns}. Its Bell scenario has $d^n$ inputs
and $d_A^c$ outputs for Alice, one input for every value of $\rr$, and
$nd'$ inputs and $d_B^c$ outputs for Bob, one input for every pair
$(s,m')$. The database distribution $q$ does not enter the definition of
$P'$: it only determines how the effective input $x' = \rr$ is sampled in the
resulting single-copy generalized IC protocol.

The communication channel also remains outside the wiring. In the effective
single-copy protocol Alice uses $x' = \rr$, obtains $a' = \mathbf a$, and
applies the channel
$p_M(m'|\rr,\mathbf a)$, obtained by substituting the deterministic
    functions $\boldsymbol\ell = \boldsymbol\ell(\rr,\mathbf a)$ and
    $\mathbf x = \mathbf x(\rr,\mathbf a)$ into the original channel
    $p_M(m'|\rr,\boldsymbol\ell,\mathbf x,\mathbf a)$.
Bob uses the effective input $y' = (s,m')$ and receives $b' = \mathbf b$.
He then applies the function
    $g = f_G(s,m',\mathbf b)$.
Consequently the adaptive and effective protocols generate exactly the same
joint distribution of $(\mathbf R,S,M',\mathbf A,\mathbf B,G)$ for every
fixed database distribution $q$.

\begin{obs}[Adaptive multi-copy protocols as generalized single-copy protocols]
    \label{obs:generalized_adaptive_many_copy_as_single_copy}
    For every fixed database distribution $q \in \set{D}_{n,d}$, any
     adaptive $c$-copy generalized IC protocol applied to $P$,
    with deterministic measurement choices, is equivalent to a
    generalized single-copy protocol applied to the effective behavior $P'$ of
    Eq.~\eqref{eq:generalized_adaptive_effective_behavior}. In particular, there
    exists a single-copy protocol $\pi_1$ such that
    \begin{equation}
        \Delta_c(P,q,\pi_c) = \Delta_1(P',q,\pi_1). \label{eq:generalized_many_to_one_gap_equality}
    \end{equation}
\end{obs}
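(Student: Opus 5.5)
The plan is to make the construction described before the statement explicit and then check that both terms of $\Delta$ coincide. Fix $q$ and a deterministic adaptive protocol $\pi_c$, as allowed by Lemma~\ref{lem:generalized_deterministic_many_copy}. Define $P'$ by Eq.~\eqref{eq:generalized_adaptive_effective_behavior}, with the internal inputs given recursively by the decision trees as in Eq.~\eqref{eq:alice_internal_inputs}. The single-copy protocol $\pi_1$ is then chosen as follows: $T_X$ is the deterministic map $x'=\rr$; $T_M$ is the effective channel $p_M(m'|\rr,\mathbf a)$ obtained by substituting $\boldsymbol\ell(\rr,\mathbf a)$ and $\mathbf x(\rr,\mathbf a)$; $T_Y$ is the deterministic map $y'=(s,m')$; and $T_G$ is $g=f_G(s,m',\mathbf b)$. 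We need $P'$ to be a valid nonsignaling behavior in the enlarged Bell scenario, since $\Delta_1(P',\cdot,\cdot)$ is only defined for such behaviors. For this I would invoke Lemma~\ref{lem:sequential_marginals}, i.e.\ the wiring results of Ref.~\cite{Beigi2015}.

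The core step is to show that the two protocols induce the same joint distribution of $(\mathbf R,S,M',\mathbf A,\mathbf B,G)$. Start from the explicit joint distribution of the adaptive protocol in Appendix~\ref{app:generalized_many_copy_protocols}. Because the trees are deterministic, the factors choosing $(\ell_t,x_{\ell_t})$ and $(j_t,y_{j_t})$ are indicator functions. Summing over $\boldsymbol\ell,\mathbf x,\mathbf j,\mathbf y$ therefore just substitutes the functions $x_i(\rr,\mathbf a)$ and $y_j(s,m',\mathbf b)$. What remains is the product of the copies' probabilities $\prod_i p(a_i,b_i|x_i,y_i)$, times $q(\rr)p(s)p_M(m'|\rr,\mathbf a)$ and the decoding indicator.

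This is the step I expect to be the main obstacle. In the sequential description, outcomes are generated one step at a time: Alice's and Bob's marginal conditionals appear in temporal order, with Bob's step happening after $m'$ was produced from Alice's outcomes. One must justify that this sequential product equals the product of the joint box probabilities. The joint behavior of copy $i$ is $p(a_i,b_i|x_i,y_i)$ only when both inputs are fixed, whereas Alice's later inputs depend on earlier outputs. The resolution is nonsignaling: since Bob's inputs depend on Alice's data only through $m'$, which is a function of all of Alice's outputs, one can reorder the factorization so that all of Alice's outcomes are sampled first from the product of marginals. Bob's outcomes are then sampled from the conditionals $p(b_i|a_i,x_i,y_i)$, and the product telescopes to $\prod_i p(a_i,b_i|x_i,y_i)$. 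This is precisely the content of Lemma~\ref{lem:sequential_marginals}, which I would cite with general $c$. The result matches Eq.~\eqref{eq:generalized_single_copy_protocol} applied to $P'$ with $\pi_1$, after marginalizing the deterministic $x',y'$.

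Given equality of joint distributions, compare the two deficits. The left-hand terms $I(R_s:G_s|\mathbf R_{<s})$ are identical because they depend only on the $(\mathbf R,S,G)$ marginal. For the communication term, $I(\mathbf R,\mathbf L,\mathbf X,\mathbf A:M')=I(\mathbf R,\mathbf A:M')$, since $\mathbf L,\mathbf X$ are functions of $(\mathbf R,\mathbf A)$. Likewise, in the single-copy protocol $I(\mathbf R,X',A':M')=I(\mathbf R,\mathbf A:M')$, since $X'=\mathbf R$. This yields Eq.~\eqref{eq:generalized_many_to_one_gap_equality}.
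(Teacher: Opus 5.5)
Your proposal is correct and follows essentially the same route as the paper. It uses the effective variables $x'=\rr$, $y'=(s,m')$, $a'=\mathbf a$, $b'=\mathbf b$, the substituted channel $p_M(m'|\rr,\mathbf a)$ and decoder, equality of the joint distribution of $(\mathbf R,S,M',\mathbf A,\mathbf B,G)$ justified through Lemma~\ref{lem:sequential_marginals}, and the identity $I(\mathbf R,\mathbf L,\mathbf X,\mathbf A:M')=I(\mathbf R,\mathbf A:M')=I(\mathbf R,X',A':M')$ for the communication term. One small adjustment: the statement only assumes deterministic measurement choices, so if the decoder is stochastic, substitute $\mathbf j,\mathbf y$ into $p_G$ rather than appealing to a deterministic $f_G$; nothing else changes.
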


 Allowing the database distribution to be a free parameter of the optimization does
not modify the wiring reduction: a database distribution witnessing a
multi-copy violation is carried unchanged to the corresponding effective
single-copy protocol.

\subsection{Single-copy sufficiency}
\label{sec:generalized_single_copy_sufficiency}

Observation~\ref{obs:generalized_adaptive_many_copy_as_single_copy} becomes
particularly useful for families of correlations that are stable under the
general wirings considered above. We remark that closure under arbitrary
wirings of different behaviors is not required. It is sufficient to assume
closure under local wirings of finitely many identical copies. More precisely,
let
\begin{equation}
    \set C = \{\set C(\mathsf B)\}_{\mathsf B}
\end{equation}
be a family of correlations indexed by the Bell scenario. We say that $\set C$
is closed under identical-copy wirings if, for every $P \in \set C(\mathsf B)$,
every finite $c$, and every local wiring $W$ of $c$ identical copies of $P$,
\begin{equation}
    W(P^{\otimes c}) \in \set C(\mathsf B'), \label{eq:generalized_identical_copy_closure}
\end{equation}
where $\mathsf B'$ denotes the Bell scenario of the effective behavior.

A particularly interesting case is that of the macroscopically local set $\set{Q}^1$
and the almost-quantum set $\set{AQ}$. Their closure under the adaptive wirings
discussed here was either already proven in the original
paper~\cite{navascues:2010} or follows directly from combining the results
in~\cite{navascues:2015}. For convenience, we
 give the argument in Appendix~\ref{app:closure_adaptive_wirings}.

\begin{prop}[Single-copy sufficiency for generalized IC]
    \label{prop:generalized_single_copy_sufficiency}
    Fix RAC parameters $(n,d,d')$ and a fixed family of admissible
    database distributions $\set{D}_{n,d}$. Let
    $\set C = \{\set C(\mathsf B)\}_{\mathsf B}$ be a family of correlations
    satisfying Eq.~\eqref{eq:generalized_identical_copy_closure}. Suppose that,
    for every Bell scenario $\mathsf B$ and every $P \in \set C(\mathsf B)$,
    single-copy IC is satisfied, i.e.
    \begin{equation}
        \Delta_1^{*}(P) \leq 0, \label{eq:generalized_single_copy_hypothesis}
    \end{equation}
    where the optimization in $\Delta_1^{*}$ includes also all
    $q \in \set{D}_{n,d}$. Then, for every finite $c$ and every
    $P \in \set C(\mathsf B)$,
    \begin{equation}
        \Delta_c^{*}(P) \leq 0. \label{eq:generalized_many_copy_conclusion}
    \end{equation}
    Thus generalized single-copy IC for all admissible
    database distributions is sufficient to exclude all finite-copy
    adaptive violations considered above.
\end{prop}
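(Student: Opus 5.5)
The plan is a direct reduction argument built on Lemma~\ref{lem:generalized_deterministic_many_copy} and Observation~\ref{obs:generalized_adaptive_many_copy_as_single_copy}. Fix a finite $c$, a Bell scenario $\mathsf B$, and $P \in \set C(\mathsf B)$. We must show that $\Delta_c(P,q,\pi_c) \le 0$ for every admissible $q \in \set{D}_{n,d}$ and every compatible $c$-copy protocol $\pi_c$. Arguing by contraposition, suppose $\Delta_c^{*}(P) > 0$. Then there exist $q$ and $\pi_c$ with $\Delta_c(P,q,\pi_c) > 0$. Since the supremum is attained, or alternatively directly from the definition of the supremum, no attainment argument is needed here.

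The first step is to make the protocol deterministic. By Lemma~\ref{lem:generalized_deterministic_many_copy}, for the fixed $q$ we may replace $\pi_c$ by a protocol $\tilde\pi_c$ whose adaptive decision trees (copy orders and inputs) and decoding map are deterministic, keeping $q$ and $T_M$. The lemma states that the supremum over stochastic maps equals that over deterministic ones. Convexity of $\Delta_c$ in each map with the others fixed gives $\Delta_c(P,q,\tilde\pi_c) \ge \Delta_c(P,q,\pi_c) > 0$, by replacing the maps one at a time.

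Next, apply Observation~\ref{obs:generalized_adaptive_many_copy_as_single_copy} to $\tilde\pi_c$. This yields the effective behavior $P' = W(P^{\otimes c})$ of Eq.~\eqref{eq:generalized_adaptive_effective_behavior}. Here $W$ is the local deterministic wiring induced by the decision trees, with effective inputs $x' = \rr$ and $y' = (s,m')$, in a Bell scenario $\mathsf B'$. The observation also yields a single-copy protocol $\pi_1$ with $\Delta_1(P',q,\pi_1) = \Delta_c(P,q,\tilde\pi_c) > 0$. The database distribution $q$ is unchanged, so it is still admissible in $\set{D}_{n,d}$, and the RAC parameters $(n,d,d')$ are the same. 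Hence $\pi_1$ is compatible, and $\Delta_1^{*}(P') > 0$.

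Finally, the closure hypothesis~\eqref{eq:generalized_identical_copy_closure} gives $P' \in \set C(\mathsf B')$. The single-copy hypothesis~\eqref{eq:generalized_single_copy_hypothesis}, which holds for \emph{every} Bell scenario, in particular $\mathsf B'$, then gives $\Delta_1^{*}(P') \le 0$. This is a contradiction.

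The only step needing care is checking that $W$ really is a ``local wiring of $c$ identical copies'' in the sense required by the closure assumption. It is input- and output-adaptive, with possibly different orders on the two sides. This is exactly the class described before Eq.~\eqref{eq:generalized_adaptive_effective_behavior}, and its normalization and nonsignaling are guaranteed by Lemma~\ref{lem:sequential_marginals}. So I expect the matching to be a matter of unpacking definitions rather than a genuine obstacle. It is also worth stressing why the hypothesis must range over all scenarios $\mathsf B$ and all $q$: the effective scenario $\mathsf B'$ has $d^n$ and $nd'$ inputs, and the violating $q$ is inherited unchanged.
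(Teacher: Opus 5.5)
Your proposal is correct and follows essentially the same route as the paper's proof. Both argue by contradiction from a violating pair $(q,\pi_c)$, make the protocol deterministic at fixed $q$ via Lemma~\ref{lem:generalized_deterministic_many_copy}, use Observation~\ref{obs:generalized_adaptive_many_copy_as_single_copy} to obtain $P'=W(P^{\otimes c})$ with $\Delta_1(P',q,\pi_1)=\Delta_c(P,q,\pi_c)>0$, and then invoke the closure property to place $P'$ in $\set C(\mathsf B')$, contradicting the hypothesis there.
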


\begin{proof}
    Suppose, by contradiction, that some $P \in \set C(\mathsf B)$ violates
    generalized IC through a finite-copy adaptive protocol. Then there exist a
    database distribution $q \in \set{D}_{n,d}$, a finite $c$, and a
    compatible protocol $\pi_c$ such that
    \begin{equation}
        \Delta_c(P,q,\pi_c) > 0. \label{eq:generalized_many_copy_positive_witness}
    \end{equation}
    By Lemma~\ref{lem:generalized_deterministic_many_copy}, the local adaptive
    operations may be taken deterministic without decreasing the violation,
    while keeping the same database distribution $q$.
    Observation~\ref{obs:generalized_adaptive_many_copy_as_single_copy} then
    gives an effective behavior
    \begin{equation}
        P' = W(P^{\otimes c})
    \end{equation}
    and a generalized single-copy protocol $\pi_1$ such that, for that same
    database distribution $q$,
    \begin{equation}
        \Delta_1(P',q,\pi_1)
        = \Delta_c(P,q,\pi_c)
        > 0. \label{eq:generalized_single_copy_positive_witness}
    \end{equation}
    Since $P'$ is obtained from identical copies of $P$ by a local wiring,
    Eq.~\eqref{eq:generalized_identical_copy_closure} implies
    $P' \in \set C(\mathsf B')$. This contradicts the assumption of the validity
    of Eq.~\eqref{eq:generalized_single_copy_hypothesis} for all
    $P' \in \set C(\mathsf B')$ and all $\mathsf B'$.
\end{proof}

Thus, for a family closed under these wirings, satisfying generalized IC for all
single-copy protocols is equivalent to satisfying it for all finite-copy
protocols.

\subsection{Sets of correlations satisfying generalized information causality}
\label{sec:generalized_ic_sets}

We now use the previous characterization to define the sets of correlations
satisfying generalized IC.
Let us consider a set of allowed database distributions $\set{D}_{n,d}$.
    Restricting the distributions to any subset
    $\widetilde{\set{D}}_{n,d} \subseteq \set{D}_{n,d}$ gives
    \begin{equation}
        \sup_{q \in \widetilde{\set{D}}_{n,d}, \, \pi_c} \Delta_c(P,q,\pi_c) \leq \sup_{q\in {\set{D}}_{n,d}, \, \pi_c} \Delta_c(P,q,\pi_c). \label{eq:generalized_source_family_monotonicity}
    \end{equation}
    We therefore take $\set{D}_{n,d}$ to be the full probability simplex on
    $[d]^n$, which gives the strongest IC constraint. Indeed, it requires the IC
    inequality to hold for all possible choices of $q(\rr)$, and this is the
    most
    restrictive requirement.
    Consequently, in this subsection, we omit the argument $\set{D}_{n,d}$ from
    the
    notation.

For a fixed Bell scenario $\mathsf B$, RAC parameters $(n,d,d')$, and a fixed
number $c$ of copies, let us denote the quantity from
Eq.~\eqref{eq:generalized_many_copy_optimization} with these explicit labels
\begin{equation}
    \begin{split}
        \Delta_{n,d,d'}^{*,(c)} (P) \coloneqq \sup_{\substack{q\in\set{D}_{n,d}\\
\pi_c\ \mathrm{compatible}}} \Delta_c(P,q,\pi_c).
    \end{split}
    \label{eq:generalized_optimal_c_copy_violation}
\end{equation}
As the objective function is continuous and the domain compact, the
    supremum is attained.
We define
\begin{equation}
    \set{IC}_{n,d,d'}^{(c)} (\mathsf B) \coloneqq \bigl\{P \in \set{NS}(\mathsf B)\ \bigm|\  \Delta_{n,d,d'}^{*,(c)} (P) \leq 0\bigr\}. \label{eq:generalized_fixed_c_ic_set}
\end{equation}
Thus membership requires the generalized IC inequality to hold for every
 $q$ and every compatible protocol with the specified finite alphabets.

Intersecting over the RAC parameters gives a $c$-copy information
    causality set,
\begin{equation}
    \set{IC}^{(c)}(\mathsf B) \coloneqq \bigcap_{n,d,d' \geq 2} \set{IC}_{n,d,d'}^{(c)} (\mathsf B). \label{eq:generalized_c_copy_ic_set}
\end{equation}
In particular,
$\set{IC}^{(1)}(\mathsf B)$ contains the behaviors satisfying every generalized
single-copy IC constraint. We also define 
\begin{equation}
    \set{IC}^{(\infty)}(\mathsf B) \coloneqq \bigcap_{c \geq 1}\set{IC}^{(c)}(\mathsf B). \label{eq:generalized_full_ic_set}
\end{equation}
By construction,
$\set{IC}^{(\infty)}(\mathsf B) \subseteq \set{IC}^{(1)}(\mathsf B)$. Combined
with Proposition~\ref{prop:generalized_single_copy_sufficiency}, this gives
\begin{prop}[Single-copy sufficiency for closed sets]
For every family $\set C$ closed under identical-copy wirings,
\begin{equation}
    \begin{aligned}
        \set C(\mathsf B) \subseteq \set{IC}^{(1)}      & (\mathsf B) \quad\text{for every }\mathsf B  \\[-0.2em]
                                                        & \Longleftrightarrow                          \\[-0.2em]
        \set C(\mathsf B) \subseteq \set{IC}^{(\infty)} & (\mathsf B) \quad\text{for every }\mathsf B.
    \end{aligned}
    \label{eq:generalized_self_consistent_single_copy_equivalence}
\end{equation}
\end{prop}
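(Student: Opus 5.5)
The plan is to prove the two implications separately. The reverse direction is immediate from the definitions; the forward direction reduces to Proposition~\ref{prop:generalized_single_copy_sufficiency}, applied once for each choice of RAC parameters.

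\emph{($\Leftarrow$).} By Eq.~\eqref{eq:generalized_full_ic_set}, $\set{IC}^{(\infty)}(\mathsf B)$ is an intersection over all $c \geq 1$ and includes the term $c = 1$. Hence $\set{IC}^{(\infty)}(\mathsf B) \subseteq \set{IC}^{(1)}(\mathsf B)$ for every $\mathsf B$, and the inclusion $\set C(\mathsf B) \subseteq \set{IC}^{(\infty)}(\mathsf B)$ gives $\set C(\mathsf B) \subseteq \set{IC}^{(1)}(\mathsf B)$. Closure under wirings is not needed here.

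\emph{($\Rightarrow$).} Assume $\set C(\mathsf B) \subseteq \set{IC}^{(1)}(\mathsf B)$ for every $\mathsf B$. Unpacking Eq.~\eqref{eq:generalized_c_copy_ic_set} with $c=1$, every $P \in \set C(\mathsf B)$ satisfies $\Delta^{*,(1)}_{n,d,d'}(P) \le 0$ for every triple $n,d,d' \ge 2$. Now fix one triple $(n,d,d')$, with $\set{D}_{n,d}$ the full simplex as adopted in this subsection. The hypothesis of Proposition~\ref{prop:generalized_single_copy_sufficiency}, namely Eq.~\eqref{eq:generalized_single_copy_hypothesis}, then holds for all Bell scenarios and all members of $\set C$. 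The closure assumption Eq.~\eqref{eq:generalized_identical_copy_closure} is given. The proposition therefore yields $\Delta^{*,(c)}_{n,d,d'}(P) \le 0$ for every finite $c$ and every $P \in \set C(\mathsf B)$, that is, $P \in \set{IC}^{(c)}_{n,d,d'}(\mathsf B)$. Since the triple $(n,d,d')$ and $c$ were arbitrary, intersecting first over $(n,d,d')$ and then over $c$ gives $P \in \set{IC}^{(\infty)}(\mathsf B)$.

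The main point needing care is that the RAC parameters stay the same across the wiring reduction. In Observation~\ref{obs:generalized_adaptive_many_copy_as_single_copy}, the effective single-copy protocol uses the same $(n,d,d')$ and the same $q$; only the Bell scenario changes, to $\mathsf B'$. This is why the single-copy hypothesis must be imposed for \emph{every} Bell scenario, and why the hypothesis ``for every $\mathsf B$'' appears on both sides of Eq.~\eqref{eq:generalized_self_consistent_single_copy_equivalence}. I will check that $\mathsf B'$ is a legitimate finite Bell scenario: it has $d^n$ and $nd'$ inputs and $d_A^c$, $d_B^c$ outputs, so it is covered by the quantifier. Beyond this bookkeeping I expect no real obstacle, since all the substantive work is done in Proposition~\ref{prop:generalized_single_copy_sufficiency}.
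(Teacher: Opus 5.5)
Your proposal is correct and follows the paper's own argument: the reverse implication comes from $\set{IC}^{(\infty)}(\mathsf B)\subseteq\set{IC}^{(1)}(\mathsf B)$ by construction, and the forward implication is Proposition~\ref{prop:generalized_single_copy_sufficiency}, applied for each fixed triple $(n,d,d')$ and then intersected over $(n,d,d')$ and $c$. You also make explicit a bookkeeping point the paper leaves implicit: the wiring reduction keeps $(n,d,d')$ and $q$ fixed and changes only the Bell scenario to $\mathsf B'$, which is why the single-copy hypothesis must hold for every $\mathsf B$.
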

Finally, we can show the convexity of all the single-copy IC sets.

\begin{prop}[Convexity of single-copy set]
    \label{prop:generalized_convexity_single_copy_ic}
    For a fixed Bell scenario $\mathsf B$ and fixed RAC parameters
    the function
    \begin{equation}
        P \mapsto \Delta_{n,d,d'}^{*,(1)} (P)
    \end{equation}
    is convex. Consequently, $\set{IC}_{n,d,d'}^{(1)} (\mathsf B)$ and
    $\set{IC}^{(1)}(\mathsf B)$ are convex.
\end{prop}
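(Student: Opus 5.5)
The plan is to show that, for every fixed database distribution $q$ and every fixed protocol $\pi=(T_X,T_M,T_Y,T_G)$, the map $P\mapsto\Delta_1(P,q,\pi)$ is convex on $\set{NS}(\mathsf B)$. Then $\Delta_{n,d,d'}^{*,(1)}$ is a pointwise supremum of convex functions, and hence convex. This works because the feasible set of pairs $(q,\pi)$ in Eq.~\eqref{eq:generalized_optimal_c_copy_violation} with $c=1$ depends only on the Bell scenario and the RAC parameters, not on $P$.

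The structural observation is that the joint distribution in Eq.~\eqref{eq:generalized_single_copy_protocol} is affine in $P$ once $q$ and $\pi$ are fixed, since $p(a,b|x,y)$ enters linearly. Consequently:
\begin{itemize}
    \item Every marginal and every conditional that uses only conditioning variables whose distribution does not depend on $P$ is also affine in $P$.
    \item Conditioning on $(\mathbf R,S)$ is of this kind, because $q(\rr)p(s)$ is independent of $P$.
\end{itemize}

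For the left-hand side of Eq.~\eqref{eq:generalized_single_copy_gap}, I fix $s$ and a value $\mathbf r_{<s}$ of $\mathbf R_{<s}$. I then write
\[
I(R_s:G_s\mid\mathbf R_{<s})=\sum_{\mathbf r_{<s}} q(\mathbf r_{<s})\,I(R_s:G\mid \mathbf R_{<s}=\mathbf r_{<s},S=s).
\]
In each term, the input distribution $q(r_s|\mathbf r_{<s})$ does not depend on $P$. The effective channel
\[
p(g|r_s,\mathbf r_{<s},s)=\sum_{\mathbf r_{>s}}q(\mathbf r_{>s}|\mathbf r_{\le s})\,p(g|\rr,s)
\]
is affine in $P$. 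By the standard convexity of mutual information in the channel for a fixed input distribution, each term is convex in $P$. Nonnegative weights and the sum over $s$ preserve convexity, so the whole left-hand side is convex in $P$.

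For the subtracted term $I(\mathbf R,X,A:M')$, the distribution of the input $(\mathbf R,X,A)$ is
\[
q(\rr)\,p_X(x|\rr)\,p(a|x).
\]
The marginal $p(a|x)$ is well defined by nonsignaling and is affine in $P$, so this input distribution is affine in $P$. The channel to $M'$ is the fixed map $T_M$. Mutual information is concave in the input distribution for a fixed channel, so $-I(\mathbf R,X,A:M')$ is convex in $P$. Adding the two parts shows that $\Delta_1(\cdot,q,\pi)$ is convex, which completes the first claim.

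Convexity of the sets then follows directly. Since $\set{NS}(\mathsf B)$ is convex, $\set{IC}^{(1)}_{n,d,d'}(\mathsf B)$ is the $0$-sublevel set of a convex function on a convex domain, and is therefore convex. Finally, $\set{IC}^{(1)}(\mathsf B)$ is an intersection of convex sets, so it is convex as well.

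The only delicate step is checking that the conditional quantities are affine in $P$. This requires conditioning only on variables whose marginals are independent of $P$, namely $\mathbf R$ and $S$, and never on $M'$ or $G$. I would spell out this point carefully, including the marginalization over $\mathbf R_{>s}$ when the entries are correlated.
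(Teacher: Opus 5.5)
Your proposal is correct and follows essentially the same route as the paper. You show that $P\mapsto\Delta_1(P,q,\pi)$ is convex for each fixed $(q,\pi)$: the induced channels $R_s\to G_s$ are affine in $P$ (so convexity of mutual information in the channel applies), and Alice's input distribution is affine in $P$ (so concavity in the input applies). You then take a pointwise supremum and conclude with the sublevel-set and intersection argument. The only cosmetic difference is that the paper fixes a reduced deterministic protocol $(f_X,T_M,f_Y,f_G)$ from Lemma~\ref{lemma:generalized_det_strat}, whereas you keep the general stochastic maps $(T_X,T_M,T_Y,T_G)$; this works equally well, since the joint distribution is affine in $P$ either way and the supremum is unchanged.
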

The proof is given in Appendix~\ref{app:generalized_single_copy_conv}.
For fixed $q$, define
\begin{equation}
    \Delta_{n,d,d'}^{*,(1)}(P;q) \coloneqq \sup_{\pi_1}\Delta_1(P,q,\pi_1). \label{eq:generalized_fixed_source_optimum}
\end{equation}
 This is a convex function of $P$. The
quantity optimized over protocols and database distributions is simply the pointwise supremum
\begin{equation}
    \Delta_{n,d,d'}^{*,(1)} (P) = \sup_{q \in \set{D}_{n,d}} \Delta_{n,d,d'}^{*,(1)}(P;q), \label{eq:generalized_source_pointwise_supremum}
\end{equation}
which remains convex.
In particular, the database distribution $q$ maximizing
Eq.~\eqref{eq:generalized_source_pointwise_supremum} may depend on $P$, as it
does in the correlated-input construction of Ref.~\cite{gachechiladze:2026},
without affecting convexity of the optimized value.

We note that the same proof does not directly establish convexity for $c > 1$
as, for a fixed adaptive $c$-copy protocol, the joint probabilities depend on
the shared behavior through $P^{\otimes c}$ and are therefore not affine in $P$.

\section{Lower and upper bounds on IC violations}
\label{sect:generalized_opti}

The results of Secs.~\ref{sec:generalized_single_copy_protocols}
and~\ref{sec:generalized_many_copy_ic} reduce the comparison between generalized
IC and a set of behaviors
$\set C = \set C(\mathsf B) \subseteq \set{NS}(\mathsf B)$ to a single-copy
optimization problem.
We therefore formulate the bounds for a single copy, to also simplify the
    presentation.
For fixed RAC parameters $(n,d,d')$ and 
fixed $\set D_{n,d}$, let
\begin{equation}
    \Delta^*_{\set C} \coloneqq \sup_{P \in \set C(\mathsf B)} \Delta_{n,d,d'}^{*,(1)}(P;\set D_{n,d}) \label{eq:proposed_generalized_set_level_ic_optimum}
\end{equation}
be the largest violation attainable within $\set C$. By definition, $\Delta^*_{\set C} \leq 0$ if and only if
$\set C(\mathsf B) \subseteq \set{IC}_{n,d,d'}^{(1)}(\mathsf B;\set D_{n,d})$.

After the deterministic reduction of Lemma~\ref{lemma:generalized_det_strat},
\begin{equation}
    \Delta^*_{\set C} = \max_{\mathbf f \in \set F_{}} \sup_{\substack{P\in\set C,\ q\in\set D_{n,d}\\ T_M\in\set T_M}} \Delta_1(P,q,\mathbf f,T_M), \label{eq:proposed_generalized_set_level_reduced}
\end{equation}
where $\set F_{}$ is the finite set of deterministic triples
$\mathbf f = (f_X,f_Y,f_G)$,
and $\set T_M$ is the set of stochastic communication matrices. The problem is
nonconvex and not easy to solve. We can therefore bound the optimum from below and above:
\begin{equation}
    L_{\set C}
    \leq \Delta^*_{\set C}
    \leq U_{\set C}. \label{eq:proposed_generalized_two_sided_ic_bounds}
\end{equation}
A lower bound $L_{\set C} > 0$ from an explicit feasible point exhibits a
violation within $\set C$, while an upper bound $U_{\set C} \leq 0$ from a
relaxation proves the inclusion.

Every feasible point of Eq.~\eqref{eq:proposed_generalized_set_level_reduced}
gives a lower bound on $\Delta^*_{\set C}$, at which $\Delta_1$ can be evaluated
directly, and a positive value
certifies a violation, with an explicit witness \mbox{$(P,q,f_X,f_Y,f_G,T_M)$}.
For each \mbox{$\mathbf f$}, up to relabeling symmetries, the continuous
variables can be optimized by standard local methods, such as gradient ascent
\mbox{\cite{kingma:2017}}, from several initial points.
One possible route to upper bounds is to express $\Delta_1$ as a linear
    combination of joint entropies and approximate the entropy functions by
    polynomials. For example, one can approximate $\eta(p) = -p\log_2p$
    uniformly on $[0,1]$ with error at most $\epsilon$. If the entropy expansion
    contains $N$ occurrences of $\eta(p)$, the total error in the IC deficit is at
    most $N\epsilon$. Maximizing the polynomial approximation of $\Delta_1$ over
    the feasible set and adding $N\epsilon$ gives an upper bound on
    $\Delta_{\mathcal C}^*$.

If $\set C(\mathsf B)$ is polyhedral, one may optimize directly over the
behavior probabilities $p(a,b|x,y)$ together with the linear normalization,
nonsignaling, and positivity constraints defining the set. Moment/SOS
relaxations~\cite{lasserre:2001} then provide upper bounds. If $\set C$ is
described by a positive semidefinite moment matrix, as for
 macroscopic locality~\cite{navascues:2010, cavalcanti:2010},
 almost-quantum correlations~\cite{navascues:2015}, or
higher NPA levels~\cite{navascues:2007, navascues:2008}, the products of
behavior probabilities in the
polynomial upper bound
can be treated with scalar-extension/state-polynomial relaxations~\cite{pozas:2019,
klep:2022, klep:2024}; $q$ and $T_M$ enter as additional commuting
variables.
The special case
$\set C = \{P\}$
    gives the membership problem for the single-copy IC set.
By the convexity of $P \mapsto \Delta^*_{\{P\}}$
(Proposition~\ref{prop:generalized_convexity_single_copy_ic}), the
single-behavior  analysis also applies to polytopes: if $\set C$ is the convex hull of
$P_1,\ldots,P_k$, then $\Delta^*_{\set C} = \max_i\Delta^*_{\{P_i\}}$. 
For a general compact convex $\set C$, the supremum is attained at an extreme 
point, since $\Delta^{*,(1)}_{n,d,d'}$ is convex and, for fixed RAC parameters, 
also continuous in $P$. Searches can therefore be restricted to extreme points.

The general optimization strategy outlined here remains computationally
    demanding. In Sec.~\ref{sec:low_capacity}, we therefore focus on
    channels close to a completely noisy one.  For perturbations that preserve the full support of the channel output, we derive the quadratic term of the IC functional.
    This gives an analytically tractable application of our protocol characterizations and 
    allows us to characterize all second-order constraints in the simplest Bell scenario. 
    For perturbations that change the channel support, we show instead that the leading term 
    is linear and reproduces the full single-copy IC constraint.

\section{Applications: expansions around zero-capacity channels}
\label{sec:low_capacity}

Recently Gachechiladze and Miklin \cite{gachechiladze:2026} have shown that IC implies
the Tsirelson--Landau--Masanes inequality \cite{tsirelson:1987, landau:1988, masanes:2003} for renormalized correlators~\cite{navascues:2007}
 in the CHSH, i.e., $(2,2,2,2)$, scenario. Their proof
was based on a vanishing-channel-capacity expansion of the IC expression, based on the 
technique developed in \cite{jain:2024}, together with a smart
choice of the protocol ($q$,$\pi$). 
In the same paper, the authors ask the 
question of whether a different protocol may yield constraints
beyond ML already at leading order in the vanishing-capacity limit.
In this section, we apply our characterization of IC as an optimization problem to 
answer it in the following way. Using the method of Ref.~\cite{jain:2024}, namely
expanding around the maximally noisy zero-capacity channel that produces a uniform output, and more generally
around any zero-capacity channel with full support, does not yield any improvement at leading
order over the TLM boundary derived in Ref.~\cite{gachechiladze:2026}. On the contrary, 
using as the center of the expansion a degenerate channel (i.e., one for which some 
output symbols have probability $0$), and considering perturbations that do not
preserve the support of the channel, one can obtain any single-copy IC violation at linear order 
in the perturbation. In other words, any such violation can be detected arbitrarily close to 
a zero-capacity channel.

\subsection{Second-order expansion}
The first observation is that generalized IC must hold for every channel, including channels arbitrarily close
to a completely noisy one, i.e., one whose output does not depend on its input. Here, 
we expand the IC functional around such channels whose output distribution has
full support. The leading term is a
quadratic form in the channel perturbation, which we derive for every single-copy protocol;
Sec.~\ref{sec:second_order} evaluates the resulting constraints in the
$(2,2,2,2)$ scenario. In the following, we take as $G_s$ the full information
available to Bob after the deterministic reduction, namely $G_s:=(M',B_s)$. Indeed, the 
preceding results hold for any finite alphabet of $G$, and by the (conditional) data-processing inequality, this choice provides the strictest condition. 
We use natural logarithms throughout this section.
For convenience, we write the IC functional of
Eq.~\eqref{eq:generalized_opt_prob_simplified} with the 
opposite sign, as an IC margin:
\begin{equation}
\begin{aligned}
    \Lambda \coloneqq -\Delta_1 &=  I(\mathbf R,A:M') \\ & \qquad \qquad -\sum_{s=0}^{n-1}I(R_s:M',B_s\,|\,\mathbf R_{<s}),
\label{eq:margin_def}
\end{aligned}
\end{equation}
so that IC requires $\Lambda\geq0$ for every protocol.

We consider protocols after the deterministic reduction of
Lemma~\ref{lemma:generalized_det_strat}: a database distribution $q$, Alice's setting
$x_\rr=f_X(\rr)$, Bob's setting $y_{m'}^{s}=f_Y(s,m')$, which may depend on the message,
and a channel $T_M$ acting on $(\rr,a)$. A completely noisy channel,
$[T_0]_{m',(\rr,a)}=\mu(m')$, makes $M'$ independent of the input $(\mathbf R, A)$, i.e.,
a zero-capacity channel, and gives $\Lambda=0$. 
Here, we take $\mu(m')>0$ for all $m'$ and write every channel in a
neighborhood of $T_0$ as
\begin{equation}
    \begin{gathered}
        [T_M]_{m',(\rr,a)} = \mu(m')\bigl[1 + \varepsilon\,w_{\rr,a}(m')\bigr],        \\
        \ \text{ with } \ \sum_{m'}\mu(m') = 1, \, \, \sum_{m'}\mu(m')\,w_{\rr,a}(m') = 0,
    \end{gathered}
    \label{eq:general_channel}
\end{equation}
where $w$ specifies the direction of the perturbation and $\varepsilon\geq0$ its size. The
constraint guarantees normalization, and positivity holds for $\varepsilon$ small enough.
This expansion generalizes those in~\cite{jain:2024, gachechiladze:2026} that are centered
in the maximally noisy channel $\mu(m')=1/d'$.

Since all terms of Eq.~\eqref{eq:margin_def} are mutual informations between a
variable and the message, possibly together with Bob's outcome, it is useful
to write down the following expansion of mutual information for 
\emph{very noisy} channels~\cite{abbe:2010}, close to zero capacity.

\begin{lemma}[Mutual information near zero capacity]
\label{lemma:mi_near_zero}
Let $p(z|v)=\mu(z)\bigl[1+\varepsilon\,u_v(z)\bigr]$, with a distribution $\mu$ that does
not depend on $v$ and $\sum_z\mu(z)u_v(z)=0$ for every $v$. Then
\begin{equation}
I(V:Z)=\frac{\varepsilon^2}2 \sum_z\mu(z)\Var_V\bigl(u_V(z)\bigr)+O(\varepsilon^3)
\label{eq:mi_second_order}
\end{equation}
where the variance is taken over $V$.
\end{lemma}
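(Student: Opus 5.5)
We compute the mutual information directly from its definition as an expected log-likelihood ratio and expand in $\varepsilon$. Write $p(v)$ for the distribution of $V$ (finite support). The marginal of $Z$ is
\begin{equation}
p(z)=\sum_v p(v)\mu(z)\bigl[1+\varepsilon u_v(z)\bigr]=\mu(z)\bigl[1+\varepsilon\bar u(z)\bigr],
\qquad \bar u(z):=\EE_V\bigl[u_V(z)\bigr].
\end{equation}
Then
\begin{equation}
I(V:Z)=\sum_{v,z}p(v)\mu(z)\bigl[1+\varepsilon u_v(z)\bigr]\Bigl[\ln\bigl(1+\varepsilon u_v(z)\bigr)-\ln\bigl(1+\varepsilon\bar u(z)\bigr)\Bigr].
\end{equation}
All alphabets are finite, so the $u_v(z)$ are uniformly bounded. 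For $\varepsilon$ small enough, every argument of a logarithm therefore lies in, say, $[1/2,3/2]$. This allows us to use the Taylor expansion $\ln(1+t)=t-t^2/2+O(t^3)$ with a uniform remainder constant, and this uniformity is the only point that needs care.

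The key step is to expand each term using
\begin{equation}
(1+\varepsilon a)\bigl[\ln(1+\varepsilon a)-\ln(1+\varepsilon b)\bigr]=\varepsilon(a-b)+\varepsilon^2\Bigl(\tfrac{a^2}{2}+\tfrac{b^2}{2}-ab\Bigr)+O(\varepsilon^3),
\end{equation}
which follows by multiplying out $(1+\varepsilon a)\bigl[\varepsilon(a-b)-\tfrac{\varepsilon^2}{2}(a^2-b^2)\bigr]$. With $a=u_v(z)$ and $b=\bar u(z)$, the order-$\varepsilon^2$ coefficient equals $\tfrac12\bigl(u_v(z)-\bar u(z)\bigr)^2$.

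Next we sum against $p(v)\mu(z)$. The first-order term $\sum_{v,z}p(v)\mu(z)\bigl(u_v(z)-\bar u(z)\bigr)$ vanishes, since $\EE_V[u_V(z)-\bar u(z)]=0$ for each $z$. (It also vanishes separately by the normalization condition $\sum_z\mu(z)u_v(z)=0$.) The second-order term gives
\begin{equation}
\frac{\varepsilon^2}{2}\sum_z\mu(z)\,\EE_V\bigl[(u_V(z)-\bar u(z))^2\bigr]=\frac{\varepsilon^2}{2}\sum_z\mu(z)\Var_V\bigl(u_V(z)\bigr).
\end{equation}
The remainders are a finite sum of $O(\varepsilon^3)$ terms with bounded coefficients, which yields Eq.~\eqref{eq:mi_second_order}.

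The argument has no real obstacle. The one point to state explicitly is that the hypothesis $\mu(z)>0$ wherever it matters, together with bounded $u_v(z)$, keeps $p(z|v)$ and $p(z)$ uniformly away from zero on the support of $\mu$. Outputs with $\mu(z)=0$ contribute nothing, since $p(z|v)=0$ there. This makes the $O(\varepsilon^3)$ constant depend only on $\max_{v,z}|u_v(z)|$ and the alphabet sizes.
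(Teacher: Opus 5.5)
Your proof is correct and is essentially the paper's argument: both are a direct second-order Taylor expansion of the logarithms in $I(V:Z)$, in which the first-order term cancels. The only cosmetic difference is that the paper first writes $I(V:Z)=\sum_v p(v)\,D(p_Z(\cdot|v)\,\|\,\mu)-D(p_Z\,\|\,\mu)$ and expands each relative entropy with $(1+t)\ln(1+t)=t+\tfrac12t^2+O(t^3)$, using the centering condition to remove the linear terms, so the variance appears as a difference of second moments; your direct expansion of the log-ratio instead gives $\tfrac12\bigl(u_v(z)-\bar u(z)\bigr)^2$ term by term, and you additionally make the uniformity of the remainder explicit.
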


\begin{proof}
For any such $\mu$, 
\begin{equation}
\begin{aligned}
    I(V:Z) & = \sum_{v,z} p(v)p(z|v) \ln\frac{p(z|v)}{p(z)}                                                    \\
           & = \sum_{v,z} p(v)p(z|v) \biggl[\ln\frac{p(z|v)}{\mu(z)} - \ln\frac{p(z)}{\mu(z)}\biggr] \\
           & = \sum_v p(v)\,D\bigl(p_Z(\cdot | v)\,\|\,\mu\bigr) - D(p_Z\,\|\,\mu),
\end{aligned}
\end{equation}
where $D(\cdot \| \cdot)$ is the classical relative entropy and we used 
$\ln({p(z|v)}/p(z))=\ln({p(z|v)}/{\mu(z)})+\ln({\mu(z)}/{p(z)})$. The marginal
$p_Z$ is given by averaging over $v$ the conditional distribution $p(z|v)$, 
hence, $p_Z=\mu\,(1+ \varepsilon\,\EE_V[u_V])$.
Using $(1+t)\ln(1+t)=t+\tfrac12t^2+O(t^3)$ and $\sum_z\mu(z)u(z)=0$, we get for the two terms
\begin{equation}
\begin{aligned}
    & D\bigl(\mu(1 + \varepsilon u)\|\mu\bigr) = \frac{\varepsilon^2}{2}\sum_z\mu(z)u(z)^2 + O(\varepsilon^3), \\[3pt]
    & D\bigl(\mu(1 + \varepsilon\mathbb E_V[u_V])\|\mu\bigr)                                                   \\
    & \qquad \ \qquad = \frac{\varepsilon^2}{2} \sum_z\mu(z)\mathbb E_V[u_V(z)]^2 + O(\varepsilon^3).
\end{aligned}
\end{equation}
Finally, the difference of the two terms gives precisely the average variance term appearing in 
Eq.~\eqref{eq:mi_second_order}.
\end{proof}

For Alice's term, $V=(\mathbf R,A)$, $Z=M'$, and the perturbation is the channel
direction $w_{\rr,a}$ itself. For Bob's terms when $S=s$, the relevant
variable is his full information $Z=(M',B_s)$. Summing over $a$ and using
nonsignaling, $p(a,b|x,y)=p(b|y)p(a|b,x,y)$, its distribution given $\rr$ is
\begin{equation}
\begin{split}
p(m',b|\rr)&=\sum_a p(b|y^s_{m'})p(a|b,x_\rr,y^s_{m'}) p(m'|\rr, a)\\
&=\mu(m')\,p(b|y^s_{m'})\bigl[1+\varepsilon\,\zeta^{(s)}_\rr(m',b)\bigr],\\
\end{split}
\label{eq:bob_transcript}
\end{equation}
where $y^s_{m'}=f_Y(s,m')$ and
\beq
\zeta^{(s)}_\rr(m',b) :=\sum_ap(a|b,x_\rr,y^s_{m'})\,w_{\rr,a}(m').
\eeq
The distribution $\mu(m')p(b|y^s_{m'})$ does not
depend on $\rr$, and $\zeta^{(s)}_\rr(m',b)$ can be seen as the posterior average of the
channel direction given Bob's outcome: Bob sees Alice's data $(\rr, a)$ only through what
$b$ reveals about $a$.

To obtain the second-order form of the IC functional, we apply Lemma~\ref{lemma:mi_near_zero} to Alice's
term, and to Bob's terms written as $I(R_s:M',B_s \mid \mathbf R_{<s})=I(\mathbf R_{\leq s}:M',B_s)-I(\mathbf R_{<s}:M',B_s)$. We, then, obtain
\begin{prop}[Second-order form]
\label{prop:second_order_form}
For every single-copy protocol and every direction $w$,
\begin{equation}
\begin{gathered}
\Lambda=\frac{\varepsilon^2}2 \,\Phi_2(w)+O(\varepsilon^3),\\
\Phi_2(w)=\sum_{m'}\mu(m')\,w(m')^{\mathsf T}K_{f_X,f_Y^{m'}}(P,q)\,w(m'),
\end{gathered}
\label{eq:second_order_form}
\end{equation}
where $w(m')=(w_{\rr,a}(m'))_{\rr,a}$, and $f_Y^{m'}(s):=f_Y(s,m')$ is the
fixed-message strategy and the symmetric matrix $K_f(P,q)$, for a fixed-message
strategy $f=(f_X,f_Y)$ with $y_s=f_Y(s)$ and $\zeta^{(s)}_\rr(b)=\sum_a
p(a|b,x_\rr,y_s)\,w_{\rr,a}$,  is defined as
\begin{equation}
\begin{split}
&w^{\mathsf T}K_f\,w \coloneqq \Var(w_{\mathbf R,A})
-\sum_{s=0}^{n-1}\sum_bp(b|y_s)\\
&\ \ \times\Bigl[\Var\bigl(\EE[\zeta^{(s)}_{\mathbf R}(b)|\mathbf R_{\leq s}]\bigr)-\Var\bigl(\EE[\zeta^{(s)}_{\mathbf R}(b)|\mathbf R_{<s}]\bigr)\Bigr].
\end{split}
\label{eq:hessian_matrix}
\end{equation}
Here, the joint distribution of $(\mathbf R, A)$ is $q(\rr)p(a|x_\rr)$.
\end{prop}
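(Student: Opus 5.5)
The plan is to apply Lemma~\ref{lemma:mi_near_zero} term by term to the margin $\Lambda$ of Eq.~\eqref{eq:margin_def}, and then reorganize the resulting sums over $m'$.

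\textbf{Alice's term.} Take $V=(\mathbf R,A)$ with law $q(\rr)p(a|x_\rr)$ and $Z=M'$. The channel has exactly the form required by Lemma~\ref{lemma:mi_near_zero}, with $u_V=w_{\mathbf R,A}$. Hence
\[
I(\mathbf R,A:M')=\tfrac{\varepsilon^2}{2}\sum_{m'}\mu(m')\Var\bigl(w_{\mathbf R,A}(m')\bigr)+O(\varepsilon^3).
\]
The variance is a quadratic form in the vector $w(m')$.

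\textbf{Bob's terms.} For fixed $s$ I would use the chain rule
\[
I(R_s:M',B_s\mid\mathbf R_{<s})=I(\mathbf R_{\le s}:M',B_s)-I(\mathbf R_{<s}:M',B_s).
\]
Each term on the right is a mutual information between a sub-register $V'$ of $\mathbf R$ and $Z=(M',B_s)$.

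By Eq.~\eqref{eq:bob_transcript}, conditioning on $\rr$ gives $p(m',b|\rr)=\nu(m',b)\bigl[1+\varepsilon\zeta^{(s)}_\rr(m',b)\bigr]$, with reference distribution $\nu(m',b)=\mu(m')p(b|y^s_{m'})$ independent of $\rr$. Nonsignaling is essential here, since it makes $p(b|y)$ independent of $x_\rr$.

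Conditioning on a coarser variable $V'$ instead averages $\zeta$: $p(m',b|v')=\nu\,[1+\varepsilon\,\EE[\zeta^{(s)}_{\mathbf R}|V'=v']]$.

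The centering condition needed by the lemma is $\sum_{m',b}\nu\,\EE[\zeta|v']=0$. To check it, I would note that $\sum_b p(b|y)p(a|b,x,y)=p(a|x)$. Then for each $\rr$,
\[
\sum_{m',b}\nu\,\zeta^{(s)}_\rr=\sum_{m'}\mu(m')\sum_a p(a|x_\rr)w_{\rr,a}(m').
\]
This vanishes by Eq.~\eqref{eq:general_channel} after swapping the sums over $m'$ and $a$.

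Applying Lemma~\ref{lemma:mi_near_zero} to both terms and subtracting gives
\[
\tfrac{\varepsilon^2}{2}\sum_{m',b}\mu(m')p(b|y^s_{m'})\bigl[\Var(\EE[\zeta|\mathbf R_{\le s}])-\Var(\EE[\zeta|\mathbf R_{<s}])\bigr]+O(\varepsilon^3).
\]

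\textbf{Assembly.} Since $\zeta^{(s)}_\rr(m',b)$ depends on $w$ only through $w(m')$ and on the strategy only through $y^s_{m'}=f_Y^{m'}(s)$, each $m'$ contributes separately. Collecting terms per $m'$ gives $\mu(m')\,w(m')^{\mathsf T}K_{f_X,f_Y^{m'}}w(m')$ with $K$ as in Eq.~\eqref{eq:hessian_matrix}.

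$K$ is symmetric because every variance of a linear function of $w$ is a symmetric quadratic form. For $n$ terms with finite alphabets the $O(\varepsilon^3)$ remainders add up to $O(\varepsilon^3)$.

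\textbf{Main obstacle.} The step needing most care is verifying that the hypotheses of Lemma~\ref{lemma:mi_near_zero} hold for Bob's coarse-grained conditionals: the reference distribution must be independent of $V'$, and the mean-zero condition must hold. Both rely on nonsignaling, on the fact that $w$ is centered for every fixed $a$ (so also after weighting by $p(a|x_\rr)$), and on $\mu$ having full support; the latter keeps the expansion of $(1+t)\ln(1+t)$ uniform.
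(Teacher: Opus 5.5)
Your proposal is correct and follows the paper's proof essentially step by step. You apply Lemma~\ref{lemma:mi_near_zero} to Alice's term with $V=(\mathbf R,A)$, split Bob's terms by the chain rule into $I(\mathbf R_{\le s}:M',B_s)-I(\mathbf R_{<s}:M',B_s)$ with the $\rr$-independent reference $\mu(m')p(b|y^s_{m'})$ and the same nonsignaling-based centering check, and then group by $m'$ to obtain the fixed-message forms $K_{f_X,f_Y^{m'}}$ (the paper additionally notes explicitly that outcomes with $p(b|y)=0$ are simply omitted, which your argument covers implicitly since their reference weight vanishes).
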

A detailed proof is provided in Appendix~\ref{app:second_order_form}. 
Notice that for the special case of a binary message with uniform 
$\mu$ and message-independent $f_Y$, the directions satisfy $w(1)=-w(0)$, and 
$\Phi_2=w(0)^{\mathsf T}K_fw(0)$. In this
case $K_f$ is, up to a multiplicative factor, the Hessian of $\Lambda$, i.e., the matrix of
second derivatives with respect to $T_M$, evaluated at the zero-capacity channel, as one would have expected 
from a second-order expansion. 
On the other hand, the effect of the channel distribution $\mu$ is simply to give a convex 
mixture of matrices $K_{f_X,f_Y^{m'}}$, each corresponding to a fixed message-independent
strategy. We now show that for this reason it is enough to restrict ourselves to 
message-independent settings $f_Y$ and binary messages.

First, it is helpful to define the set characterized by this second-order approximation as
\begin{equation}
\set{IC}^\mathrm{2nd}(\mathsf B) := \left\{P\in\set{NS}(\mathsf B) \middle|
\Phi_2(w) \geq 0, \forall n,d,d', q,f,\mu,w \right\},
\end{equation}
namely, $\set{IC}^\mathrm{2nd}(\mathsf B)$ contains the behaviors that satisfy  $\Phi_2(w) \geq 0$ for all single-copy protocols, RAC parameters $(n,d,d')$, full-support noisy channels $\mu$, and directions $w$
in Eq.~\eqref{eq:general_channel}.

If $\Phi_2(w)<0$, Proposition~\ref{prop:second_order_form} gives $\Lambda<0$ for
all sufficiently small $\varepsilon>0$, and the protocol with channel
\eqref{eq:general_channel} gives an explicit violation. Hence
$\set{IC}^{(1)}(\mathsf B)\subseteq\set{IC}^\mathrm{2nd}(\mathsf B)$.

\begin{coro}
\label{coro:ic2_characterization}
$P\in\set{IC}^\mathrm{2nd}(\mathsf B)$ if and only if $K_f(P,q)\succeq0$ for all RAC
parameters $(n,d)$, all database distributions $q$, all $f_X$ and all message-independent $f_Y$.
\end{coro}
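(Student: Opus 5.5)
The plan is to prove the two implications separately. Both follow from the structure of $\Phi_2(w)$ in Proposition~\ref{prop:second_order_form}: it is a $\mu$-weighted sum, over messages $m'$, of quadratic forms $K_{f_X,f_Y^{m'}}(P,q)$. Each of these forms belongs to a fixed-message, hence message-independent, strategy. Two facts should be recorded first. The matrix $K_f(P,q)$ in Eq.~\eqref{eq:hessian_matrix} does not depend on $d'$ or on $\mu$. The map $w\mapsto w^{\mathsf T}K_f w$ is a genuine quadratic form on $\mathbb R^{d^n d_A}$, because each variance term is the variance of a random variable that is linear in $w$: $w_{\mathbf R,A}$ and the conditional expectations of $\zeta^{(s)}_{\mathbf R}(b)$ are all linear in the entries $w_{\rr,a}$. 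Positive semidefiniteness of $K_f$ is therefore meaningful and equivalent to nonnegativity of this form.

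($\Leftarrow$) Assume $K_f(P,q)\succeq0$ for all $(n,d)$, $q$, $f_X$ and message-independent $f_Y$. Take any protocol in the definition of $\set{IC}^\mathrm{2nd}(\mathsf B)$, that is, any $n,d,d',q,f_X,f_Y,\mu,w$. For each fixed $m'$, the map $f_Y^{m'}(s)=f_Y(s,m')$ is a message-independent strategy, so by hypothesis $w(m')^{\mathsf T}K_{f_X,f_Y^{m'}}(P,q)\,w(m')\ge0$. Since $\mu(m')>0$, Eq.~\eqref{eq:second_order_form} gives $\Phi_2(w)\ge0$. This holds for every protocol, so $P\in\set{IC}^\mathrm{2nd}(\mathsf B)$.

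($\Rightarrow$) Argue by contraposition. Suppose that for some $(n,d)$, $q$, $f_X$ and message-independent $f_Y$ there is a vector $v\in\mathbb R^{d^n d_A}$ with $v^{\mathsf T}K_f(P,q)\,v<0$. Build the following protocol:
\begin{itemize}
\item binary message, $d'=2$;
\item uniform center $\mu(0)=\mu(1)=1/2$, which has full support;
\item Bob's setting $f_Y(s,m'):=f_Y(s)$, independent of the message;
\item perturbation direction $w(0)=v$, $w(1)=-v$.
\end{itemize}
The constraint $\sum_{m'}\mu(m')w_{\rr,a}(m')=0$ in Eq.~\eqref{eq:general_channel} holds by construction. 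Positivity of $T_M$ holds for small $\varepsilon$, and no other restriction is placed on $w$. Then $f_Y^{m'}=f_Y$ for both messages, so Eq.~\eqref{eq:second_order_form} gives
$\Phi_2(w)=\tfrac12 v^{\mathsf T}K_f v+\tfrac12(-v)^{\mathsf T}K_f(-v)=v^{\mathsf T}K_f v<0$.
Hence $P\notin\set{IC}^\mathrm{2nd}(\mathsf B)$.

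The only potential obstacle is checking that every direction $v$ is admissible, i.e.\ that the sole constraint on $w$ is the centering condition. The pair $(v,-v)$ with uniform $\mu$ always meets that condition. Entries $v_{\rr,a}$ with $q(\rr)p(a|x_\rr)=0$ do not contribute to any variance, so they cause no trouble. With that checked, the equivalence follows.
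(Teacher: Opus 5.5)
Your proof is correct and follows the paper's argument. For ($\Leftarrow$) you use that $\Phi_2$ is a $\mu$-weighted sum of forms $K_{f_X,f_Y^{m'}}$, each belonging to a message-independent strategy; for ($\Rightarrow$) you use the binary-message, uniform-$\mu$ protocol with $w(0)=-w(1)=v$, which gives $\Phi_2=v^{\mathsf T}K_f v<0$. Your extra checks are valid details that the paper leaves implicit: that $K_f$ does not depend on $d'$ or $\mu$, that every $v$ is an admissible direction, and that positivity holds for small $\varepsilon$.
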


\begin{proof}
If all forms $K_f$ are positive semidefinite, every term of
Eq.~\eqref{eq:second_order_form} is nonnegative. Conversely, if
$v^{\mathsf T}K_fv<0$, the protocol with a binary message, uniform $\mu$,
message-independent $f_Y$ and $w(0)=-w(1)=v$ gives $\Phi_2=v^{\mathsf T}K_fv<0$.
\end{proof}

We conclude that message-dependent settings and larger message alphabets give no
additional second-order constraints, as it suffices
to consider binary messages, uniform $\mu$, and message-independent Bob settings 
when testing membership in $\set{IC}^\mathrm{2nd}$. The above argument also provides 
an intuition for the emergence of positive semidefinite constraints in the 
L'Hôpital-rule derivation by Gachechiladze and Miklin \cite{gachechiladze:2026}. 
Indeed, the condition $\Lambda \geq 0$ together with the vanishing of $\Lambda$ and its 
first derivative for the completely noisy channel implies that the Hessian must be positive 
semidefinite, otherwise one could find a direction in the channel parameters where $\Lambda<0$.

\subsection{Application to the CHSH scenario}
\label{sec:second_order}

We now evaluate the second-order constraints in the CHSH scenario $\mathsf B = (2,2,2,2)$, 
and show that they characterize exactly the macroscopically local set $\set{Q}^1$.
Since second-order constraints are necessary for generalized IC, it follows that
in this scenario every behavior satisfying single-copy generalized IC 
is macroscopically local.

From now on the outcomes are binary, $a,b\in\{0,1\}$ and we use the correlator
parametrization of the behavior~\cite{brunner:2014}:
\begin{equation}
\begin{split}
p(a,b|x,y)=\frac 1 4 \bigl[1&+(-1)^a\langle a_x\rangle+(-1)^b\langle b_y\rangle\\
&+(-1)^{a\oplus b}\langle a_xb_y\rangle\bigr],\\
\end{split}
\label{eq:biased_behavior}
\end{equation}
where
\beq
\begin{gathered}
    \langle a_x\rangle := \sum_{a}(-1)^a p(a|x), \quad \langle b_y\rangle := \sum_{b} (-1)^b p(b|y), \\
    \langle a_xb_y\rangle := \sum_{a,b}(-1)^{a+b}p(a,b|x,y).
\end{gathered}
\eeq
We  assume nondegenerate marginals, i.e., $|\langle a_x\rangle|<1$ and
$|\langle b_y\rangle|<1$; the degenerate case is trivial and treated explicitly 
in the proof of Theorem~\ref{thm:second_order_is_ML}. The relevant 
coordinates are given by the centered and normalized outcomes
\begin{equation}
\xi_x(a)=\frac{(-1)^a-\langle a_x\rangle}{\sqrt{1-\langle a_x\rangle^2}},
\qquad
\eta_y(b)=\frac{(-1)^b-\langle b_y\rangle}{\sqrt{1-\langle b_y\rangle^2}},
\label{eq:centered_outcomes}
\end{equation}
which satisfy $\EE[\xi_x]=\EE[\eta_y]=0$ and
$\EE[\xi_x^2]=\EE[\eta_y^2]=1$ (where expectation values are taken over outcomes for fixed settings), 
and by the renormalized correlators
\begin{equation}
D_{x,y}:=\EE\bigl[\xi_x \,\eta_y \bigr]
=\frac{\langle a_xb_y\rangle-\langle a_x\rangle\langle b_y\rangle}
{\sqrt{(1-\langle a_x\rangle^2)(1-\langle b_y\rangle^2)}},
\label{eq:renormalized_correlators}
\end{equation}
which satisfy $|D_{x,y}|\leq1$, since for normalized $\xi_x$ and $\eta_y$ we have
\begin{equation}
\begin{split}
0\leq\EE\bigl[(\xi_x-\lambda\,\eta_y)^2\bigr]=1-2\lambda D_{x,y}+\lambda^2,
\end{split}
\label{eq:D_bound}
\end{equation}
with the right-hand side attaining the minimum for $\lambda=D_{x,y}$.

These are the variables in which macroscopic locality is characterized in this
scenario~\cite{navascues:2007,navascues:2010}. 
Since $\{1,\xi_x\}$ is a basis of the functions of
a binary outcome, every channel can be written uniquely as
\begin{equation}
w_{\rr,a}=\alpha_\rr+\beta_\rr\,\xi_{x_\rr}(a),
\qquad x_\rr=f_X(\rr),
\label{eq:alpha_beta_split}
\end{equation}
where $\alpha_\rr$ is the outcome-independent part. 

Bob accesses Alice's outcome only through his own outcome $b$. For fixed
inputs $(x,y)$, the conditional mean $\EE[\xi_x(A)|B=b]$ is a function of
the binary variable $b$ with zero mean, since $\EE[\xi_x]=0$. Every function
of $b$ is a linear combination of $1$ and $\eta_y(b)$, and the zero-mean condition 
implies that the constant term vanishes, so that $\EE[\xi_x(A)|B=b]=\lambda\,\eta_y(b)$. 
Multiplying by $\eta_y(b)$ and averaging over $b$, with $\EE[\eta_y^2]=1$, gives
$\lambda=\EE[\xi_x(A)\eta_y(B)]=D_{x,y}$, i.e.,
\begin{equation}
\EE[\xi_x(A)|B=b]=D_{x,y}\,\eta_y(b).
\label{eq:xi_given_b}
\end{equation}
Inserting Eq.~\eqref{eq:alpha_beta_split} into the definition of
$\zeta^{(s)}_\rr(b)=\EE[w_{\rr,A}|\rr,b]$ of Proposition~\ref{prop:second_order_form},
with Bob's setting $y_s$, then gives
\begin{equation}
\zeta^{(s)}_\rr(b)=\alpha_\rr+\beta_\rr\,D_{x_\rr,y_s}\,\eta_{y_s}(b).
\label{eq:zeta_centered}
\end{equation}

We now have all the elements to compute explicitly $w^{\mathsf T}K_fw$.
\begin{lemma}[Second-order form for binary outcomes]
\label{lemma:second_order_structure}
Let $P$ be a nonsignaling behavior with binary outcomes
and nondegenerate marginals.
For a message-independent strategy $f$, in the coordinates
$(\alpha,\beta)$ the form $w^{\mathsf T}K_fw$ does not depend on $\alpha$, and
\begin{equation}
\begin{split}
w^{\mathsf T}K_fw=\sum_\rr q(\rr)\,\beta_\rr^2-\sum_{s=0}^{n-1}
\Bigl\{&\EE\bigl[\EE[g^{(s)}_{\mathbf R}|\mathbf R_{\leq s}]^2\bigr]\\
&-\EE\bigl[\EE[g^{(s)}_{\mathbf R}|\mathbf R_{<s}]^2\bigr]\Bigr\},
\end{split}
\label{eq:phi2_beta}
\end{equation}
where $g^{(s)}_\rr:=\beta_\rr\,D_{x_\rr,f_Y(s)}$ and expectations are taken over
$\mathbf R$, whose distribution is $q(\rr)$. In particular, this form
depends on $P$ only through the renormalized correlators $D_{x,y}$.
\end{lemma}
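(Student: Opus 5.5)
We substitute the decomposition of Eq.~\eqref{eq:alpha_beta_split} and the conditional-mean formula of Eq.~\eqref{eq:zeta_centered} directly into the definition of $w^{\mathsf T}K_fw$ in Eq.~\eqref{eq:hessian_matrix}, and compute each variance term separately. The three ingredients are the zero-mean and unit-variance properties $\EE[\xi_x]=0$, $\EE[\xi_x^2]=1$, $\EE[\eta_y]=0$, $\EE[\eta_y^2]=1$, together with the sum $\sum_b p(b|y_s)(\cdot)$ acting as an expectation over $B$.

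\emph{Alice's term.} With joint distribution $q(\rr)p(a|x_\rr)$, conditioning on $\mathbf R$ gives $\EE[w_{\mathbf R,A}|\mathbf R]=\alpha_{\mathbf R}$ and $\Var(w_{\mathbf R,A}|\mathbf R)=\beta_{\mathbf R}^2$. The law of total variance then yields
\begin{equation*}
\Var(w_{\mathbf R,A})=\sum_\rr q(\rr)\beta_\rr^2+\Var(\alpha_{\mathbf R}).
\end{equation*}

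\emph{Bob's terms.} For fixed $s$ and $b$, Eq.~\eqref{eq:zeta_centered} gives $\zeta^{(s)}_\rr(b)=\alpha_\rr+g^{(s)}_\rr\,\eta_{y_s}(b)$. Since $y_s$ does not depend on $\rr$ (message-independent strategy), $\eta_{y_s}(b)$ is a constant for this conditional expectation over $\mathbf R$. Writing $\mathcal G$ for either $\mathbf R_{\leq s}$ or $\mathbf R_{<s}$,
\begin{equation*}
\EE[\zeta^{(s)}_{\mathbf R}(b)|\mathcal G]=\EE[\alpha_{\mathbf R}|\mathcal G]+\eta_{y_s}(b)\,\EE[g^{(s)}_{\mathbf R}|\mathcal G].
\end{equation*}
Expanding the variance produces three pieces: an $\alpha$-variance, a cross covariance linear in $\eta_{y_s}(b)$, and $\eta_{y_s}(b)^2$ times the variance of $\EE[g|\mathcal G]$. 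Averaging over $b$ with weights $p(b|y_s)$ kills the cross term because $\EE[\eta]=0$, and replaces $\eta^2$ by $1$.

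\emph{Cancellation of $\alpha$.} Summing the resulting $\alpha$ pieces gives
\begin{equation*}
\sum_{s=0}^{n-1}\bigl[\Var(\EE[\alpha_{\mathbf R}|\mathbf R_{\leq s}])-\Var(\EE[\alpha_{\mathbf R}|\mathbf R_{<s}])\bigr],
\end{equation*}
which telescopes to $\Var(\EE[\alpha_{\mathbf R}|\mathbf R])-\Var(\text{const})=\Var(\alpha_{\mathbf R})$. This exactly cancels the $\Var(\alpha_{\mathbf R})$ from Alice's term, so the form is independent of $\alpha$.

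\emph{Remaining $g$ terms.} What is left is
\begin{equation*}
\Var(\EE[g|\mathbf R_{\leq s}])-\Var(\EE[g|\mathbf R_{<s}]).
\end{equation*}
Both conditional expectations share the same mean $\EE[g]$, so the squared-mean subtractions cancel, and the difference equals the difference of second moments appearing in Eq.~\eqref{eq:phi2_beta}. Finally, dependence on $P$ enters only through $D_{x,y}$ inside $g$; the marginals $p(a|x)$ and $p(b|y)$ have been absorbed into the normalized $\xi$ and $\eta$ and their moment identities.

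\emph{Main obstacle.} The delicate point is bookkeeping in the telescoping. It requires that $\mathbf R_{\leq n-1}=\mathbf R$ and $\mathbf R_{<0}=\emptyset$, and it uses the message-independence of $f_Y$ to pull $\eta_{y_s}(b)$ out of the conditional expectation. The cancellation of the cross term also depends on the $b$-weights $p(b|y_s)$ being independent of $\rr$, which follows from nonsignaling.
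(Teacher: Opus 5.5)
Your proposal is correct and follows the paper's proof essentially step by step. You split Alice's variance into $\Var(\alpha_{\mathbf R})+\sum_\rr q(\rr)\beta_\rr^2$ (via the law of total variance, where the paper notes directly that the cross term vanishes), average over $b$ with weights $p(b|y_s)$ to remove the covariance term, telescope the $\alpha$-parts $V_{s+1}-V_s$ to cancel Alice's $\Var(\alpha_{\mathbf R})$, and use the tower property to turn the variance differences into the second-moment differences of Eq.~\eqref{eq:phi2_beta}.
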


The explicit proof is given in Appendix~\ref{app:centered_outcomes}. The main
idea is the following. The terms mixing $\alpha$ and $\beta$ vanish 
because $\xi$ and $\eta$ have zero mean. The $\alpha$ part does not 
depend on $a$, so its contribution is independent of Bob's outcome $b$, 
and his terms reduce to the increments of $\Var(\EE[\alpha_{\mathbf R}|\mathbf R_{\leq s}])$
along $s$, which add up to Alice's term $\Var(\alpha_{\mathbf R})$. The
marginals enter only through the normalization of $\xi$ and $\eta$. In
Eq.~\eqref{eq:phi2_beta}, the first term is the variance of the channel
direction due to Alice's outcome, and the $s$-th term in braces is 
the difference of the variances for Bob's estimate of 
$g^{(s)}_\rr$ before and after conditioning on $R_s$.

Let us now consider the simplest case $n=d=2$. We write 
$\rr=(i,j)$ and parametrize the database distribution as $q(i,j)=p_i\tau_i(j)$, with 
$p_0=p$, $p_1=1-p$, $\tau_i(0)=t_i$ and $\tau_i(1)=1-t_i$. 
The rows of the truth table of $f_X$ are the maps $j\mapsto f_X(i,j)$; 
row $i$ collects the strings with $r_0=i$.

\begin{prop}[Second order in $(2,2,2,2)$]
\label{prop:second_order_2222}
Let $P$ have nondegenerate marginals. Then $K_f(P,q)\succeq0$ for all database distributions
and all strategies with $n=d=2$ if and only if $L(t)\succeq0$ for all
$t\in[0,1]$, where
\begin{equation}
L(t)=
\begin{pmatrix}
t\,\ell_0(t) & t(1-t)\,\kappa\\
t(1-t)\,\kappa & (1-t)\,\ell_1(t)
\end{pmatrix},
\label{eq:L_matrix}
\end{equation}
with
\begin{equation}
\begin{gathered}
\ell_0(t)=1-(1-t)D_{0,1}^2-t\,D_{0,0}^2,\\
\ell_1(t)=1-t\,D_{1,1}^2-(1-t)D_{1,0}^2,\\
\kappa=D_{0,1}D_{1,1}-D_{0,0}D_{1,0}.
\end{gathered}
\label{eq:L_entries}
\end{equation}
\end{prop}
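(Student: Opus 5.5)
The plan is to start from Lemma~\ref{lemma:second_order_structure}, which already reduces $w^{\mathsf T}K_fw$ to a quadratic form in the variables $\beta_\rr$ alone, with coefficients depending on $P$ only through the $D_{x,y}$. For $n=d=2$, write $\rr=(i,j)$ and $g^{(s)}_{ij}=\beta_{ij}D_{x_{ij},y_s}$, with $y_s=f_Y(s)$. The two braces then have transparent meanings.
\begin{itemize}
\item The $s=0$ brace is $\Var_{R_0}\bigl(\EE[g^{(0)}|R_0]\bigr)=p(1-p)\bigl(\bar g^{(0)}_0-\bar g^{(0)}_1\bigr)^2$, where $\bar g^{(0)}_i=\sum_j\tau_i(j)g^{(0)}_{ij}$ is the row average.
\item The $s=1$ brace is the within-row variance $\sum_i p_i\,t_i(1-t_i)\bigl(g^{(1)}_{i0}-g^{(1)}_{i1}\bigr)^2$.
\end{itemize}
So the form is an explicit $4\times4$ quadratic form in $(\beta_{00},\beta_{01},\beta_{10},\beta_{11})$. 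It is the diagonal $\mathrm{diag}(p_it_i,\,p_i(1-t_i))$ minus one rank-one term coupling the rows and two rank-one terms acting within each row.

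Next I would classify the strategies. There are $4$ choices of $(y_0,y_1)$ and $16$ truth tables $f_X$. Relabelings reduce these to a few classes: exchanging $x\leftrightarrow x\oplus1$, exchanging $y$ values, swapping rows $i$, and swapping $j$ within a row. Each relabeling acts on the $D$'s by a signed permutation that preserves the statement, since $L(t)$ only involves $D^2$ and the cross combination $\kappa$.

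The degenerate classes should follow directly from $|D_{x,y}|\le1$ (Eq.~\eqref{eq:D_bound}) and Cauchy--Schwarz. These are the classes with $y_0=y_1$, or where some row of $f_X$ is constant. In them the subtracted terms are bounded by the diagonal, because each subtracted rank-one piece is a variance of something no larger than $\sum q\beta^2$, or the pieces act on orthogonal components. The remaining class is the van~Dam-type one, $y_s=s$ with both rows non-constant, for example $x=j$ in row $0$ and $x=j\oplus i$ in row $1$ up to symmetry. This class should produce $L$.

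For that class I would eliminate variables with a Schur-complement argument. The $s=0$ term depends on $\beta$ only through the two row averages. Within each row I would therefore split $\beta_{i\cdot}$ into the component that changes $\bar g^{(0)}_i$ and an orthogonal component. Minimizing over the latter, for each fixed value of the row average, is a one-dimensional quadratic problem per row. It leaves a $2\times2$ form in $(\bar g^{(0)}_0,\bar g^{(0)}_1)$ whose entries are rational in $t_i$ and the $D$'s. After rescaling by $p_i$, the factor $p(1-p)$ in front of the coupling should be absorbable. The claim to be checked is then that positivity for all $p$ is equivalent to positivity of a matrix whose diagonal involves $\ell$-type quantities and whose off-diagonal involves $\kappa$.

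The step I expect to be the main obstacle is collapsing the remaining parameters $(t_0,t_1)$ and $p$ to the single parameter $t$ of $L(t)$, with entries exactly $t\ell_0(t)$, $(1-t)\ell_1(t)$ and $t(1-t)\kappa$. I would first try to show that the worst case is the correlated choice $t_1=1-t_0$, or more likely that the $t_i$-dependence of the reduced $2\times2$ determinant factorizes. The symmetric choice is suggested by the roles of $D_{0,1},D_{0,0}$ versus $D_{1,1},D_{1,0}$ in $\ell_0,\ell_1$. I would then check directly that $p$ only rescales the two rows by $p_i$, which is a congruence and does not affect positive semidefiniteness.

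The converse direction needs an explicit witness. If $L(t)\not\succeq0$ for some $t$, take the corresponding database distribution, the van~Dam strategy, and the minimizing $\beta$ from the Schur step. This gives a negative direction for $K_f$, which completes the ``only if'' part.
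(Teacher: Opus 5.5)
There is a genuine gap. Your setup is fine: the $4\times4$ form, the two braces, and the harmlessness of $y_0=y_1$. The trouble is the step you single out as the main obstacle. It rests on a wrong picture of where $L(t)$ comes from, and the mechanism you propose for it is false.

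Write $C_i=\bar g^{(0)}_i$ and $A_i=\sum_j\tau_i(j)\beta_{ij}^2(1-D_{x_{ij},1}^2)+(\bar g^{(1)}_i)^2$; both depend only on row $i$. The form is then $\Phi_2=pA_0+(1-p)A_1-p(1-p)(C_0-C_1)^2$.

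\emph{$p$ is not removable by a congruence.} After rescaling the rows by $\sqrt{p_i}$, the cross term still carries $p$. In your own reduced form in $(C_0,C_1)$, with $\min A_i=a_iC_i^2$ at fixed $C_i$, positive semidefiniteness at a given $p\in(0,1)$ reads $a_0\ge1-p$, $a_1\ge p$, $a_0a_1\ge pa_0+(1-p)a_1$. This genuinely depends on $p$.

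\emph{The missing idea is that positivity for all $p$ decouples the rows.} The conditions above hold for all $p$ exactly when $a_0,a_1\ge1$, so your Schur step, carried out honestly, lands on per-row inequalities, not on a row-coupled matrix. Directly: if $A_i\ge C_i^2$ for both rows, then $\Phi_2\ge pC_0^2+(1-p)C_1^2-p(1-p)(C_0-C_1)^2=(pC_0+(1-p)C_1)^2\ge0$. Conversely, setting $\beta_{1\cdot}=0$ gives $\Phi_2=p[A_0-(1-p)C_0^2]$, and $p\to0$ forces $A_0\ge C_0^2$.

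For a row with $x_j=j$ one computes $A-C^2=\beta^{\mathsf T}L(t)\beta$ in the within-row coordinates $(\beta_{i0},\beta_{i1})$, with $t=t_i$. So the labels of $\ell_0,\ell_1$ are Alice's two settings, and $\kappa$ couples the two entries of a single row, not the two rows. There is nothing to collapse: $t_0$ and $t_1$ are independent, no correlated choice like $t_1=1-t_0$ plays a role, and the extremal $p$ are the endpoints.

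The converse witness must accordingly be: $f_Y(s)=s$, a row $x_{0j}=j$ with $t_0=t$, $\beta_{0\cdot}$ a negative direction of $L(t)$, $\beta_{1\cdot}=0$, and $p>0$ small. Then $\Phi_2=p[\beta^{\mathsf T}L(t)\beta+pC_0^2]<0$.

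\emph{Two further corrections.} First, a strategy in which only one row of $f_X$ is constant is not degenerate. With $f_Y(s)=s$, row $0$ constant and $x_{1j}=j$, the same decoupling (now with $p\to1$) yields $L(t_1)\succeq0$. For the PR box ($D_{1,1}=-1$, all other $D_{x,y}=1$), taking $t_1=1/2$, $\beta_{1\cdot}=(1,1)$ and $\beta_{0\cdot}=0$ gives $\Phi_2=-p(1-p)<0$. Only the constant row itself is harmless, since there $A-C^2\ge(1-D_{x,0}^2)(\sum_j\tau_j\beta_j)^2$.

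Second, reducing $f_Y(s)=1-s$ to $f_Y(s)=s$ does not follow merely from $L$ involving only $D^2$ and $\kappa$. Exchanging Bob's settings sends $\ell_i(t)\to\ell_i(1-t)$ and $\kappa\to-\kappa$. The resulting matrix equals $M_tL(1-t)M_t$ with $M_t=\mathrm{diag}(\sqrt{t/(1-t)},-\sqrt{(1-t)/t})$. The family of conditions is therefore preserved only after the substitution $t\to1-t$ and this congruence, which you need to show.
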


A detailed proof is provided in Appendix~\ref{app:reduction_2222}. The main idea is 
simply to optimize over all possible strategies for Alice and Bob and all database distributions. 
Computing explicitly the expression in Eq.~\eqref{eq:phi2_beta}, the resulting
expression is $L(t)$, which only depends the database distribution parameter $t$. 
The final step is precisely to eliminate the parameter $t$ from the positivity 
condition $L(t)\succeq0$.

\begin{prop}[TLM inequality] $K_f(P,q)\succeq0$ for all database distributions and all strategies with $n=d=2$ if and only if
\begin{equation}
\begin{split}
\bigl|D_{0,0}D_{1,1}-D_{0,1}D_{1,0}\bigr|
\leq{}&\sqrt{(1-D_{0,1}^2)(1-D_{1,0}^2)}\\
&+\sqrt{(1-D_{0,0}^2)(1-D_{1,1}^2)},
\end{split}
\label{eq:tlm_landau}
\end{equation}
namely, the Tsirelson--Landau--Masanes (TLM) criterion in the form of
Landau~\cite{landau:1988}, written in the renormalized correlators. 
\label{prop:tlm_landau}
\end{prop}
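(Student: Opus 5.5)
By Proposition~\ref{prop:second_order_2222}, the condition ``$K_f(P,q)\succeq0$ for all database distributions and all strategies with $n=d=2$'' is equivalent to $L(t)\succeq0$ for every $t\in[0,1]$. The plan is therefore to eliminate $t$ from this family of $2\times2$ conditions. I will use the abbreviations
\[
A=1-D_{0,1}^2,\quad B=1-D_{0,0}^2,\quad C=1-D_{1,1}^2,\quad E=1-D_{1,0}^2,
\]
all of which lie in $[0,1]$ because $|D_{x,y}|\le1$ (Eq.~\eqref{eq:D_bound}). In these variables the diagonal entries are convex combinations,
\[
\ell_0(t)=(1-t)A+tB,\qquad \ell_1(t)=tC+(1-t)E .
\]

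First I will dispose of the diagonal conditions. Both $\ell_0(t)$ and $\ell_1(t)$ are nonnegative, so the diagonal entries $t\ell_0(t)$ and $(1-t)\ell_1(t)$ of $L(t)$ are nonnegative for all $t\in[0,1]$. At the endpoints $t=0$ and $t=1$ the off-diagonal entry vanishes, so $L(t)\succeq0$ holds there automatically. For $t\in(0,1)$, positive semidefiniteness is then equivalent to the determinant condition
\[
t(1-t)\bigl[\ell_0(t)\ell_1(t)-t(1-t)\kappa^2\bigr]\ge0 .
\]
Expanding the product and dividing by $t^2(1-t)\cdot(1-t)$ appropriately, I will set $r=(1-t)/t\in(0,\infty)$. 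The condition for all $t\in(0,1)$ then becomes
\[
r\,AE+\frac{BC}{r}+AC+BE\ \ge\ \kappa^2\qquad\text{for all } r>0 .
\]

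Next I will take the infimum over $r$. By AM--GM, the infimum of the left-hand side is $AC+BE+2\sqrt{ABCE}$. When $AE$ and $BC$ are both positive the infimum is attained at $r=\sqrt{BC/AE}$. When one of them vanishes, the infimum is approached as $r\to0$ or $r\to\infty$ and has the same value. The family of conditions is therefore equivalent to the single inequality
\[
\kappa^2\le\bigl(\sqrt{AE}+\sqrt{BC}\bigr)^2 ,
\]
that is,
\[
|\kappa|\le\sqrt{(1-D_{0,1}^2)(1-D_{1,0}^2)}+\sqrt{(1-D_{0,0}^2)(1-D_{1,1}^2)} .
\]
Its right-hand side coincides with that of Eq.~\eqref{eq:tlm_landau}. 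This step is elementary, with the degenerate cases handled by the limiting argument above.

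The main obstacle is matching the left-hand side $|\kappa|=|D_{0,1}D_{1,1}-D_{0,0}D_{1,0}|$ with $|D_{0,0}D_{1,1}-D_{0,1}D_{1,0}|$. These are not related by a trivial relabeling: swapping Alice's inputs leaves $\kappa$ invariant up to sign but leaves the right-hand side unchanged. I would therefore first recheck how the entries of $L(t)$ in Appendix~\ref{app:reduction_2222} are attached to the input labels. In particular, I would verify that the row-$i$ and column assignments of $f_X$ and $f_Y$ used to maximize over strategies cover the relabeling $y\mapsto y\oplus x$-type assignments. The optimization over all $f_X,f_Y$ should make the final set invariant under the full symmetry group of the CHSH square, and the minimization over strategies may select the diagonal pairing.

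If a direct relabeling does not close the gap, I would use the trigonometric parametrization $D_{x,y}=\cos\theta_{x,y}$ with $\theta_{x,y}\in[0,\pi]$. In that form Landau's inequality reduces to the standard arcsine (TLM) inequalities $|\sum\pm\theta_{x,y}|\le\pi$-type conditions, and I would check that the inequality obtained here defines the same region once the union over strategies is taken. Failing that, I would square both sides of each inequality and compare the resulting polynomial inequalities on $[-1,1]^4$.
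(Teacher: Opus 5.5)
Your reduction is the paper's: Proposition~\ref{prop:second_order_2222} gives $L(t)\succeq0$, the nonnegative diagonals reduce this to the determinant, and $t$ is then eliminated (your AM--GM in $r$ plays the role of the paper's Cauchy--Schwarz step). Your inequality $rAE+BC/r+AC+BE\ge\kappa^2$ and its infimum $AC+BE+2\sqrt{AE\,BC}$ are correct. The next line, however, is an algebraic slip. $AC+BE+2\sqrt{AE\,BC}$ is not $(\sqrt{AE}+\sqrt{BC})^2=AE+BC+2\sqrt{AE\,BC}$; the two differ by $(A-B)(E-C)$.

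The ``main obstacle'' you identify comes entirely from this slip. What you are missing is the identity
\[
\Theta^2-\kappa^2=(D_{0,0}^2-D_{0,1}^2)(D_{1,1}^2-D_{1,0}^2)=(AE+BC)-(AC+BE),\qquad \Theta=D_{0,0}D_{1,1}-D_{0,1}D_{1,0},
\]
which follows by expanding both squares (the mixed terms $\pm2D_{0,0}D_{0,1}D_{1,0}D_{1,1}$ cancel). With it, the correct condition $\kappa^2\le AC+BE+2\sqrt{AE\,BC}$ is equivalent to $\Theta^2\le AE+BC+2\sqrt{AE\,BC}=(\sqrt R+\sqrt S)^2$, which is exactly Eq.~\eqref{eq:tlm_landau}. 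The paper uses the same identity in another guise: it computes the $t^2$ coefficient of $\ell_0(t)\ell_1(t)-t(1-t)\kappa^2$ as $\Theta^2$, which gives $\det\widehat L(t)=(1-t)R+tS-t(1-t)\Theta^2$.

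Your fallback plans could not have closed the gap, because the inequality you reached, $|\kappa|\le\sqrt R+\sqrt S$, is not equivalent to TLM. Take the zero-marginal nonsignaling box with $D_{0,0}=D_{1,1}=1$ and $D_{0,1}=-D_{1,0}=d$, where $0<|d|\le\sqrt2-1$. Then $R=(1-d^2)^2$ and $S=0$, so $|\kappa|=2|d|\le1-d^2$ holds. But $|\Theta|=1+d^2>1-d^2$, so TLM fails; the corrected condition $\kappa^2\le AC+BE+2\sqrt{AE\,BC}=0$ fails too, as it should.

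Also, Proposition~\ref{prop:second_order_2222} has already absorbed the optimization over all $f_X,f_Y$ into $L(t)$. No relabeling or strategy choice remains to exploit, and an arcsine or polynomial comparison would only confirm the mismatch. Once the algebra is corrected via the identity above, your argument is complete.
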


We recall that for zero marginals $D_{x,y}=\langle a_xb_y\rangle$, 
and Eq.~\eqref{eq:tlm_landau} characterizes the quantum 
correlator region~\cite{landau:1988,masanes:2003}.
A detailed proof of Proposition \ref{prop:tlm_landau} is presented 
in Appendix \ref{app:proof_tlm}. We provide a sketch in the following. 
For $0<t<1$, we define $\widehat L(t)$ from the condition $L(t)=D_t\widehat L(t)D_t$ 
with $D_t=\operatorname{diag}(\sqrt t,\sqrt{1-t})$, explicitly
\begin{equation}
\widehat L(t) :=
\begin{pmatrix}
\ell_0(t) & \sqrt{t(1-t)}\,\kappa\\
\sqrt{t(1-t)}\,\kappa & \ell_1(t)
\end{pmatrix}.
\label{eq:hat_L}
\end{equation}
By construction $L(t)\succeq 0 \Leftrightarrow \widehat L(t) \succeq 0$.
One can reduce this condition to the equivalent one $\det\widehat L(t)\geq0$. 
Introducing the variables
\begin{equation}
\begin{gathered}
R:=(1-D_{0,1}^2)(1-D_{1,0}^2), \\
S:=(1-D_{0,0}^2)(1-D_{1,1}^2),\\
\Theta:=D_{0,0}D_{1,1}-D_{0,1}D_{1,0},
\end{gathered}
\label{eq:RS_Theta}
\end{equation}
one finds
\begin{equation}
\det\widehat L(t)=(1-t)\,R+t\,S-t(1-t)\,\Theta^2 .
\label{eq:det_hat_L}
\end{equation}
Equivalently, the positivity of $\widehat L(t)$ amounts to  $\Theta^2\leq R/t+S/(1-t)$.
The last step is to show that $\inf_t R/t+S/(1-t) = (\sqrt{R}+\sqrt{S})^2$
and substitute the values of $R$, $S$, and $\Theta$.

Finally, we can put everything together and prove
\begin{thm}[Second order is macroscopic locality]
\label{thm:second_order_is_ML}
In the $(2,2,2,2)$ scenario, $\set{IC}^\mathrm{2nd}(2,2,2,2)=\set{Q}^1(2,2,2,2)$.
\end{thm}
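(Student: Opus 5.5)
The plan is to prove the two inclusions separately, in both cases working through Corollary~\ref{coro:ic2_characterization}: $P\in\set{IC}^\mathrm{2nd}$ if and only if $K_f(P,q)\succeq0$ for all RAC parameters $(n,d)$, all $q$, all $f_X$ and all message-independent $f_Y$. The external input is the known characterization of macroscopic locality in the CHSH scenario~\cite{navascues:2007,navascues:2010}. For nondegenerate marginals it states that $P\in\set{Q}^1$ if and only if the renormalized correlators satisfy the TLM inequality~\eqref{eq:tlm_landau}. Equivalently, $P\in\set{Q}^1$ if and only if there are unit vectors $u_0,u_1,v_0,v_1$ in a real Hilbert space with $D_{x,y}=\langle u_x,v_y\rangle$; this is the first-level NPA moment matrix written in the normalized variables $\xi_x,\eta_y$.

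\emph{Inclusion $\set{IC}^\mathrm{2nd}\subseteq\set{Q}^1$.} Specializing the condition of Corollary~\ref{coro:ic2_characterization} to $n=d=2$, Proposition~\ref{prop:tlm_landau} gives that every $P\in\set{IC}^\mathrm{2nd}$ with nondegenerate marginals satisfies Eq.~\eqref{eq:tlm_landau}. By the characterization above, such a $P$ lies in $\set{Q}^1$.

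\emph{Inclusion $\set{Q}^1\subseteq\set{IC}^\mathrm{2nd}$.} This needs positivity for all $n,d,q$, not only $n=d=2$, so I will argue directly from Lemma~\ref{lemma:second_order_structure} using the vector representation. Set $h_\rr:=\beta_\rr u_{x_\rr}$, so that $g^{(s)}_\rr=\langle h_\rr,v_{y_s}\rangle$. Define the vector-valued martingale increments
\[
\Delta_s:=\EE[h_{\mathbf R}|\mathbf R_{\leq s}]-\EE[h_{\mathbf R}|\mathbf R_{<s}] .
\]
By the orthogonality of martingale increments (tower property), the $s$-th brace in Eq.~\eqref{eq:phi2_beta} equals $\EE[\langle\Delta_s,v_{y_s}\rangle^2]$. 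Since $\|v_{y_s}\|=1$, Cauchy--Schwarz bounds this by $\EE\|\Delta_s\|^2$. Summing over $s$ telescopes:
\[
\sum_s\EE\|\Delta_s\|^2=\EE\|h_{\mathbf R}\|^2-\|\EE h_{\mathbf R}\|^2\leq\EE\|h_{\mathbf R}\|^2=\sum_\rr q(\rr)\beta_\rr^2 ,
\]
where the last equality uses $\|u_x\|=1$. Hence $w^{\mathsf T}K_fw\geq0$ for every $n$, $d$, $q$, $f$, and the claim follows from Corollary~\ref{coro:ic2_characterization}.

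\emph{Degenerate marginals.} I expect this case, together with citing the correct form of the ML characterization, to be the main bookkeeping obstacle. If $|\langle a_x\rangle|=1$, then $A$ is deterministic for that setting, so $w_{\rr,a}$ has only its $\alpha$ part. The Lemma~\ref{lemma:second_order_structure} argument then goes through with $\beta_\rr=0$ for those $\rr$, equivalently with $D_{x,y}:=0$. Likewise, if $|\langle b_y\rangle|=1$, then $b$ carries no information and $\zeta^{(s)}$ reduces to its conditional mean, again giving effectively $D_{x,y}=0$.

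With these zeros, positivity holds by the vector argument, taking $u_x\perp v_y$ where needed. On the other side, such behaviors are local in the CHSH scenario: one party is deterministic on one setting, and the remaining $2\times2$ or $1\times2$ structure admits a local model. They are therefore in $\set{Q}^1$. So both sets contain all behaviors with degenerate marginals, and the equality reduces to the nondegenerate case treated above.
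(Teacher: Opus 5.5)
Your proof is correct. The inclusion $\set{IC}^{\mathrm{2nd}}\subseteq\set{Q}^1$ is argued exactly as in the paper, but your proof of $\set{Q}^1\subseteq\set{IC}^{\mathrm{2nd}}$ takes a genuinely different route.

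The paper argues indirectly. TLM-satisfying correlators $D(P)$ are realized by a quantum behavior $P'$ with zero marginals, and quantum behaviors satisfy generalized IC, so $K_f(P',q)\succeq0$. Lemma~\ref{lemma:second_order_structure} then transfers this to $P$, because the form depends on the behavior only through $D$.

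You instead prove positivity directly from the Gram representation $D_{x,y}=\langle u_x,v_y\rangle$. The $s$-th brace equals $\EE[\langle\Delta_s,v_{y_s}\rangle^2]\leq\EE\|\Delta_s\|^2$, and the orthogonal vector increments sum to $\EE\|h_{\mathbf R}\|^2-\|\EE[h_{\mathbf R}]\|^2$. Hence $w^{\mathsf T}K_fw\geq\|\EE[h_{\mathbf R}]\|^2\geq0$, and each step checks out.

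What this buys is self-containment. You need neither the Landau--Masanes quantum realization nor the theorem that quantum theory obeys generalized IC. The unit vectors also come straight from Eq.~\eqref{eq:q1_gram_conditions}, e.g.\ $u_x\propto|0,x\rangle-|1,x\rangle-\langle a_x\rangle|\phi\rangle$, without passing through TLM. Moreover, Lemma~\ref{lemma:second_order_structure} is not specific to two settings, so the same computation also gives $\set{Q}^1\subseteq\set{IC}^{\mathrm{2nd}}$ in any binary-outcome scenario. This is relevant to the question the paper leaves open in its Discussion.

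The paper's route is shorter given the cited results. It also disposes of degenerate marginals in one line: such behaviors are local, hence satisfy IC and lie in $\set{IC}^{(1)}\subseteq\set{IC}^{\mathrm{2nd}}$. Your treatment instead has to rerun Lemma~\ref{lemma:second_order_structure} outside its nondegenerate hypothesis. That bookkeeping does work as you sketch: $\beta_\rr=0$ for deterministic Alice settings, and effectively $D_{x,y}=0$ for deterministic Bob settings. The surviving correlators form at most a $1\times2$ or $2\times1$ block, which always admits unit vectors, and the remaining settings get fresh orthogonal directions. Since you already observe that these behaviors are local, you could just as well close that case the paper's way.
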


\begin{proof}
Let us first remove the deterministic case. If one of the marginals 
is deterministic, say $|\langle a_0\rangle|=1$, the behavior is 
local, since $p(a_1)\,p(b_0|a_1)\,p(b_1|a_1)$, together with the
deterministic value of $a_0$, is a joint distribution reproducing it; the other
cases are analogous. Local behaviors are quantum, hence they belong to $\set{Q}^1$ and to
$\set{IC}^{(1)}\subseteq\set{IC}^\mathrm{2nd}$. Let us therefore assume nondegenerate marginals.

($\subseteq$) If $P\in\set{IC}^\mathrm{2nd}$, Corollary~\ref{coro:ic2_characterization} and
Proposition~\ref{prop:second_order_2222} give $L(t)\succeq0$ for all $t$, hence
Eq.~\eqref{eq:tlm_landau}. In this scenario $\set{Q}^1$ is characterized by the
TLM condition on the renormalized correlators~\cite{navascues:2007,navascues:2010}, so
$P\in\set{Q}^1$.

($\supseteq$) Let $P\in\set{Q}^1$, so that $D(P)$ satisfies TLM. Then, there is a
quantum behavior $P'$ with zero marginals and
$\langle a_xb_y\rangle_{P'}=D_{x,y}(P)$~\cite{landau:1988,masanes:2003}; for it
$D(P')=D(P)$. Since quantum behaviors satisfy generalized IC,
$P'\in\set{IC}^\mathrm{2nd}$, i.e.\ $K_f(P',q)\succeq0$ for all $(n,d)$, $q$ and $f$. By
Lemma~\ref{lemma:second_order_structure}, in the coordinates
$(\alpha,\beta)$ the forms of $P$ and $P'$ coincide. The change of coordinates
\eqref{eq:alpha_beta_split} is invertible, hence $K_f(P,q)\succeq0$ as well, and
$P\in\set{IC}^\mathrm{2nd}$ by Corollary~\ref{coro:ic2_characterization}.
\end{proof}

\begin{coro}
\label{coro:ic_in_ml}
In the CHSH scenario, every behavior satisfying single-copy generalized IC
is macroscopically local,
$\set{IC}^{(1)}(2,2,2,2)\subseteq\set{Q}^1(2,2,2,2)$.
\end{coro}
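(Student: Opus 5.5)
The corollary follows by chaining two inclusions that are already established. The plan is to show $\set{IC}^{(1)}(2,2,2,2)\subseteq\set{IC}^\mathrm{2nd}(2,2,2,2)$ and then apply Theorem~\ref{thm:second_order_is_ML}, which identifies the latter set with $\set{Q}^1(2,2,2,2)$.

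For the first inclusion, take $P\in\set{IC}^{(1)}(2,2,2,2)$, so that $\Delta^{*,(1)}_{n,d,d'}(P)\le 0$ for all RAC parameters. Suppose, for contradiction, that $P\notin\set{IC}^\mathrm{2nd}$. Then there exist $(n,d,d')$, a database distribution $q$, deterministic $f=(f_X,f_Y)$, a full-support output distribution $\mu$ and an admissible direction $w$ satisfying the constraints of Eq.~\eqref{eq:general_channel}, such that $\Phi_2(w)<0$. By Proposition~\ref{prop:second_order_form}, the channel $T_M=\mu(1+\varepsilon w)$ gives
\[
\Lambda=\tfrac{\varepsilon^2}{2}\Phi_2(w)+O(\varepsilon^3).
\]
Because $\mu$ has full support, $T_M$ is a valid stochastic matrix for all sufficiently small $\varepsilon>0$, and for such $\varepsilon$ we get $\Lambda<0$.

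We then check that this is a legitimate single-copy protocol. Here $G_s=(M',B_s)$; equivalently, $G=(M',B)$ is a guess over a finite alphabet, which the deterministic-reduction framework allows. The deficit is therefore $\Delta_1(P,q,\pi)=-\Lambda>0$. This contradicts $\Delta^{*,(1)}_{n,d,d'}(P)\le0$. Hence $P\in\set{IC}^\mathrm{2nd}$, which is exactly the inclusion noted right after the definition of $\set{IC}^\mathrm{2nd}$.

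Finally, Theorem~\ref{thm:second_order_is_ML} gives $\set{IC}^\mathrm{2nd}(2,2,2,2)=\set{Q}^1(2,2,2,2)$, so $P\in\set{Q}^1$. Degenerate marginals need no separate treatment, since the theorem already covers them.

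The only point requiring care is the uniformity of the $O(\varepsilon^3)$ remainder. For fixed finite alphabets, a fixed $q$ and a fixed $w$, the margin $\Lambda$ is a smooth function of $\varepsilon$ near $0$, because all output probabilities stay bounded away from zero. The remainder is therefore genuinely dominated by the negative quadratic term for small $\varepsilon$, and no further estimate is needed.
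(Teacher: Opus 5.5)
Your proof is correct and is essentially the paper's argument. The paper likewise obtains $\set{IC}^{(1)}(\mathsf B)\subseteq\set{IC}^{\mathrm{2nd}}(\mathsf B)$ by noting that $\Phi_2(w)<0$ in Proposition~\ref{prop:second_order_form} gives $\Lambda<0$ for small $\varepsilon>0$ with the channel of Eq.~\eqref{eq:general_channel}, and then invokes Theorem~\ref{thm:second_order_is_ML}. One remark is imprecise: not all probabilities stay bounded away from zero, since outcomes with $p(b|y)=0$ or strings with $q(\rr)=0$ have probability identically zero. These drop out of every entropy for all $\varepsilon$, which is how the paper justifies smoothness, so your argument is unaffected.
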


Corollary~\ref{coro:ic_in_ml} recovers the result of
Ref.~\cite{gachechiladze:2026}. Within any expansion of the type in Eq.~\eqref{eq:general_channel},
the second-order constraints reproduce the
TLM criterion with no assumption on the database distribution, the strategies or the channel,
and Theorem~\ref{thm:second_order_is_ML} shows that no single-copy protocol
improves on them at this order. Constraints beyond $\set{Q}^1$ must therefore
come from beyond this second-order regime; that such constraints exist follows
from the macroscopically local behavior violating generalized IC with a
noiseless channel given in Ref.~\cite{gachechiladze:2026}.
In the next subsection, we show that perturbations changing the channel support 
can witness such violations already at first order.

\subsection{Perturbations changing the channel support}
\label{sec:support_change}
Here, we consider the complementary case to the previous one, where 
we allow $\mu(m')=0$ for some $m'$ and we allow perturbations that change 
the support of the channel, meaning that it might happen that $\mu(m')=0$ but
$(T_M)_{m',(\rr,a)} \ne 0$ for some $\rr,a$, once the channel is perturbed. 
Notice that this cannot happen in expansions like Eq.~\eqref{eq:general_channel}.

We show that such expansions can witness \emph{any} single-copy IC violation 
already at leading order, which is linear in $\varepsilon$. To see this, fix
a distribution $q$, deterministic choices $f_X,f_Y$ and a nonsignaling behavior 
$P$, and consider the dependence of the IC margin $\Lambda$ of Eq.~\eqref{eq:margin_def} 
on the channel $T_M$. As usual, $\Lambda(T_M) < 0$ implies a violation of IC.
Now, consider the IC task played with the channel $T_M^{\varepsilon}$ given by
composing $T_M$ with an erasure channel that discards the input with probability
$1-\varepsilon$. The discarded outcome is denoted by $\bot$, thus
\beq
\begin{aligned}
(T_M^{\varepsilon})_{\bot,(\rr,a)} &= 1 - \varepsilon  &&\text{ for all } \ \rr, a, \\
(T_M^{\varepsilon})_{m',(\rr,a)} &= \varepsilon (T_M)_{m', (\rr,a)}  &&\text{ for all } \ m', \rr, a.
\end{aligned}
\eeq 
If $M'$ can take $d'$ different values, the new channel has an output variable $M''$ 
taking $d'+1$ different values. Now we show that 
\beq
\Lambda(T_M^\varepsilon) = \varepsilon \Lambda(T_M). 
\eeq
To prove this, let us introduce a new variable $E$ defined as $e(m'')=1-\delta_{m'',\bot}$, 
it is $1$ if $M''\neq \bot$ and $0$ otherwise.
One can easily verify that $\Pr(E=1|\rr,a)= \sum_{m'} \varepsilon [T_M]_{m',(\rr,a)}=\varepsilon$.
By construction $E$ is independent of $(\mathbf R,A)$. Conditioning on $E=1$ corresponds to the use
of the original channel $T_M$. In particular, 
for $S=s$,
\begin{equation}
\begin{aligned}
p(\rr,a,m',b | E=1) = q(\rr) &p(a,b|x_{\rr},f_Y(s,m'))\\ 
&\qquad \qquad \times [T_M]_{m',(\rr,a)}.
\end{aligned}
\end{equation}
Instead, for $E=0$ the channel outputs $M''=\bot$ deterministically and
\begin{equation}
\begin{split}
p(\rr, b| E=0)&=\sum_ a p(\rr, a, b | E=0) \\
&= \sum_ a q(\rr) p(a, b |x_{\rr}, f_Y(s,\bot))\\
&= q(\rr) p(b |f_Y(s,\bot)),
\end{split}
\end{equation}
where Bob's input $f_Y(s,\bot)$ can be chosen arbitrarily for each $s$. 
Since $E$ is a deterministic function of $M''$, the chain rule gives
\begin{equation}
\begin{split}
I(\mathbf R,A:M'')
&=I(\mathbf R,A:M'' \mid E)\\
&=(1-\varepsilon) I(\mathbf R,A:M'' \mid E=0) \\ & \qquad \quad  + \varepsilon I(\mathbf R,A:M'' \mid E=1) \\ 
&= \varepsilon I_{T_M}(\mathbf R,A:M') ,
\end{split}
\end{equation}
In the first step, we also used $I(\mathbf R,A:E)=0$. Then, we averaged 
over $E$, and lastly we used the form of the channel given above for 
$E=0$ and $E=1$, denoting by $I_{T_M}(\mathbf{R}, A:M')$ the 
mutual information one obtains with the original protocol with $T_M$. 
The same chain rule can be applied to the other terms of the IC functional. 
For each $s$,
\begin{equation}
\begin{aligned}
    &I(R_s : M'',B_s \mid \mathbf R_{< s})  \\ 
    &\quad \ =I(R_s:E \mid \mathbf R_{<s}) + I(R_s:M'',B_s \mid E,\mathbf R_{<s}).
\end{aligned}
\end{equation}
The first term is zero because $E$ is independent of $\mathbf R$. 
The second term can be written again averaging over $E$ as
\begin{equation}
\begin{aligned}
(1-\varepsilon)  &I( R_s:M'',B_s \mid \mathbf R_{<s}, E=0)  \\ 
&\qquad \quad \quad + \varepsilon I(R_s:M'',B_s \mid \mathbf R_{<s}, E=1).
\end{aligned}
\end{equation}
The term with $E=0$ is again 0. Indeed, $E$ is independent of $\mathbf R$, so 
conditioning does not change the distribution
of $\mathbf{R}$. Moreover, when $E=0$ the channel gives the constant output $\bot$, and 
together with nonsignaling this makes
$(M'',B_s)$ independent of $\mathbf{R}$. 
In the last term with $E=1$, the channel acts as the original $T_M$, namely,
\begin{equation}
\begin{aligned}
&I(R_s:M'',B_s\mid\mathbf R_{<s},E=1)\\
&\qquad \qquad \qquad \qquad=I_{T_M}(R_s:M',B_s\mid\mathbf R_{<s}).
\end{aligned}
\end{equation}
Putting everything together, we obtain
\begin{equation}
\begin{aligned}
\Lambda(T_M^\varepsilon) &=\varepsilon \Big[ I_{T_M}(\mathbf R,A:M') \\
&\qquad \qquad -\sum_s I_{T_M}(R_s:M',B_s\mid\mathbf R_{<s}) \Big], \\
\end{aligned}
\end{equation}
which equals $\varepsilon\Lambda(T_M)$ by definition. By an analogous argument, the capacity
of $T_M^\varepsilon$ is rescaled by a factor $\varepsilon$.

Therefore, if $T_M$ violates IC, then $T_M^\varepsilon$ also violates it 
with the same gap rescaled by $\varepsilon \in (0,1)$. For 
$\varepsilon \to 0$ this channel converges to the zero-capacity channel that
outputs $\bot$ deterministically, $\mu(m'')=\delta_{m'',\bot}$.
As the single-copy set $\set{IC}^{(1)}$ is defined intersecting (among others) over 
all $d'$, we conclude that any single-copy IC violation can be detected arbitrarily 
close to a zero-capacity channel, at first-order in the perturbation $\varepsilon$. 

More concretely, consider the counterexample in Ref.~\cite{gachechiladze:2026} consisting
of a point $P \in \set Q^1 \setminus \set{IC}^{(1)}$. By the argument of this section,
composing the identity channel considered there with an erasure channel with parameter 
$\varepsilon \to 0$, one obtains an IC violation arbitrarily close to a zero-capacity
channel. Furthermore, by construction, this violation is not detected by the TLM inequality
as $P$ belongs to $\set Q^1$. We conclude that there exist, already at first order
in the expansion around zero-capacity channels, stronger constraints than the TLM 
inequality. An ingredient to find them is to consider perturbations that change
the channel support. In particular, considering a center of the series expansion that is
not the maximally noisy channel $\mu(m')=1/d'$, but a zero-capacity channel with constant
output $\mu(m'') = \delta_{m'',\bot}$ one can map any single-copy IC violation to another protocol
that violates IC and whose channel has arbitrarily small capacity.

\section{Discussion}

We have formulated the problem of determining whether a prescribed set of
nonsignaling correlations satisfies information causality as an optimization
over distributions of the sender's database, communication channels, and
deterministic local strategies. We showed that the optimal single-copy IC
deficit (i.e.~the maximum value of the difference between the two sides of the
IC inequality) is convex in the behavior shared by the parties, implying that
single-copy IC sets are convex. We characterized general multi-copy adaptive
protocols and showed that they admit an equivalent description as single-copy
protocols applied to effective behaviors obtained by local wirings of the
nonlocal resources. For sets of correlations closed under these wirings,
satisfying single-copy information causality across all Bell
scenarios therefore excludes any multi-copy violation. This applies to the
relevant case of the relation between the almost-quantum
set~\cite{navascues:2015} and information causality.

Furthermore, we have delineated a strategy to approximate the optimization
problems with analytical and numerical methods. As an application, we showed
that when employing the expansion of Ref.~\cite{jain:2024} around the noisy channel with full-support
output,
the second-order constraints from \emph{all} single-copy protocols
characterize exactly the macroscopically local set $\set Q^1$ in the CHSH
scenario, whereas considering first-order expansions around constant-output
zero-capacity channels, with perturbations that change the channel support,
already yields the same constraints as single-copy 
information causality itself. These results answer a question posed in
Ref.~\cite{gachechiladze:2026}. 

Several open problems remain. In the CHSH scenario, the IC violation by a 
macroscopically local behavior in Ref.~\cite{gachechiladze:2026} is not 
detected by the second-order constraints of full-support expansions, 
although, by Sec.~\ref{sec:support_change}, it persists at arbitrarily small capacity. 
This motivates a systematic study of which higher-order terms of full-support 
expansions give stronger constraints than macroscopic locality, and for which protocols this happens. Moreover, it is natural to
ask how the second-order IC set relates to macroscopic locality in larger Bell
scenarios. A further challenge is to obtain nontrivial upper bounds from
polynomial relaxations at manageable computational cost, in order to solve the
proposed membership and set-inclusion problems for relevant sets of correlations
such as polytopic sets~\cite{allcock:2009closed, lang:2014} and the almost-quantum
set~\cite{navascues:2015}.
Finally, we intuitively expect most of the results presented here, such as 
convexity properties and deterministic reductions, to apply to the recently introduced 
extension of IC to arbitrary distributed computing tasks \cite{miklin:2026cc, jain:2026extended}. 
We leave this problem for future investigations.

\section*{Acknowledgements}

We thank Mariami Gachechiladze, Nikolai Miklin, and Mirjam Weilenmann for insightful discussions.
MP~is supported by the STeP2 grant (ANR-22-EXES-0013) of the Agence Nationale de la Recherche (ANR) as 
part of Plan France 2030.

{\it AI Disclosure Statement.--} The conceptualization and development of the 
results on deterministic reduction, convexity, single-copy sufficiency, polynomial approximations, were 
performed solely by the authors. The authors acknowledge the use of
LLM tools (Claude Opus 5.5) for assistance in the expansion around zero-capacity channels. Moreover, the model exposed a mistake in a previous version of the manuscript and highlighted the role of support-changing perturbations. Finally, LLMs were used for improving grammar and syntax
of the text. The authors have verified all arguments in the final manuscript and take full responsibility for its contents.
\bibliography{bibliography}{}

\appendix

\begin{widetext}

\section{Proof of the single-copy deterministic reduction}
\label{app:generalized_deterministic_proofs}

    In this appendix, we prove Lemma~\ref{lemma:generalized_det_strat}.
    We recall that for any two random variables $U$ and $V$, the mutual
        information $I(U:V)$ is convex in $p(v|u)$ when $p(u)$ is fixed, and
        concave in $p(u)$ when $p(v|u)$ is fixed.
    Moreover,
    column-stochastic matrices of fixed size form a product of probability
    simplices, hence a convex polytope whose vertices are precisely the
    deterministic maps. A continuous convex function on such a polytope has a
    maximizer at an extreme point.
    We prove the reduction for every fixed database distribution
        $q \in \set{D}_{n,d}$.
    As in the main text, $G_s$ denotes the guess $G$ when $S = s$, and
        this conditioning is implicit whenever we write $G_s$.

    \begin{proof}[Deterministic $T_X$]
        Fix a database distribution $q \in \set{D}_{n,d}$ and compatible
        maps $T_M,T_Y,T_G$, and consider
        \begin{equation}
            f_q(T_X) := \Delta_1(P,q,T_X,T_M,T_Y,T_G)
        \end{equation}
        as a function of Alice's measurement strategy. We first consider a term
        on the left-hand side. By definition of the conditional mutual
        information,
        \begin{equation}
            I(R_s:G_s\,|\,\mathbf R_{<s})
=\sum_{\rr_{<s}}q(\rr_{<s})
I(R_s:G_s\,|\,\mathbf R_{<s}=\rr_{<s}). \label{eq:conditional_mi_decomposition_tx}
        \end{equation}
        For every $\rr_{<s}$ with $q(\rr_{<s}) > 0$, the
        conditional distribution $q(r_s|\rr_{<s})$ is fixed. The channel
        from $R_s$ to $G$, conditioned on
        $\mathbf R_{<s} = \rr_{<s}$ and $S = s$, is obtained by
        averaging over $\mathbf R_{>s}$ and all internal protocol
        variables. Explicitly,
        \begin{equation}
                \begin{split}
                    p(g|r_s,\rr_{<s},S = s) = \sum_{\rr_{>s},x,a,m',y,b} q(\rr_{>s}|r_s,\rr_{<s}) [T_X]_{x,\rr}\,p(a,b|x,y) [T_M]_{m',(\rr,x,a)}[T_Y]_{y,(s,m')} [T_G]_{g,(s,m',y,b)}.
                \end{split}
                \label{eq:induced_channel_tx}
            \end{equation}
        This conditional distribution is affine in the entries of $T_X$. Using
        the fact that mutual information is convex in the conditional channel
        for a fixed input distribution, each term in
        Eq.~\eqref{eq:conditional_mi_decomposition_tx}, and therefore their
        weighted sum, is convex in $T_X$. Values of $\rr_{<s}$ with
        $q(\rr_{<s}) = 0$ give no contribution.
        Conditional probabilities for input values of zero probability
            may be chosen arbitrarily, since their contributions to the mutual
            information vanish.

        For the right-hand side, fix $T_M$. The channel from
            $(\mathbf R,X,A)$ to $M'$ is fixed and equal to
        $p(m'|\rr,x,a) = [T_M]_{m',(\rr,x,a)}$. The corresponding input
        distribution is
        \begin{equation}
            p(\rr,x,a) = q(\rr)[T_X]_{x,\rr}p(a|x), \label{eq:input_distribution_tx}
        \end{equation}
        where $p(a|x)$ is the nonsignaling marginal of $P$.
        Equation~\eqref{eq:input_distribution_tx} is affine in $T_X$. Using the
        fact that, for a fixed channel, mutual information is concave in the
        input distribution, we have that
        \begin{equation}
            T_X \mapsto I(\mathbf R,X,A:M')
        \end{equation}
        is concave. Consequently, its negative contribution to $f_q(T_X)$ is
        convex.

        Combining both parts, $f_q(T_X)$ is convex on the polytope of
        column-stochastic matrices $T_X$. Its maximum can therefore be attained
        at a vertex, corresponding to a deterministic function $x = f_X(\rr)$.
        This conclusion holds for every fixed $q,T_M,T_Y,T_G$. Hence,
        taking the maximum over these variables afterwards,
        \begin{equation}
            \max_{q,T_X,T_M,T_Y,T_G}\Delta_1
            = \max_{q,T_M,T_Y,T_G}\max_{f_X}\Delta_1.
        \end{equation}
        In particular, the joint optimization over $q$ does not alter the
        deterministic reduction of $T_X$.
    \end{proof}

    \begin{proof}[Deterministic $T_Y$]
        Fix $q,T_X,T_M,T_G$ and consider
        \begin{equation}
            g_q(T_Y) := \Delta_1(P,q,T_X,T_M,T_Y,T_G).
        \end{equation}
        For each $s$ and each fixed $\rr_{<s}$, the conditional
            distribution of $R_s$ given $\mathbf R_{<s} = \rr_{<s}$ is fixed by
            the database distribution $q$.
        The induced conditional channel $R_s \to G_s$
        is affine in $T_Y$, since $T_Y$ appears linearly in
            Eq.~\eqref{eq:induced_channel_tx}, with $T_G$ fixed.
        It follows from convexity of mutual information in the channel that
        \begin{equation}
            T_Y \mapsto I(R_s:G_s\,|\,\mathbf R_{<s})
        \end{equation}
        is convex. Summing over $s$ preserves this convexity.

        The right-hand side $I(\mathbf R,X,A:M')$ depends only on the
        database distribution, Alice's measurement choice, the behavior
        through Alice's marginal, and the communication channel. It is therefore
        independent of $T_Y$. Thus $g_q(T_Y)$ is convex in $T_Y$, and a maximum
        is attained at an extreme point, corresponding to a deterministic
        function $y = f_Y(s,m')$. Again the argument is valid for every fixed
        $q,T_X,T_M,T_G$, so maximizing over $q$ afterwards leaves the
        conclusion unchanged.
    \end{proof}

    \begin{proof}[Deterministic $T_G$]
        For fixed $q,T_X,T_M,T_Y$, the channel in
            Eq.~\eqref{eq:induced_channel_tx} is affine in $T_G$, while the
            right-hand side of IC is independent of $T_G$. Thus the IC deficit
            is convex in $T_G$ and attains its maximum at a deterministic
            decoding function. Applying this after the two measurement
            reductions proves Lemma~\ref{lemma:generalized_det_strat}.
    \end{proof}

    The convexity statements above concern one stochastic protocol map at
        a time, while the database distribution $q$ is held fixed. If $q$ is
        varied, both the input distributions entering the conditional mutual
        informations and the distribution of Alice's variables $(\mathbf R,X,A)$
        change. The preceding argument therefore does not imply convexity or
        concavity of the IC deficit in $q$. Similarly, for $T_M$ the preceding
        convexity argument does not apply: for fixed $q,T_X,T_Y,T_G$, both the
        left- and the right-hand side of IC are convex in $T_M$, but their
        difference need not be. This is consistent with the results
        of Ref.~\cite{miklin:2021}, who obtain tighter single-copy IC bounds using
        noisy, non-extremal communication channels. In particular, the Tsirelson
        bound is recovered as the capacity of a binary symmetric channel tends
        to zero.

\section{Adaptive generalized multi-copy protocols}
\label{app:generalized_many_copy_protocols}

    In this appendix, we give the explicit parametrization of the adaptive
    protocols of
    Definition~\ref{dfn:sequential_adaptive_protocol}, discuss acyclicity and
    the role of order adaptivity on the two sides, and show, following
    Ref.~\cite{Beigi2015}, that the joint distribution is normalized and that
    the effective behavior of
    Eq.~\eqref{eq:generalized_adaptive_effective_behavior} is nonsignaling for
    any of these strategies. With respect to the standard concatenated
    protocols, the database has an arbitrary distribution $q$,
    and both parties may choose the order in which they use the copies
    adaptively.

    \subsection{Parametrization}
    \label{app:generalized_many_copy_parametrization}

    The parties share $c$ independent copies of $P$, with inputs and outputs
    labelled by copy:
    \begin{equation}
        \begin{split}
            \mathbf X = (X_1,\ldots,X_c),\qquad \mathbf Y
                      = (Y_1,\ldots,Y_c), \qquad \mathbf A
                      = (A_1,\ldots,A_c),\qquad \mathbf B
                      = (B_1,\ldots,B_c).
        \end{split}
        \label{eq:generalized_many_copy_registers}
    \end{equation}
    The local time step is denoted by $t \in \{1,\dots,c\}$. At her step $t$,
    Alice chooses to use the $\ell_t$-th box, and similarly at his step $t$ Bob
    chooses the $j_t$-th box. The tuples $\mathbf L,\mathbf J$ collect the
    orders of use of Alice and Bob, and corresponding lowercase symbols
    $\boldsymbol\ell$, $\mathbf j$ denote realized values. To describe the
    protocol explicitly, we follow the four-part division in
    Definition~\ref{dfn:sequential_adaptive_protocol}.

    \begin{enumerate}
        \item \emph{Alice's measurements.} At step $t$, Alice chooses which copy
              $\ell_t$ to use and which input $x_{\ell_t}$ to feed it according
              to the stochastic map
              \begin{equation}
                  p_A^{(t)}\left( \ell_t,x_{\ell_t} \middle| \rr, (\ell_u,x_{\ell_u},a_{\ell_u})_{u<t} \right), \label{eq:generalized_alice_tree}
              \end{equation}
              where
              $\ell_t \in \{1,\ldots,c\} \setminus \{\ell_1,\ldots,\ell_{t-1}\}$
              is chosen among the yet unused copies. The map is zero for a copy
              already used. After observing $a_{\ell_t}$, she includes this
              outcome and her choice in the variables available at the next
              step.
        \item \emph{Communication.} After all $c$ steps by Alice, she sends all
              her local information through a noisy classical channel
                  $p_M(m'|\rr,\boldsymbol\ell,\mathbf x,\mathbf a)$, 
              so that Bob receives $M'$.
        \item \emph{Bob's measurements.} Bob combines the received message $m'$
              with $s$ and all his local information. His choices at step $t$
              are specified by
              \begin{equation}
                  p_B^{(t)}\left( j_t,y_{j_t}\middle|s,m', (j_u,y_{j_u},b_{j_u})_{u<t} \right), \label{eq:generalized_bob_tree}
              \end{equation}
              where $j_t \in \{1,\ldots,c\} \setminus \{j_1,\ldots,j_{t-1}\}$ is
              chosen among unused copies, like above.
        \item \emph{Decoding.} After the last measurement, Bob generates $g$
              using all his information, through the map
                  $p_G(g|s,m',\mathbf j,\mathbf y,\mathbf b)$.
    \end{enumerate}

    Including the previous choices as well as the outcomes ensures that each
    party knows which copies remain unused, even for stochastic strategies. Note that
    for each fixed choice of $s,m',\mathbf j,\mathbf y,\mathbf b$, the distribution
    $p_G(\cdot|s,m',\mathbf j,\mathbf y,\mathbf b)$ can be chosen independently 
    of all the others. 
    The two orderings $\boldsymbol{\ell}$ and $\mathbf j$ are permutations of
    $1,\ldots,c$, so that each selected input is assigned to exactly one copy.
    The protocol generates the distribution
    \begin{equation}
        \begin{aligned}
            p(\rr,s,\boldsymbol\ell,\mathbf x,\mathbf a,m',\mathbf j,\mathbf y,\mathbf b,g) & = q(\rr)p(s)\,p_M(m'|\rr,\boldsymbol\ell,\mathbf x,\mathbf a) p_G(g|s,m',\mathbf j,\mathbf y,\mathbf b) \prod_{i=1}^c p(a_i,b_i|x_i,y_i)                                                                                                           \\
                                                                                            & \quad \times \prod_{t=1}^c \left[ p_A^{(t)}\left( \ell_t,x_{\ell_t} \middle| \rr, (\ell_u,x_{\ell_u},a_{\ell_u})_{u<t} \right) \right] \prod_{t=1}^c \left[ p_B^{(t)}\left( j_t,y_{j_t}\middle|s,m', (j_u,y_{j_u},b_{j_u})_{u<t} \right) \right] .
        \end{aligned}
        \label{eq:generalized_many_copy_joint}
    \end{equation}
    Lemma~\ref{lem:sequential_marginals} shows, following Ref.~\cite{Beigi2015},
    that this is the normalized distribution generated by the protocol for every
    nonsignaling $P$, including different adaptive orderings on the two sides.
    It also shows that Alice's variables and $M'$ have a marginal distribution
    independent of Bob's strategy and decoding map; the latter disappears upon
    summing over $g$. Notice also that if any copy is not used by a party in the
    protocol, that party can measure it last and ignore the outcome. By
    nonsignaling, this leaves the IC deficit unchanged.
    Let $T_A^{(t)},T_B^{(t)}$ be the column-stochastic matrices representing
    $p_A^{(t)},p_B^{(t)}$, respectively, let $T_M$ represent the communication
    channel, and $T_G$ denote the matrix representing $p_G$. Then, the
    operational part of the protocol is
    \begin{equation}
        \pi_c = \left( \{T_A^{(t)}\}_{t=1}^c,T_M,\{T_B^{(t)}\}_{t=1}^c,T_G \right). \label{eq:generalized_many_copy_protocol_tuple}
    \end{equation}
    Together with the database distribution $q \in \set{D}_{n,d}$, these maps
    fully specify the protocol.

    \subsection{Normalization and nonsignaling}
    \label{app:generalized_effective_ns}

    The measurement steps of the protocols of
    Definition~\ref{dfn:sequential_adaptive_protocol} are analogous to the
    general wirings of Ref.~\cite[Sec.~3]{Beigi2015}, in which each party
    chooses the next box and its input by a stochastic map of its complete
    locally available information. Here, the copies
    are identical (even though considering different boxes essentially leaves
    the argument unchanged), the effective outputs are the complete outputs
    $\mathbf a$, $\mathbf b$, and the channel producing $m'$ acts between the
    two parties' decision trees, with $m'$ entering in all steps of Bob's
    decision tree. The properties we need follow from the argument used in
    Ref.~\cite[App.~C]{Beigi2015} to derive the joint distribution of a wiring,
    which also covers these modifications. 
    In particular, one might think that
    different adaptive orderings on the two sides could lead to circular
    definitions of the internal inputs. For two copies, Alice may choose $x_2$
    using $a_1$, while Bob chooses $y_1$ using $b_2$, so $x_2$ might seem to
    depend on itself through the two copies. However, nonsignaling allows Alice
    to obtain $a_1$ without waiting for $y_1$. Moreover, in these protocols,
    each party's choices depend only on information already available locally,
    and communication goes only from Alice to Bob through $M'$, before Bob
    starts his measurements. These dependencies therefore form an acyclic graph.
    The following lemma establishes normalization of the resulting joint
    distribution for arbitrary adaptive orderings.

    \begin{lemma}[Joint distribution and marginals]
        \label{lem:sequential_marginals}
        Let $P$ be nonsignaling and consider a protocol of
        Definition~\ref{dfn:sequential_adaptive_protocol}, with possibly
        stochastic decision trees. Then:
        \begin{enumerate}
            \item the protocol generates the joint distribution in
                  Eq.~\eqref{eq:generalized_many_copy_joint}, which is
                  normalized;
            \item the marginal distribution of
                  $(\mathbf R,\mathbf L,\mathbf X,\mathbf A,M')$ is
                  \begin{equation}
                      q(\rr)\,p_M(m'|\rr,\boldsymbol\ell,\mathbf x,\mathbf a) \prod_{t=1}^c p_A^{(t)}\left( \ell_t,x_{\ell_t}\middle|\rr, (\ell_u,x_{\ell_u},a_{\ell_u})_{u<t} \right) \prod_{i=1}^c p(a_i|x_i), \label{eq:generalized_alice_marginal}
                  \end{equation}
                  and does not depend on Bob's measurement operations or
                  decoding map;
            \item for deterministic decision trees, the effective behavior $P'$
                  of Eq.~\eqref{eq:generalized_adaptive_effective_behavior},
                  with internal inputs $x_i = x_i(x',\mathbf a)$ and
                  $y_i = y_i(y',\mathbf b)$ as in
                  Eq.~\eqref{eq:alice_internal_inputs}, is a normalized
                  nonsignaling behavior with marginals
                  \begin{align}
                      p'(a'|x') & = \prod_{i=1}^c p(a_i|x_i(x',\mathbf a)), \label{eq:generalized_effective_ns_alice} \\
                      p'(b'|y') & = \prod_{i=1}^c p(b_i|y_i(y',\mathbf b)). \label{eq:generalized_effective_ns_bob}
                  \end{align}
        \end{enumerate}
    \end{lemma}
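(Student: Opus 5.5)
The plan is to prove the three items of Lemma~\ref{lem:sequential_marginals} by summing the joint distribution in Eq.~\eqref{eq:generalized_many_copy_joint} over variables in reverse chronological order. At each step we peel off one factor at a time and use nonsignaling of $P$ to remove the dependence on the other party's inputs.

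\textbf{Bob's side first.} We start from Eq.~\eqref{eq:generalized_many_copy_joint}. Summing over $g$ removes $p_G$ by column-stochasticity. Then we sum over Bob's last step $t=c$, i.e.\ over $(j_c,y_{j_c},b_{j_c})$.
\begin{itemize}
\item[(i)] For fixed $(j_u,y_{j_u},b_{j_u})_{u<c}$, the copy $j_c$ is determined as the unique unused label.
\item[(ii)] The only factors depending on $b_{j_c}$ are $p(a_{j_c},b_{j_c}|x_{j_c},y_{j_c})$ and nothing else, because no later choice of Bob uses $b_{j_c}$. Summing over $b_{j_c}$ therefore gives the marginal $p(a_{j_c}|x_{j_c})$, which is independent of $y_{j_c}$ by nonsignaling.
\item[(iii)] The sum over $y_{j_c}$ then collapses $p_B^{(c)}$ to $1$.
\end{itemize}
We iterate downward in $t$. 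At step $t$, every factor involving $b_{j_t}$ other than the box term has already been summed out, because only Bob's later steps depended on it. So each copy's box factor reduces to Alice's marginal $p(a_i|x_i)$ and each $p_B^{(t)}$ disappears.

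Running over every possible ordering $\mathbf j$ is harmless. The sum over $j_t$ ranges over the unused labels, and stochasticity of $p_B^{(t)}$ holds over exactly those labels. After all of Bob's variables and $s$ are summed, the result is Eq.~\eqref{eq:generalized_alice_marginal}. This proves item 2, and independence from Bob's maps is manifest.

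\textbf{Alice's side and normalization.} We continue in the same way. Summing $m'$ removes $p_M$. We then sum Alice's steps from $t=c$ down to $1$. At step $t$, the factor $p(a_{\ell_t}|x_{\ell_t})$ sums to one over $a_{\ell_t}$, because no later-remaining factor depends on $a_{\ell_t}$. Next $p_A^{(t)}$ sums to one over $(\ell_t,x_{\ell_t})$. Finally $q$ sums to one, which gives normalization in item 1.

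The claim that Eq.~\eqref{eq:generalized_many_copy_joint} is the distribution actually generated by the protocol follows the argument of Ref.~\cite[App.~C]{Beigi2015}. By nonsignaling, the multipartite box $P^{\otimes c}$ can be sampled sequentially on Alice's side: draw $a_{\ell_1}$ from $p(a|x)$, and so on. Bob's outcomes are then drawn from the conditionals $p(b_i|a_i,x_i,y_i)$ in his own order, so the product formula is consistent regardless of the two orderings. The main subtlety here is acyclicity, since Bob's internal input $y_i$ may depend on $b_{i'}$ for copies he measured earlier. This is handled because conditioning on $\mathbf a$ makes the copies independent, and Bob's tree then defines an ordinary sequential process.

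\textbf{Item 3.} For deterministic trees, the effective behavior is $p'(a',b'|x',y')=\prod_i p(a_i,b_i|x_i(x',\mathbf a),y_i(y',\mathbf b))$. This is the specialization of Eq.~\eqref{eq:generalized_many_copy_joint} with the $p_A^{(t)},p_B^{(t)}$ indicator functions, after summing out the orderings, which are functions of the outputs.
\begin{itemize}
\item Summing over $b'$ with the backward Bob induction above yields Eq.~\eqref{eq:generalized_effective_ns_alice}, which is independent of $y'$.
\item Symmetrically, summing over $a'$ with Alice's steps taken in reverse order of her tree yields Eq.~\eqref{eq:generalized_effective_ns_bob}, which is independent of $x'$, using $\sum_{a}p(a,b|x,y)=p(b|y)$.
\item Normalization follows by then summing the marginal over $b'$ along Bob's tree.
\end{itemize}

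The main obstacle is the bookkeeping in the backward induction. The sum over outputs must follow the \emph{realized} ordering, which itself depends on earlier outputs. To handle this, I would write the sum as nested over the decision tree, with the outermost sum over step $1$. Then the innermost sum always concerns the last-used copy, whose output appears in no other factor.
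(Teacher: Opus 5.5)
Your proof is correct, but it takes a different route from the paper's.

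\textbf{The paper's route.} The paper never sums along the decision trees. It first takes every copy to be a product behavior $p_i(a|x)\,p_i(b|y)$. For such behaviors the whole protocol, channel and decoding included, is an ordinary classical stochastic process, so the product formula, its normalization and the marginals of items 2 and 3 hold automatically. It then observes that all these expressions are linear in each copy separately, since each party uses each copy exactly once. Because nonsignaling behaviors lie in the linear span of product behaviors, the identities extend to arbitrary nonsignaling $P$.

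\textbf{Your route.} You work with a general nonsignaling $P$ from the start. You perform nested backward sums along the realized ordering and use $\sum_b p(a,b|x,y)=p(a|x)$, or $\sum_a p(a,b|x,y)=p(b|y)$, in each innermost sum. You justify the form of the joint distribution by a sequential-sampling model: Alice's outputs are drawn from her marginals and Bob's from the conditionals $p(b_i|a_i,x_i,y_i)$.

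\textbf{What each buys.} The paper's argument is shorter and needs no bookkeeping of adaptive orderings. It also serves as the definition of what a wiring of general nonsignaling boxes generates, namely the multilinear extension of the classical case, with no reference to timing between the parties. The price is the spanning fact.

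Your argument is elementary and shows exactly where nonsignaling enters, but it needs the bookkeeping you anticipate. You should state explicitly that, once Bob's steps $u\geq t$ have been summed, the accumulated factor $\prod_{i\notin\{j_1,\dots,j_{t-1}\}} p(a_i|x_i)$ depends on the ordering only through the set $\{j_1,\dots,j_{t-1}\}$, and not on $(j_t,y_{j_t})$. That independence is precisely what lets $p_B^{(t)}$ sum to one at step $t$.

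Also note where your argument depends on the protocol's timing. Your justification of item 1 uses the fact that all of Alice's measurements precede all of Bob's, which is specific to this protocol. Your summations for items 2 and 3 are purely algebraic and hold for any local deterministic wiring.
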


    \begin{proof}
        We follow Ref.~\cite[App.~C]{Beigi2015}, letting the copies be possibly
        different behaviors $P_1,\ldots,P_c$. If every $P_i$ is a product
        behavior, $p_i(a,b|x,y) = p_{i}(a|x)\,p_{i}(b|y)$, the two halves of
        each copy are independent local devices and the protocol, including the
        channel and final decoding map, is a classical stochastic process. Its
        distribution is the product of the conditional probabilities of the
        individual steps, and relabelling the box factors by copy rather than by
        step, which is possible since $t \mapsto \ell_t$ and $t \mapsto j_t$ are
        permutations of $\{1,\ldots,c\}$, gives
        Eq.~\eqref{eq:generalized_many_copy_joint}. For such behaviors,
        Eqs.~\eqref{eq:generalized_alice_marginal}--\eqref{eq:generalized_effective_ns_bob} follow by summing out the
        local process of the other party, whose total probability is one.

        For a fixed protocol $\pi_c$ and general nonsignaling behaviors, the
        distributions in Eqs.~\eqref{eq:generalized_many_copy_joint}
        and~\eqref{eq:generalized_alice_marginal}--\eqref{eq:generalized_effective_ns_bob}
        are linear in each $P_i$, since each copy is used exactly once by each
        party. Moreover, every nonsignaling behavior lies in the linear span of
        product behaviors~\cite{Beigi2015}. Hence, applying this argument one
        copy at a time, the identities above hold for all nonsignaling
        $P_1,\ldots,P_c$, and in particular for $P_i = P$. Normalization is
        preserved, since a decomposition $P = \sum_k\mu_kP^{(k)}$ into product
        behaviors satisfies $\sum_k\mu_k = 1$ by normalization of $P$. The
        right-hand sides of Eqs.~\eqref{eq:generalized_effective_ns_alice}
        and~\eqref{eq:generalized_effective_ns_bob} do not depend on $y'$ and
        $x'$, respectively, so $P'$ is nonsignaling.
    \end{proof}

    The argument does not use the order in which the copies are used by either
    party, but only that each party uses every copy once and chooses its inputs
    from its own local data.
    
\subsection{Closure of $\set{Q}^1$ and $\set{AQ}$ under sequential adaptive wirings}
\label{app:closure_adaptive_wirings}

In this subsection, we verify that the two families used in
Sec.~\ref{sec:generalized_many_copy_ic} satisfy the closure property of
Eq.~\eqref{eq:generalized_identical_copy_closure} for the wirings of
Definition~\ref{dfn:sequential_adaptive_protocol}. Throughout, the decision
trees are deterministic, as allowed by
Lemma~\ref{lem:generalized_deterministic_many_copy}, and the effective
outputs $a' = \mathbf a$ and $b' = \mathbf b$ contain all local outputs
labelled by copy.

Recall that a nonsignaling behavior $P$ belongs to $\set{Q}^1$, the set of
\emph{macroscopically local} correlations~\cite{navascues:2010}, if and only if
there exists a positive-semidefinite matrix $\Gamma$ with rows and columns
labelled by the \emph{null event} $\phi$ and by the events $(a|x)$ and
$(b|y)$, whose entries satisfy $\Gamma_{\phi\phi} = 1$,
$\Gamma_{\phi,(a|x)} = p(a|x)$, $\Gamma_{\phi,(b|y)} = p(b|y)$,
$\Gamma_{(a|x),(b|y)} = p(a,b|x,y)$,
$\Gamma_{(a|x),(a'|x)} = \delta_{aa'}p(a|x)$ and
$\Gamma_{(b|y),(b'|y)} = \delta_{bb'}p(b|y)$, the remaining entries being
unconstrained~\cite{navascues:2007, navascues:2008, navascues:2010}. By the
Gram decomposition, this is equivalent to the existence of real vectors
$|\phi\rangle$, $\{|a,x\rangle\}$ and $\{|b,y\rangle\}$ in some real inner
product space with
\begin{equation}
    \begin{gathered}
        \langle\phi|\phi\rangle = 1,\qquad \langle\phi|a,x\rangle = p(a|x),\qquad \langle\phi|b,y\rangle = p(b|y),                                        \\
        \langle a,x|a',x\rangle = \delta_{aa'}\,p(a|x),\qquad \langle b,y|b',y\rangle = \delta_{bb'}\,p(b|y), \qquad \langle a,x|b,y\rangle = p(a,b|x,y).
    \end{gathered}
    \label{eq:q1_gram_conditions}
\end{equation}
If the certificate is written only for nonzero outcomes, as in the
standard formulation of the NPA hierarchy, the vectors for the outcome $0$
can be recovered as $|0,x\rangle:=|\phi\rangle-\sum_{a\neq0}|a,x\rangle$ and
$|0,y\rangle:=|\phi\rangle-\sum_{b\neq0}|b,y\rangle$. Normalization and
nonsignaling then give Eq.~\eqref{eq:q1_gram_conditions} for all outcomes.
For instance, $\langle0,x|b,y\rangle=p(b|y)-\sum_{a\neq0}p(a,b|x,y)=p(0,b|x,y)$.

\begin{lemma}[Closure of $\set{Q}^1$ under sequential adaptive wirings]
    \label{lem:q1_closure_adaptive}
    Let $P \in \set{Q}^1(\mathsf B)$, let $c < \infty$, and consider a
    deterministic adaptive wiring for Alice and Bob as in
    Definition~\ref{dfn:sequential_adaptive_protocol}, with arbitrary
    orderings on both sides. Then the effective behavior $P'$ of
    Eq.~\eqref{eq:generalized_adaptive_effective_behavior} belongs to
    $\set{Q}^1(\mathsf B')$.
\end{lemma}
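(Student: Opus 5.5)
We will construct an explicit Gram certificate for $P'$ from the Gram vectors of $P$ in Eq.~\eqref{eq:q1_gram_conditions}. The idea is that $\set{Q}^1$ only constrains one Alice event against one Bob event, so a tensor-product construction suffices. Let $\mathcal H$ be the real inner product space carrying $|\phi\rangle,\{|a,x\rangle\},\{|b,y\rangle\}$ for $P$, and work in $\mathcal H^{\otimes c}$ with $|\Phi\rangle := |\phi\rangle^{\otimes c}$. For an effective Alice event $(a'|x') = (\mathbf a|\rr)$ we set
\begin{equation}
|\mathbf a,\rr\rangle := \bigotimes_{i=1}^c |a_i, x_i(\rr,\mathbf a)\rangle ,
\end{equation}
where the internal inputs are the functions of Eq.~\eqref{eq:alice_internal_inputs}, and the vectors ordered by copy label. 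Bob's vectors $|\mathbf b,(s,m')\rangle$ are defined analogously with $y_i(s,m',\mathbf b)$. We then check the six conditions of Eq.~\eqref{eq:q1_gram_conditions} for $P'$.

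The easy conditions follow from factorization of inner products in $\mathcal H^{\otimes c}$. First, $\langle\Phi|\Phi\rangle=1$. Next, $\langle\Phi|\mathbf a,\rr\rangle=\prod_i p(a_i|x_i(\rr,\mathbf a))=p'(\mathbf a|\rr)$ by Eq.~\eqref{eq:generalized_effective_ns_alice}, and Bob's marginal is analogous. Finally, $\langle \mathbf a,\rr|\mathbf b,(s,m')\rangle=\prod_i p(a_i,b_i|x_i,y_i)=p'(\mathbf a,\mathbf b|\rr,(s,m'))$ by Eq.~\eqref{eq:generalized_adaptive_effective_behavior}. Here it matters that each copy's Alice factor and Bob factor sit in the same tensor slot, regardless of the orderings $\boldsymbol\ell$ and $\mathbf j$. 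This is why we label by copy rather than by time step, and it is what makes differing adaptive orders harmless.

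The main obstacle is the orthogonality condition $\langle \mathbf a,\rr|\mathbf a',\rr\rangle=\delta_{\mathbf a\mathbf a'}p'(\mathbf a|\rr)$ for the same effective input. The internal inputs of $\mathbf a$ and $\mathbf a'$ may differ copy by copy, and $\langle a,x|a'',x''\rangle$ for $x\neq x''$ is unconstrained. We plan to handle this with the decision-tree structure. For $\mathbf a\neq\mathbf a'$, consider Alice's step sequence under $x'=\rr$. The first step $(\ell_1,x_{\ell_1})$ is the same for both strings. Let $t$ be the first step at which the observed outputs differ. All earlier outputs coincide, so the actions at steps $1,\dots,t$ coincide: the same copy $\ell_t$ is used with the same input $x$, and the outputs satisfy $a_{\ell_t}\neq a'_{\ell_t}$. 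The factor in slot $\ell_t$ is then $\langle a_{\ell_t},x|a'_{\ell_t},x\rangle=0$, so the whole product vanishes. For $\mathbf a=\mathbf a'$ every factor equals $p(a_i|x_i)$, giving $p'(\mathbf a|\rr)$. Bob's side is identical, with $y'=(s,m')$ fixing his first action.

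The Gram matrix of these vectors is positive semidefinite by construction and satisfies all the constraints for $P'$ in the scenario $\mathsf B'$. Hence $P'\in\set{Q}^1(\mathsf B')$. Nonsignaling of $P'$, which membership presupposes, is already given by Lemma~\ref{lem:sequential_marginals}. The remaining care point is purely bookkeeping: confirming that $x_i(\rr,\mathbf a)$ depends only on outputs observed before copy $i$, which the first-divergence argument uses. This property is exactly the one stated below Eq.~\eqref{eq:alice_internal_inputs}.
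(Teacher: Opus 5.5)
Your proposal is correct and follows essentially the same route as the paper's proof. Both build tensor-product Gram vectors labelled by copy and use factorization of inner products for the normalization, marginal and joint constraints. Both get same-input orthogonality from the first step at which the two output strings diverge in the deterministic decision tree, and take nonsignaling of $P'$ from Lemma~\ref{lem:sequential_marginals}.

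The one point to make explicit is why a first divergence step exists. It follows because each party uses every copy exactly once, as the paper notes.
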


\begin{proof}
The argument follows the proof of closure under wirings in
Ref.~\cite{navascues:2010}. Let
$|\phi\rangle,\{|a,x\rangle\},\{|b,y\rangle\}$ satisfy
Eq.~\eqref{eq:q1_gram_conditions} for $P$, and take one copy
$\mathcal V_i$ of the corresponding inner-product space for every
$i\in \{ 1,\dots,c \}$. On $\bigotimes_{i=1}^c\mathcal V_i$ define
\begin{equation}
    |\Phi\rangle \coloneqq \bigotimes_{i=1}^c|\phi\rangle_i, \qquad |a',x'\rangle \coloneqq \bigotimes_{i=1}^c|a_i,x_i(x',\mathbf a)\rangle_i, \qquad |b',y'\rangle \coloneqq \bigotimes_{i=1}^c|b_i,y_i(y',\mathbf b)\rangle_i, \label{eq:q1_tensor_vectors}
\end{equation}
where $a'=\mathbf a$, $b'=\mathbf b$, and $x_i(x',\mathbf a)$,
$y_i(y',\mathbf b)$ are the internal inputs generated by the two trees, as
in Eq.~\eqref{eq:alice_internal_inputs}. Inner products of tensor-product
vectors factorize over the copies, so that
\begin{equation}
\begin{aligned}
\langle\Phi|\Phi\rangle&=1,\\
\langle\Phi|a',x'\rangle&=\prod_{i=1}^c p(a_i|x_i(x',\mathbf a))=p'(a'|x'),\\
\langle\Phi|b',y'\rangle&=\prod_{i=1}^c p(b_i|y_i(y',\mathbf b))=p'(b'|y'),\\
\langle a',x'|b',y'\rangle&=\prod_{i=1}^c p(a_i,b_i|x_i,y_i)
=p'(a',b'|x',y'),
\end{aligned}
\end{equation}
where we used Eqs.~\eqref{eq:generalized_effective_ns_alice},
\eqref{eq:generalized_effective_ns_bob}
and~\eqref{eq:generalized_adaptive_effective_behavior}. For $a'=a''$, the
same factorization gives
$\langle a',x'|a',x'\rangle=\prod_ip(a_i|x_i)=p'(a'|x')$.

It remains to show that $\langle a',x'|a'',x'\rangle = 0$
whenever $a' \neq a''$. For
the two output lists $a'$ and $a''$, the first copy ($\ell_1$) and
input ($x_{\ell_1}$) are the same, since both are functions of $x'$. As
long as the observed outputs agree, 
Alice continues to choose the same copies and inputs. Since
$a' \neq a''$ and every copy is used once, there must be a first
step $t$ at which the outputs differ. The copy $\ell_t$ chosen at
that step has the same input in both cases,
$x_{\ell_t}(x',a') = x_{\ell_t}(x',a'')$, but
$a'_{\ell_t} \neq a''_{\ell_t}$. The corresponding factor in the
inner product is zero by Eq.~\eqref{eq:q1_gram_conditions}, so the
full inner product vanishes. The same argument applies to Bob's
vectors.

The vectors in Eq.~\eqref{eq:q1_tensor_vectors} therefore satisfy
Eq.~\eqref{eq:q1_gram_conditions} for the effective behavior $P'$, which
is nonsignaling by Lemma~\ref{lem:sequential_marginals}. Their Gram matrix
is positive semidefinite, and hence
$P'\in\set{Q}^1(\mathsf B')$.
\end{proof}

The proof uses the fact that the next copy and input are the same whenever the effective input and all
previous outputs agree. This holds for every deterministic decision tree,
even when the measurement order depends on earlier outputs, as observed in
Ref.~\cite{navascues:2010}.

Let us now consider the case of the almost-quantum set. Closure of
$\set{AQ}$ under wirings is proved in
Ref.~\cite[Supplementary Note~2]{navascues:2015}, by combining the closure
under composition of independent boxes with the closure under grouping of
parties followed by a local wiring. In our setting, the $c$ copies of
$P \in \set{AQ}$ are first composed into a $2c$-partite almost-quantum
behavior, and then the $c$ parties on Alice's side and the $c$ parties on
Bob's side are grouped in turn. Grouping can be iterated because the
behavior obtained after the first grouping is again almost-quantum, and
hence again admits a representation in terms of a state and projectors. The
grouping argument of Ref.~\cite{navascues:2015} represents each outcome of a
fine-grained wiring, i.e., a wiring that records all local outputs as our
effective outputs do, by the product of the projectors corresponding to the
inputs and outputs of the grouped parties. It then uses two facts. First, the order
of this product does not affect its action on the almost-quantum state, since products of
projectors of different parties acting on that state are invariant under
permutations of the parties. This is the defining property of almost-quantum
correlations for products involving all parties, and it extends to products
over subsets of parties by inserting resolutions of the identity for the
missing ones. Second, two different outcomes of the same fine-grained wiring
involve some party that was measured with the same input and gave different
outputs, which is the property
established in the proof of Lemma~\ref{lem:q1_closure_adaptive}. Both facts
hold even when the measurement order depends on earlier outputs. The proof
of Ref.~\cite{navascues:2015} therefore applies without modification, giving
$P' \in \set{AQ}(\mathsf B')$.

    \section{Proof of deterministic generalized multi-copy strategies}
    \label{app:generalized_deterministic_many_copy}

    Here, we prove Lemma~\ref{lem:generalized_deterministic_many_copy}. As in
    Appendix~\ref{app:generalized_deterministic_proofs}, we keep the
    database distribution $q$ fixed while extremizing one stochastic
    protocol map and maximize over $q$ only after all local reductions have been
    performed.

    \begin{proof}[Proof of Lemma~\ref{lem:generalized_deterministic_many_copy}]
        We use two facts about the distribution of all variables in
        Eq.~\eqref{eq:generalized_many_copy_joint}. Each map
        $T_A^{(t)}$ or $T_B^{(t)}$ appears once
        in the product, so the joint distribution is affine in its
        entries
        when all other maps are fixed. Moreover, the probabilities
            assigned to different lists of previous choices and outcomes
        are independent
        parameters. The admissible maps therefore form
        a product of probability simplices, restricted to unused
        copies, whose vertices are deterministic maps.

        Fix $q \in \set{D}_{n,d}$ and all maps except $T_A^{(t)}$, for
        some $t \in \{1,\ldots,c\}$.
        For fixed $\mathbf R_{<s} = \rr_{<s}$ and $S = s$, the
        input distribution of $R_s$ is fixed by $q$.
        With $T_G$ held fixed, the
        induced channel from $R_s$ to $G_s$ is affine in
        $T_A^{(t)}$. Convexity of mutual information in the
        channel implies the convexity of
        \begin{equation}
                T_A^{(t)} \mapsto I(R_s:G_s \mid \mathbf R_{<s} = \mathbf r_{<s}).
            \end{equation}
        Averaging over $\mathbf R_{<s}$ with the fixed weights from $q$
            and summing over $s$
        preserves this convexity.

        For fixed $T_M$, the channel from
        $(\mathbf R,\mathbf L,\mathbf X,\mathbf A)$
        to $M'$ is fixed. Its input distribution, obtained by summing
        Eq.~\eqref{eq:generalized_alice_marginal} over $m'$, is affine in
        $T_A^{(t)}$.
        Thus $I(\mathbf R,\mathbf L,\mathbf X,\mathbf A:M')$ is concave
        in $T_A^{(t)}$, and $\Delta_c$ is convex in that map.
        Consequently, its maximum is attained at a vertex,
         corresponding to a deterministic map.
        Repeating this for $t = 1,\ldots,c$ makes Alice's
        entire measurement strategy deterministic without decreasing the
        IC deficit $\Delta_c$.

        For a Bob map $T_B^{(t)}$, the induced channels from
        $R_s$ to $G_s$ are affine in
        $T_B^{(t)}$ with $T_G$ fixed, so the left-hand side of IC is
            convex.
        By Lemma~\ref{lem:sequential_marginals}, the marginal of
        $(\mathbf R,\mathbf L,\mathbf X,\mathbf A,M')$ is independent of
        Bob's maps. Hence the communication term is
        unchanged, and $\Delta_c$ is convex in
        $T_B^{(t)}$ as well. Choosing a vertex at each step makes Bob's
            entire measurement strategy deterministic.

        Finally, fix $q,T_M$ and all $T_A^{(t)}, T_B^{(t)}$. The conditional
            channels from $R_s$ to $G_s$ are affine in $T_G$, while the
            right-hand side of IC is independent of $T_G$. Thus $\Delta_c$ is
            convex in $T_G$ and attains its maximum at a deterministic decoding
            map.

        We conclude that
        \begin{equation}
            \max_{q,\pi_c}\Delta_c(P,q,\pi_c)
            = \max_{q,\pi_c}\left\{\Delta_c(P,q,\pi_c)
                \, \middle|\,  \{T_A^{(t)}\}_t,\{T_B^{(t)}\}_t,T_G\ \mathrm{deterministic}\right\}.
                \label{eq:generalized_many_copy_pointwise_q_reduction}
        \end{equation}
        The maximization on the right still includes generic $q$ and
            $T_M$, but all other maps can be taken
            deterministic.
    \end{proof}

    Notice that the deterministic reduction we proved here applies to the
        choice of the next copy as well as to its input.
    In other words, the deterministic map $T_A^{(t)}$ selects an unused copy and
        its input through two deterministic functions:
        \beq
            \ell_t = f_{L_t}(\rr,a_{\ell_1},\ldots,a_{\ell_{t-1}}), \qquad x_{\ell_t}
                   = f_{X_t}(\rr,a_{\ell_1},\ldots,a_{\ell_{t-1}}).
        \eeq
        Previous copy choices $(\ell_1, \dots, \ell_{t-1})$ and inputs
        $(x_{\ell_1}, \dots, x_{\ell_{t-1}})$
        need not appear as arguments of these two functions, as they
        can be recovered by applying the earlier functions to $\rr$ and the
        outcomes already observed. Similarly, Bob's deterministic choices may be written as
        \beq
            j_t = f_{J_t}(s,m',b_{j_1},\ldots,b_{j_{t-1}}), \qquad
            y_{j_t} = f_{Y_t}(s,m',b_{j_1},\ldots,b_{j_{t-1}}).
        \eeq

\section{Proof of convexity of generalized single-copy sets}
\label{app:generalized_single_copy_conv}

Here, we prove Proposition~\ref{prop:generalized_convexity_single_copy_ic}.
The proof of convexity is pointwise both in the database distribution $q$
and in the protocol $\pi_1$. This makes explicit why allowing the optimizing
$q$ to depend on the behavior, similarly to Ref.~\cite{gachechiladze:2026},
preserves convexity. As in Sec.~\ref{sec:generalized_ic_sets}, we may take
$\set{D}_{n,d}$ to be the full probability simplex on $[d]^n$ (although the
proof also applies to any fixed set of database distributions independent of
$P$).

\begin{proof}[Proof of Proposition~\ref{prop:generalized_convexity_single_copy_ic}]
Fix first a distribution $q \in \set{D}_{n,d}$ and a reduced single-copy
protocol $\pi_1 = (f_X,T_M,f_Y,f_G)$. The IC deficit is
\begin{equation}
    \begin{split}
        \Delta_1(P,q,\pi_1) = \sum_{s=0}^{n-1}I(R_s:G_s \mid \mathbf R_{<s})
                              - I(\mathbf R,A:M').
    \end{split}
    \label{eq:generalized_single_copy_delta_convexity}
\end{equation}
We show that this is convex in $P$. Consider one term of the sum, and
condition on $\mathbf R_{<s} = \rr_{<s}$. For every value with
$q(\rr_{<s}) > 0$, the input distribution $q(r_s|\rr_{<s})$ is fixed.
The corresponding transition probability from $R_s$ to $G_s$ is
\begin{equation}
    \begin{split}
        p(g|r_s,\rr_{<s},S = s) = \sum_{\rr_{>s},a,m',b}q(\rr_{>s}|r_s,\rr_{<s}) [T_M]_{m',(\rr,a)} p\!\left( a,b\middle| f_X(\rr),f_Y(s,m') \right)\delta_{g,f_G(s,m',b)}.
    \end{split}
    \label{eq:generalized_convexity_induced_channel}
\end{equation}
This conditional distribution is affine in the entries of $P$. Since
mutual information is convex in the conditional distribution for a fixed
input distribution,
$P \mapsto I(R_s:G_s \mid \mathbf R_{<s} = \rr_{<s})$ is convex.
Averaging with the fixed weights $q(\rr_{<s})$ shows that
$P \mapsto I(R_s:G_s \mid \mathbf R_{<s})$ is also convex.

For the last term of
Eq.~\eqref{eq:generalized_single_copy_delta_convexity}, consider the
random variable $(\mathbf R,A)$. Its input distribution is
\begin{equation}
    p(\rr,a) = q(\rr) p\!\left( a|f_X(\rr) \right), \label{eq:generalized_convexity_alice_input}
\end{equation}
which is affine in $P$, while the transition probability from
$(\mathbf R,A)$ to $M'$ is fixed by $T_M$,
$p(m'|\rr,a) = [T_M]_{m',(\rr,a)}$. Mutual information is concave in the
input distribution for a fixed conditional distribution. Hence
$P \mapsto I(\mathbf R,A:M')$ is concave. Its negative contribution to
Eq.~\eqref{eq:generalized_single_copy_delta_convexity} is therefore
convex. Combining the two parts, the function
\begin{equation}
    P \mapsto \Delta_1(P,q,\pi_1)
\end{equation}
is convex for every fixed pair $(q,\pi_1)$.
        
Taking the pointwise supremum over protocols preserves convexity, so
$P \mapsto \Delta_{n,d,d'}^{*,(1)}(P;q)$ is convex for every fixed $q$.
Taking the pointwise supremum once more over $q \in \set{D}_{n,d}$ shows
that $P \mapsto \Delta_{n,d,d'}^{*,(1)}(P)$ is also convex.
Equivalently, for $P_\lambda = \lambda P_0 + (1 - \lambda)P_1$ with
$0 \leq \lambda \leq 1$,
\begin{equation}
    \begin{split}
        \Delta_{n,d,d'}^{*,(1)} (P_\lambda;\set{D}_{n,d}) & = \sup_{q,\pi_1}\Delta_1(P_\lambda,q,\pi_1)                                                           \\
                                                          & \leq \lambda\sup_{q,\pi_1} \Delta_1(P_0,q,\pi_1) + (1 - \lambda)\sup_{q,\pi_1} \Delta_1(P_1,q,\pi_1).
    \end{split}
    \label{eq:generalized_convexity_direct_inequality}
\end{equation}
The optimizing $(q,\pi_1)$ on the left-hand side need not coincide with
an optimizer for either $P_0$ or $P_1$. The single-copy IC set with RAC
parameters $(n,d,d')$,
\begin{equation}
    \begin{split}
        \set{IC}_{n,d,d'}^{(1)} (\mathsf B;\set{D}_{n,d}) = \{P \in \set{NS}(\mathsf B) \mid \Delta_{n,d,d'}^{*,(1)} (P;\set{D}_{n,d}) \leq 0\},
    \end{split}
\end{equation}
is a sublevel set of a convex function, intersected with the convex
nonsignaling set, hence it is convex. Finally, the full single-copy IC
set $\set{IC}^{(1)}(\mathsf B)$ is convex as it is an intersection of
such convex sets over $(n,d,d')$.
\end{proof}

We stress again that this proof is specific to the single-copy case. For a
fixed $c$-copy protocol with $c > 1$, the joint probabilities generated by
the protocol are multilinear in the $c$ copies and become polynomial of
degree $c$ after the copies are identified with the same behavior $P$.
Therefore, the affinity in $P$ used in the proof need not hold for
multi-copy protocols.

\section{Proof of the second-order form}
\label{app:second_order_form}

Here, we provide a proof of Proposition~\ref{prop:second_order_form}. The main idea
is to work in the $s$th \emph{branch}, i.e., conditioning on $S=s$, and use Lemma~\ref{lemma:mi_near_zero} for each term of Eq.~\eqref{eq:margin_def}.

\begin{proof}[Proof of Proposition~\ref{prop:second_order_form}]
We start with Alice's term $I(\mathbf R,A:M')$. From Lemma~\ref{lemma:mi_near_zero}, 
with $V=(\mathbf R,A)$, $Z=M'$, base distribution $\mu$ and
perturbation $u_{\rr,a}=w_{\rr,a}$, which is centered by
Eq.~\eqref{eq:general_channel}, we get
\begin{equation}
I(\mathbf R,A:M')=\frac{\varepsilon^2}2 \sum_{m'}\mu(m')\Var\bigl(w_{\mathbf R,A}(m')\bigr)+O(\varepsilon^3).
\end{equation}

For Bob's terms $I(R_s:M',B_s \mid \mathbf R_{<s})$, we start by noticing that, with his setting 
$y^s_{m'}=f_Y(s,m')$ fixed after receiving $m'$, the joint distribution of $(\rr,a,m',b)$ is
$q(\rr)\,[T_M]_{m',(\rr,a)}\,p(a,b|x_\rr,y^s_{m'})$, which gives us 
Eq.~\eqref{eq:bob_transcript}, as discussed in the main text. The base distribution
$\mu_s(m',b)=\mu(m')p(b|y^s_{m'})$, then, does not depend on $\rr$, and the perturbation
is centered, since 
$\sum_b\mu_s(m',b)\,\zeta^{(s)}_\rr(m',b)=\mu(m')\sum_ap(a|x_\rr)\,w_{\rr,a}(m')$ and
$\sum_{m'}\mu(m')w_{\rr,a}(m')=0$. We note that outcomes with $p(b|y)=0$ do not occur and are
just omitted. 

Conditioning on a prefix $\mathbf R_{<l}$ averages the perturbation over the
remaining components of $\mathbf R$, i.e., 
$p(m',b|\mathbf r_{<l})=\mu_s(m',b)\bigl[1+ \varepsilon \,\EE[\zeta^{(s)}_{\mathbf R}(m',b)|\mathbf r_{<l}]\bigr]$.
By the chain rule,
$I(R_s:M',B_s \mid \mathbf R_{<s})=I(\mathbf R_{\leq s}:M',B_s)-I(\mathbf R_{<s}:M',B_s)$, and
Lemma~\ref{lemma:mi_near_zero} applied to $V=\mathbf R_{\leq s}$ and $V=\mathbf R_{<s}$
gives
\begin{equation}
\begin{split}
I(R_s:M',B_s \mid \mathbf R_{<s})= \frac{\varepsilon^2}2 \sum_{m'}\mu(m')\sum_bp(b|y^s_{m'})
\Bigl[\Var\bigl(\EE[\zeta^{(s)}_{\mathbf R}(m',b)|\mathbf R_{\leq s}]\bigr)
-\Var\bigl(\EE[\zeta^{(s)}_{\mathbf R}(m',b)|\mathbf R_{<s}]\bigr)\Bigr]+O(\varepsilon^3).
\end{split}
\end{equation}

We now collect all the terms, grouping them by the value $m'$ of the message. For a
fixed $m'$, Bob uses in branch $s$ the setting $f_Y(s,m')$, exactly as in the
fixed-message strategy $f_Y^{m'}$. Hence Alice's term and Bob's terms with
this $m'$ are those of Eq.~\eqref{eq:hessian_matrix} for the strategy $f_Y^{m'}$,
evaluated at the direction $w(m')$. Summing over $m'$ with weights $\mu(m')$ gives
Eq.~\eqref{eq:second_order_form}. The expansion is valid because, for small
$\varepsilon$, all channel probabilities stay close to $\mu(m')>0$, 
while probabilities that vanish at $\varepsilon=0$ remain zero, so every
entropy in the margin is a smooth function of $\varepsilon$.
\end{proof}

\section{Second-order form for binary outcomes}
\label{app:centered_outcomes}

Here, we prove Lemma~\ref{lemma:second_order_structure} that gives the second-order
form of the IC functional in the case of binary outcomes $a,b \in \{0,1\}$, using
the properties of the centered and normalized outcomes $\xi_x$, $\eta_y$ and showing 
that terms involving $\alpha$ cancel.

\begin{proof}[Proof of Lemma~\ref{lemma:second_order_structure}]
Let's start with Alice's term. Since $\EE[\xi_{x_\rr}(A)|\rr]=0$  the cross term $\EE[\alpha_{\mathbf R}\beta_{\mathbf R}\xi_{x_{\mathbf R}}(A)]$ is zero,  and using
$\EE[\xi_{x_\rr}(A)^2|\rr]=1$ we are left with
\begin{equation}
\Var(w_{\mathbf R,A})=\Var(\alpha_{\mathbf R})
+\sum_\rr q(\rr)\beta_\rr^2.
\end{equation}
Now, fix a branch $s$ and write $y=y_s$. By
Eq.~\eqref{eq:hessian_matrix}, Bob's term in this branch is
\begin{equation}
\sum_bp(b|y)\Bigl[\Var\bigl(\EE[\zeta^{(s)}_{\mathbf R}(b)|\mathbf R_{\leq s}]\bigr)
-\Var\bigl(\EE[\zeta^{(s)}_{\mathbf R}(b)|\mathbf R_{<s}]\bigr)\Bigr],
\end{equation}
i.e., the difference between two levels of knowledge of the string: the
prefix $\mathbf R_{<l}$ with $l=s+1$ and with $l=s$. Consider a generic prefix length
$l \in \{0,\dots,n\}$. By Eq.~\eqref{eq:zeta_centered}, conditioning on $\mathbf R_{<l}$ averages
$\alpha_{\mathbf R}$ and $\beta_{\mathbf R}D_{x_{\mathbf R},y}$ over the components
of the string that are not observed, while $\eta_y(b)$ is fixed:
\begin{equation}
\EE[\zeta^{(s)}_{\mathbf R}(b)|\mathbf R_{<l}]
=\EE[\alpha_{\mathbf R}|\mathbf R_{<l}]+\eta_y(b)\,\EE[g^{(s)}_{\mathbf R}|\mathbf R_{<l}],
\qquad g^{(s)}_\rr:=\beta_\rr D_{x_\rr,y}.
\end{equation}
For fixed $b$, the variance over $\mathbf R$ of a sum $X+\eta_y(b)\,Y$ is
$\Var(X)+2\eta_y(b)\operatorname{Cov}(X,Y)+\eta_y(b)^2\Var(Y)$. Averaging over $b$
with weights $p(b|y)$, the cross term vanishes, since $\sum_bp(b|y)\eta_y(b)=0$,
and the last term keeps its coefficient, since $\sum_bp(b|y)\eta_y(b)^2=1$.
Therefore
\begin{equation}
\sum_bp(b|y)\Var\bigl(\EE[\zeta^{(s)}_{\mathbf R}(b)|\mathbf R_{<l}]\bigr)
=V_l+\Var\bigl(\EE[g^{(s)}_{\mathbf R}|\mathbf R_{<l}]\bigr),
\qquad V_l:=\Var\bigl(\EE[\alpha_{\mathbf R}|\mathbf R_{<l}]\bigr).
\label{eq:bob_split}
\end{equation}
Bob's term in branch $s$ is the difference of Eq.~\eqref{eq:bob_split} at
$l=s+1$ and $l=s$. It thus splits into an $\alpha$ part, $V_{s+1}-V_s$, and a
$\beta$ part, involving $g^{(s)}$ only.
Summing the $\alpha$ parts over the branches, consecutive terms cancel,
\begin{equation}
\sum_{s=0}^{n-1}\bigl(V_{s+1}-V_s\bigr)=V_n-V_0=\Var(\alpha_{\mathbf R}),
\end{equation}
since $V_0=\Var(\EE[\alpha_{\mathbf R}])=0$, the mean being a constant, and
$V_n=\Var(\EE[\alpha_{\mathbf R}|\mathbf R])=\Var(\alpha_{\mathbf R})$. This
equals the $\alpha$ part of Alice's term, and the two cancel in
Eq.~\eqref{eq:hessian_matrix}.
By the tower property of conditional expectations, 
i.e., $\EE\bigl[\EE[X|H]\bigr]=\EE[X]$, we get that
$\EE[g^{(s)}_{\mathbf R}|\mathbf R_{\leq s}]$ and $\EE[g^{(s)}_{\mathbf R}|\mathbf R_{<s}]$ 
have the same mean $\EE[g^{(s)}_{\mathbf R}]$, so the difference of 
their variances equals the difference of their second moments,
$\EE\bigl[\EE[g^{(s)}_{\mathbf R}|\mathbf R_{\leq s}]^2\bigr]
-\EE\bigl[\EE[g^{(s)}_{\mathbf R}|\mathbf R_{<s}]^2\bigr]$. Together with
Alice's $\beta$ part, $\sum_\rr q(\rr)\beta_\rr^2$, this gives
Eq.~\eqref{eq:phi2_beta}. The behavior enters only through the coefficients
$D_{x,y}$ in $g^{(s)}$.
\end{proof}

\section{Reduction of the second-order condition in the CHSH scenario}
\label{app:reduction_2222}

Here we prove Proposition~\ref{prop:second_order_2222}. The main idea is to go 
through all possible choices for Bob's strategy, the database distribution $q(\rr)$, 
and Alice's strategy, yielding the condition $L(t) \succeq 0$.

\begin{proof}[Proof of Proposition~\ref{prop:second_order_2222}]
With $n=2$, Eq.~\eqref{eq:phi2_beta} reads
\begin{equation}
\Phi_2=\sum_{i,j}q(i,j)\beta_{ij}^2
-\Bigl\{\EE\bigl[\EE[g^{(0)}|R_0]^2\bigr]-\EE[g^{(0)}]^2\Bigr\}
-\Bigl\{\EE\bigl[(g^{(1)})^2\bigr]-\EE\bigl[\EE[g^{(1)}|R_0]^2\bigr]\Bigr\},
\label{eq:phi2_2222}
\end{equation}
with $g^{(s)}_{ij}=\beta_{ij}D_{x_{ij},f_Y(s)}$ and $x_{ij}=f_X(i,j)$.

Let us look at all the options for Bob's strategy. If $f_Y(0)=f_Y(1)=y$, then 
$g^{(0)}=g^{(1)}=:g$. In the two braces the terms $\EE[\EE[g^{(0)}|R_0]^2\bigr]=\EE[\EE[g^{(1)}|R_0]^2\bigr]=\EE[\EE[g|R_0]^2\bigr]$ cancel out, leaving $\EE[g]^2-\EE\bigl[g^2\bigr]$. Finally, writing $\EE\bigl[g^2\bigr]= \sum_{i,j}q(i,j)\beta_{ij}^2 D_{x_{ij},y}^2$, we obtain
\begin{equation}
\Phi_2=\sum_{i,j}q(i,j)\beta_{ij}^2\bigl(1-D_{x_{ij},y}^2\bigr)+\EE[g]^2\geq0
\end{equation}
for every behavior since $|D_{x_{ij},y}|\leq 1$. If $f_Y(s)=1-s$, relabeling Bob's settings reduces to
$f_Y(s)=s$, as discussed below. It is, then, enough to assume $f_Y(s)=s$, which we do from now on.

Let us focus on the database distribution. For each row $i$ of $\rr = (i,j)$ and each branch $s$, let
\begin{equation}
\bar g^{(s)}_i:=\EE\bigl[g^{(s)}_{\mathbf R}\big|R_0=i\bigr]
=\sum_j\tau_i(j)\, g_{ij}^{(s)}=\sum_j\tau_i(j)\,\beta_{ij}\,D_{x_{ij},s}
\label{eq:row_average}
\end{equation}
be the average of $g^{(s)}$ within row $i$, since $q(i,j)=p_i \tau_i(j)$. 
We now evaluate the three terms of Eq.~\eqref{eq:phi2_2222}. Alice's term splits over the rows,
$\sum_{i,j}q(i,j)\beta_{ij}^2=\sum_ip_i\sum_j\tau_i(j)\beta_{ij}^2$. The first
brace comes from the branch $s=0$, in which Bob learns the row, i.e., index $i$. It can be interpreted as 
the variance of $\bar g^{(0)}_{R_0}$, a random variable taking the values $\bar g^{(0)}_0$ and
$\bar g^{(0)}_1$ with probabilities $p$ and $1-p$, namely,
\begin{equation}
\begin{split}
\EE\bigl[\EE[g^{(0)}|R_0]^2\bigr]-\EE[g^{(0)}]^2
&=p\,(\bar g^{(0)}_0)^2+(1-p)(\bar g^{(0)}_1)^2-\bigl(p\,\bar g^{(0)}_0+(1-p)\bar g^{(0)}_1\bigr)^2\\
&=p(1-p)\bigl(\bar g^{(0)}_0-\bar g^{(0)}_1\bigr)^2,
\end{split}\label{eq:C_0-C_1}
\end{equation}
where we used that $\EE[g^{(0)}]=\sum_{ij} p_i \tau_i(j) g_{ij}^{(0)}=p\,\bar g^{(0)}_0+(1-p)\bar g^{(0)}_1$.

The second brace comes from the branch $s=1$, in which Bob learns the position
within the row, i.e., index $j$. 
It is the average over the rows of the variance within each row,
\begin{equation}
\EE\bigl[(g^{(1)})^2\bigr]-\EE\bigl[\EE[g^{(1)}|R_0]^2\bigr]
=\sum_ip_i\Bigl[\sum_j\tau_i(j)\bigl(g^{(1)}_{ij}\bigr)^2-\bigl(\bar g^{(1)}_i\bigr)^2\Bigr].
\label{eq:A_i}
\end{equation}
Using $(g^{(1)}_{ij})^2=\beta_{ij}^2D_{x_{ij},1}^2$ and collecting the terms of each
row gives
\begin{equation}\label{eq:Phi_2_A_C}
\Phi_2=p\,A_0+(1-p)\,A_1-p(1-p)\,(C_0-C_1)^2,
\end{equation}
 with
\begin{equation}
C_i=\bar g^{(0)}_i,
\qquad
A_i=\sum_j\tau_i(j)\,\beta_{ij}^2\bigl(1-D_{x_{ij},1}^2\bigr)+\bigl(\bar g^{(1)}_i\bigr)^2 ,
\label{eq:AC_def}
\end{equation}
where the term $-p(1-p)\,(C_0-C_1)^2$ comes from Eq.~\eqref{eq:C_0-C_1} and the terms with $p_i A_i$ come from combining the first term $\sum_{i,j}q(i,j)\beta_{ij}^2$ with those in Eq.~\eqref{eq:A_i}. 
Note that the quantities $A_i$ and $C_i$ depend only on row $i$, i.e., on the map
$j\mapsto x_{ij}$, on $t_i$ and on $\beta_{i,0},\beta_{i,1}$; the parameter $p$
appears only through the explicit coefficients in Eq.~\eqref{eq:Phi_2_A_C}.

We now show that $\Phi_2\geq0$ for all $p\in[0,1]$ and all $\beta$ if and only if
$A_i\geq C_i^2$ for $i=0,1$. Suppose that $A_i\geq C_i^2$ for both rows. Then
\begin{equation}
\begin{split}
\Phi_2&\geq p\,C_0^2+(1-p)C_1^2-p(1-p)(C_0-C_1)^2\\
&=p^2C_0^2+2p(1-p)\,C_0C_1+(1-p)^2C_1^2=\bigl(p\,C_0+(1-p)C_1\bigr)^2\geq0 .
\end{split}
\end{equation}
Conversely, suppose that $\Phi_2\geq0$ for all $p$ and $\beta$. Setting
$\beta_{1,0}=\beta_{1,1}=0$ gives $\bar g^{(0)}_1=\bar g^{(1)}_1=0$, hence
$A_1=C_1=0$ and $\Phi_2=p\,\bigl[A_0-(1-p)C_0^2\bigr]$. For every $p\in(0,1]$ this
requires $A_0\geq(1-p)C_0^2$, and letting $p\to0$ gives $A_0\geq C_0^2$. Exchanging
the roles of the two rows gives $A_1\geq C_1^2$. Since the two rows involve
disjoint variables, the conditions decouple: $\Phi_2\geq0$ for all database distributions and all
$\beta$ if and only if, for each row, $A_i-C_i^2\geq0$ for all $t_i\in[0,1]$ and
all $\beta_{i,0},\beta_{i,1}$.

Let us now look at Alice's strategy. Consider one row fixed and drop the row index, writing
$\tau=(t,1-t)$, $\beta=(\beta_0,\beta_1)$ and $x_j=f_X(i,j)$. By
Eq.~\eqref{eq:AC_def},
\begin{equation}
A-C^2=\sum_j\tau_j\beta_j^2\bigl(1-D_{x_j,1}^2\bigr)
+\Bigl(\sum_j\tau_j\beta_jD_{x_j,1}\Bigr)^2
-\Bigl(\sum_j\tau_j\beta_jD_{x_j,0}\Bigr)^2 .
\label{eq:row_form}
\end{equation}
For a constant row, $x_0=x_1=x$, both squared sums are proportional to
$(\sum_j\tau_j\beta_j)^2$, and
\begin{equation}
A-C^2=\bigl(1-D_{x,1}^2\bigr)\sum_j\tau_j\beta_j^2
+\bigl(D_{x,1}^2-D_{x,0}^2\bigr)\Bigl(\sum_j\tau_j\beta_j\Bigr)^2 .
\end{equation}
Since $\sum_j\tau_j\beta_j^2\geq(\sum_j\tau_j\beta_j)^2$ and $1-D_{x,1}^2\geq0$, this
gives $A-C^2\geq(1-D_{x,0}^2)(\sum_j\tau_j\beta_j)^2\geq0$ for every behavior: rows
with a constant setting give no condition. For the row $x_j=j$, i.e., $x_0=0$ and
$x_1=1$, Eq.~\eqref{eq:row_form} is a quadratic form in $(\beta_0,\beta_1)$ with
coefficients
\begin{equation}
\begin{aligned}
\beta_0^2&:\quad t\bigl(1-D_{0,1}^2\bigr)+t^2\bigl(D_{0,1}^2-D_{0,0}^2\bigr)
=t\bigl[1-(1-t)D_{0,1}^2-t\,D_{0,0}^2\bigr],\\
\beta_1^2&:\quad (1-t)\bigl(1-D_{1,1}^2\bigr)+(1-t)^2\bigl(D_{1,1}^2-D_{1,0}^2\bigr)
=(1-t)\bigl[1-t\,D_{1,1}^2-(1-t)D_{1,0}^2\bigr],\\
\beta_0\beta_1&:\quad 2t(1-t)\bigl(D_{0,1}D_{1,1}-D_{0,0}D_{1,0}\bigr),
\end{aligned}
\end{equation}
so that $A-C^2=\beta^{\mathsf T}L(t)\,\beta$, with $L(t)$ as in
Eqs.~\eqref{eq:L_matrix} and~\eqref{eq:L_entries}. For the row $x_j=1-j$,
relabeling $j\to1-j$ exchanges $\tau_0$ and $\tau_1$, i.e., maps $t$ to $1-t$, and
returns the row $x_j=j$. It therefore gives the condition $L(1-t)\succeq0$, which
belongs to the same family.

Suppose that $L(t)\succeq0$ for all $t\in[0,1]$. Then every row satisfies
$A_i\geq C_i^2$: the rows with $x_j=j$ and $x_j=1-j$ by the computation above, and
the constant rows for every behavior. By the decoupling argument, $\Phi_2\geq0$ for
every database distribution. This covers the strategy $f_Y(s)=s$; constant $f_Y$
was treated above. The remaining strategy, $f_Y(s)=1-s$, is obtained from
$f_Y(s)=s$ by exchanging Bob's settings, $D_{x,0}\leftrightarrow D_{x,1}$, and gives
the conditions $\tilde L(t)\succeq0$, where $\tilde L(t)$ is $L(t)$ with $D_{x,0}$
and $D_{x,1}$ exchanged. Under this exchange $\ell_0(t)\to\ell_0(1-t)$,
$\ell_1(t)\to\ell_1(1-t)$ and $\kappa\to-\kappa$, so that, for $0<t<1$,
\begin{equation}
\tilde L(t)=M_t\,L(1-t)\,M_t,
\qquad
M_t=\operatorname{diag}\Bigl(\sqrt{\tfrac{t}{1-t}},\,-\sqrt{\tfrac{1-t}{t}}\Bigr).
\end{equation}
Since $M_t$ is invertible, $\tilde L(t)\succeq0$ if and only if $L(1-t)\succeq0$; at
$t\in\{0,1\}$ both matrices are diagonal with nonnegative entries. This strategy
therefore gives no new condition, and $K_f(P,q)\succeq0$ for all database
distributions and strategies. Conversely, if $K_f(P,q)\succeq0$ for all database
distributions and strategies, choose $f_Y(s)=s$, a strategy $f_X$ with the row
$x_{0j}=j$, and any $t_0\in[0,1]$. The decoupling argument gives
$\beta^{\mathsf T}L(t_0)\beta\geq0$ for all $\beta$, i.e., $L(t_0)\succeq0$. This
concludes the proof.
\end{proof}

\section{Proof of the Tsirelson--Landau--Masanes inequality}
\label{app:proof_tlm}

Here, we complete the proof of Proposition \ref{prop:tlm_landau}, namely, 
the derivation of the Tsirelson--Landau--Masanes inequality from the second
order expansion of single-copy IC. The only step left is the 
elimination of the parameter $t$. For $0<t<1$, $L(t)=D_t\widehat L(t)D_t$ with 
the invertible matrix $D_t=\operatorname{diag}(\sqrt t,\sqrt{1-t})$, and $\widehat L(t)$
given by
\begin{equation}
\widehat L(t) =
\begin{pmatrix}
\ell_0(t) & \sqrt{t(1-t)}\,\kappa\\
\sqrt{t(1-t)}\,\kappa & \ell_1(t)
\end{pmatrix}.
\label{eq:app_hat_L}
\end{equation}
It follows that $L(t)\succeq0$ if and only if $\widehat L(t)\succeq0$. 
The diagonal entries of $\widehat L(t)$,
\begin{equation}
\ell_0(t)=(1-t)\bigl(1-D_{0,1}^2\bigr)+t\bigl(1-D_{0,0}^2\bigr),
\qquad
\ell_1(t)=t\bigl(1-D_{1,1}^2\bigr)+(1-t)\bigl(1-D_{1,0}^2\bigr),
\end{equation}
are convex combinations of nonnegative numbers, and the off-diagonal entry is
$\sqrt{t(1-t)}\,\kappa$ with $\kappa=D_{0,1}D_{1,1}-D_{0,0}D_{1,0}$. A symmetric
$2\times2$ matrix with nonnegative diagonal entries is positive semidefinite if and
only if its determinant is nonnegative. The determinant
$\ell_0(t)\ell_1(t)-t(1-t)\kappa^2$ is a polynomial of degree two in $t$. At $t=0$ it
equals $(1-D_{0,1}^2)(1-D_{1,0}^2)=R$, and at $t=1$ it equals
$(1-D_{0,0}^2)(1-D_{1,1}^2)=S$. Its coefficient of $t^2$ is the product of the
slopes of $\ell_0$ and $\ell_1$, plus $\kappa^2$ from
$-t(1-t)\kappa^2=-t\kappa^2+t^2\kappa^2$,
\begin{equation}
\bigl(D_{0,1}^2-D_{0,0}^2\bigr)\bigl(D_{1,0}^2-D_{1,1}^2\bigr)
+\bigl(D_{0,1}D_{1,1}-D_{0,0}D_{1,0}\bigr)^2
=\bigl(D_{0,0}D_{1,1}-D_{0,1}D_{1,0}\bigr)^2=\Theta^2,
\end{equation}
since the terms $D_{0,1}^2D_{1,1}^2$ and $D_{0,0}^2D_{1,0}^2$ cancel. A polynomial
of degree two is determined by its values at $t=0$ and $t=1$ and by its leading
coefficient, which gives Eq.~\eqref{eq:det_hat_L}, namely,
$\det\widehat L(t)=(1-t)R+tS-t(1-t)\Theta^2$, a convex function of $t$. Dividing by
$t(1-t)>0$, $\det\widehat L(t)\geq0$ is equivalent to
\begin{equation}
\Theta^2\leq\frac Rt+\frac S{1-t},
\end{equation}
and this must hold for all $t\in(0,1)$. By the Cauchy--Schwarz inequality, applied
to the vectors $(\sqrt{R/t},\sqrt{S/(1-t)})$ and $(\sqrt t,\sqrt{1-t})$,
\begin{equation}
\Bigl(\frac Rt+\frac S{1-t}\Bigr)\bigl(t+(1-t)\bigr)\geq\bigl(\sqrt R+\sqrt S\bigr)^2,
\end{equation}
with equality if and only if $\sqrt R/t=\sqrt S/(1-t)$, i.e., at
$t^*=\sqrt R/(\sqrt R+\sqrt S)$ when $R,S>0$. If $R=0$ or $S=0$, the infimum
$(\sqrt R+\sqrt S)^2$ is approached as $t\to0$ or $t\to1$. Hence $L(t)\succeq0$ for
all $t\in(0,1)$ if and only if $|\Theta|\leq\sqrt R+\sqrt S$, which is
Eq.~\eqref{eq:tlm_landau}. At $t\in\{0,1\}$, $L(t)$ is diagonal with nonnegative
entries and gives no further condition.

\end{widetext}

\end{document}